\def\op#1{\mathop{{\it\fam0} #1}\limits}
\newcommand{\beq}{\begin{equation}}
\newcommand{\eeq}{\end{equation}}
\newcommand{\ben}{\begin{eqnarray}}
\newcommand{\een}{\end{eqnarray}}
\newcommand{\be}{\begin{eqnarray*}}
\newcommand{\ee}{\end{eqnarray*}}
\newcommand{\bea}{\begin{eqalph}}
\newcommand{\eea}{\end{eqalph}}
\newcommand{\gG}{{\mathfrak g}}
\newcommand{\di}{{\mathrm {dim}\,}}
\newcommand{\pr}{{\mathrm{pr}\,}}
\newcommand{\al}{\alpha}
\newcommand{\bt}{\beta}
\newcommand{\dl}{\delta}
\newcommand{\la}{\lambda}
\newcommand{\La}{\Lambda}
\newcommand{\f}{\phi}
\newcommand{\F}{\Phi}
\newcommand{\p}{\pi}
\newcommand{\Om}{\Omega}
\newcommand{\m}{\mu}
\newcommand{\g}{\gamma}
\newcommand{\G}{\Gamma}
\newcommand{\thh}{\theta}
\newcommand{\vt}{\vartheta}
\newcommand{\cG}{{\mathfrak g}}
\newcommand{\ve}{\varepsilon}
\newcommand{\up}{\upsilon}
\newcommand{\ap}{\approx}
\newcommand{\si}{\sigma}
\newcommand{\Si}{\Sigma}
\newcommand{\cJ}{{\mathcal J}}
\newcommand{\gE}{{\mathfrak E}}
\newcommand{\cL}{{\mathcal L}}
\newcommand{\cV}{{\mathcal V}}
\newcommand{\cE}{{\mathcal E}}
\newcommand{\cH}{{\mathcal H}}
\newcommand{\bL}{{\mathbf L}}
\newcommand{\w}{\wedge}
\newcommand{\wt}{\widetilde}
\newcommand{\wh}{\widehat}
\newcommand{\ol}{\overline}
\newcommand{\dr}{\partial}
\newcommand{\rrq}{{\ol q}}
\newcommand{\Ker}{\mathrm{Ker}\,}
\newcommand{\ar}{\op\longrightarrow}
\newcommand{\ot}{\otimes}
\let\ssection=\section
\renewcommand{\section}{\setcounter{equation}{0}\ssection}
\newenvironment{eqalph}{\stepcounter{equation}
\setcounter{equationa}{\value{equation}} \setcounter{equation}{0}
\begin{eqnarray}}{\end{eqnarray}\setcounter{equation}{\value{equationa}}}
\newcounter{equationa}[section]
\newcounter{remark}[section]
\newcounter{example}[section]
\newcounter{theorem}[section]
\newcounter{condition}[section]
\newcounter{lemma}[section]
\newcounter{corollary}[section]
\newcounter{definition}[section]
\def\theremark{\arabic{section}.\arabic{remark}}
\def\thedefinition{\arabic{section}.\arabic{theorem}}
\newenvironment{proof}{{\it Proof.}}{\hfill $\Box$
\medskip }
\newenvironment{remark}{\refstepcounter{remark} \medskip {\bf Remark
\theremark.} }{ \medskip }
\newenvironment{example}{\refstepcounter{remark} \medskip {\bf
Example \theremark.} }{ \medskip }
\newenvironment{theorem}{\refstepcounter{theorem} \medskip{\bf
Theorem \thedefinition.}\it }{ \medskip }
\newenvironment{definition}{\refstepcounter{theorem} \medskip{\bf
Definition \thedefinition.} \it}{ \medskip }
\newcommand{\mar}[1]{}
\begin{document}

\hbox{}

\begin{center}

{\Large\bf Noether's first theorem in Hamiltonian mechanics}

\bigskip

G. Sardanashvily

\medskip

Department of Theoretical Physics, Moscow State University, Russia

Lepage Research Institute, Czech Republic

\bigskip

\end{center}

\begin{abstract}
Non-autonomous non-relativistic mechanics is formulated as
Lagrangian and Hamiltonian theory on fibre bundles over the time
axis $\mathbb R$. Hamiltonian mechanics herewith can be
reformulated as particular Lagrangian theory on a momentum phase
space. This facts enable one to apply Noether's first theorem both
to Lagrangian and Hamiltonian mechanics. By virtue of Noether's
first theorem, any symmetry defines a symmetry current which is an
integral of motion in Lagrangian and Hamiltonian mechanics. The
converse is not true in Lagrangian mechanics where integrals of
motion need not come from symmetries. We show that, in Hamiltonian
mechanics, any integral of motion is a symmetry current. In
particular, an energy function relative to a reference frame is a
symmetry current along a connection on a configuration bundle
which is this reference frame. An example of the global Kepler
problem is analyzed in detail.
\end{abstract}

\tableofcontents

\section{Introduction}

Noether's theorems are well known to treat symmetries of
Lagrangian systems. Noether's first theorem associates to a
Lagrangian symmetry the conserved current whose total differential
vanishes on-shell. We refer the reader to the brilliant book of
Yvette Kosmann-Schwarzbach \cite{KS} for the history and
references on the subject.

By mechanics throughout the work is meant classical non-autonomous
non-relativistic mechanics subject to time-dependent coordinate
and reference frame transformations. This mechanics is formulated
adequately as Lagrangian and Hamiltonian theory on fibre bundles
$Q\to\mathbb R$ over the time axis $\mathbb R$
\cite{book10,book98,sard98,sard132}.

Since equations of motion of mechanics almost always are of first
and second order, we restrict our consideration to first order
Lagrangian and Hamiltonian theory. Its velocity space is the first
order jet manifold $J^1Q$ of sections of a configuration bundle
$Q\to \mathbb R$, and its phase space is the vertical cotangent
bundle $V^*Q$ of $Q\to\mathbb R$.

This formulation of mechanics is similar to that of classical
field theory on fibre bundles over a smooth manifold $X$ of
dimension $n>1$ \cite{book09,sard08,book13,sard15}. A difference
between mechanics and field theory however lies in the fact that
fibre bundles over $\mathbb R$ always are trivial, and that all
connections on these fibre bundles are flat. Consequently, they
are not dynamic variables, but characterize non-relativistic
reference frames (Definition \ref{gn10}).

In Lagrangian mechanics, Noether's first theorem (Theorem
\ref{j22}) is formulated as a straightforward corollary of the
global variational formula (\ref{+421}). It associates to any
classical Lagrangian symmetry (Definition \ref{nn108}) the
conserved current (\ref{am225'}) whose total differential vanishes
on-shell.

In particular, an energy function relative to a reference frame is
the symmetry current (\ref{m228}) along a connection $\G$ on a
configuration bundle $Q\to \mathbb R$ which characterizes this
reference frame (Definition \ref{gn10}).

A key point is that, in Lagrangian mechanics, any conserved
current is an integral of motion (Theorem \ref{035'}), but the
converse need not be true (e.g., the Rung--Lenz vector (\ref{050})
in a Lagrangian Kepler model).

Hamiltonian formulation of non-autonomous non-relativistic
mechanics is similar to covariant Hamiltonian field theory on
fibre bundles \cite{book09,sard15,book16} in the particular case
of fibre bundles over $\mathbb R$ \cite{book10,sard98,book15}. In
accordance with the Legendre map (\ref{a303}) and the homogeneous
Legendre map (\ref{N41'}), a phase space and a homogeneous phase
space of mechanics on a configuration bundle $Q\to \mathbb R$ are
the vertical cotangent bundle $V^*Q$ and the cotangent bundle
$T^*Q$ of $Q$, respectively.

It should be emphasized that this is not the most general case of
a phase space of non-autonomous non-relativistic mechanics which
is defined as a fibred manifold $\Pi\to\mathbb R$ provided with a
Poisson structure such that the corresponding symplectic foliation
belongs to the fibration $\Pi\to \mathbb R$ \cite{hamo}. Putting
$\Pi=V^*Q$, we in fact restrict our consideration to Hamiltonian
systems which admit the Lagrangian counterparts on a configuration
space $Q$.

A key point is that a non-autonomous Hamiltonian system of $k$
degrees of freedom on a phase space $V^*Q$ is equivalent both to
some autonomous symplectic Hamiltonian system of $k+1$ degrees of
freedom on a homogeneous phase space $T^*Q$ (Theorem \ref{09121})
and a particular first order Lagrangian system with the
characteristic Lagrangian (\ref{Q33}) on $V^*Q$  as a
configuration space.

This facts enable one to apply Noether's first theorem both to
study symmetries in Hamiltonian mechanics (Section 7). In
particular, we show that, since Hamiltonian symmetries are vector
fields on a phase space $V^*Q$ (Definition \ref{nn234}), any
integral of motion in Hamiltonian mechanics (Definition
\ref{nn232}) is some conserved symmetry current (Theorem
\ref{0150'}).

Therefore, it may happen that symmetries and the corresponding
integrals of motion define a Hamiltonian system in full. This is
the case of commutative and noncommutative completely integrable
systems (Section 8).

In Section 9, we provides the global analysis of the Kepler
problem as an example of a mechanical system which entirely is
characterized by its symmetries. It falls into two distinct global
noncommutative completely integrable systems on different open
subsets of a phase space. Their integrals of motion form the Lie
algebras $so(3)$ and $so(2,1)$ with compact and noncompact
invariant submanifolds, respectively
\cite{book10,ijgmmp09a,book15}.

\section{Geometry of fibre bundles over $\mathbb R$}

This Section summarizes peculiarities of geometry of fibre bundles
over $\mathbb R$ \cite{book10,book98}.

Let
\beq
\pi:Q\to \mathbb R \label{gm360}
\eeq
be a fibred manifold whose base is regarded as the time axis
$\mathbb R$ parameterized by the Cartesian coordinate $t$ with
transition functions $t'=t+$const. Relative to the Cartesian
coordinate $t$, the time axis $\mathbb R$ is provided with the
global standard vector field  $\dr_t$ and the global standard
one-form $dt$ which also is a global volume form on $\mathbb R$.
The symbol $dt$ also stands for any pull-back of the standard
one-form $dt$ onto a fibre bundle over $\mathbb R$.

\begin{remark}  \label{010}
Point out one-to-one correspondence between the vector fields
$f\dr_t$, the densities $fdt$ and the real functions $f$ on
$\mathbb R$. Roughly speaking, we can neglect the contribution of
$T\mathbb R$ and $T^*\mathbb R$ to some expressions. In
particular, there is the canonical imbedding (\ref{z260}) of
$J^1Q$.
\end{remark}

In order that the dynamics of a mechanical system can be defined
at any instant $t\in\mathbb R$, we further assume that a fibred
manifold $Q\to \mathbb R$ is a fibre bundle with a typical fibre
$M$.

\begin{remark} \label{047}
A fibred manifold $Q\to\mathbb R$ is a fibre bundle if and only if
it admits an Ehresmann connection $\G$, i.e., the horizontal lift
$\G\dr_t$ onto $Q$ of the standard vector field $\dr_t$ on
$\mathbb R$ is complete.
\end{remark}

Given bundle coordinates $(t,q^i)$ on the fibre bundle
$Q\to\mathbb R$ (\ref{gm360}), the first order jet manifold $J^1Q$
of $Q\to\mathbb R$  is provided with the adapted coordinates
$(t,q^i,q^i_t)$ possessing transition functions
\mar{nn134}\beq
 t'=t+\mathrm{const.}, \qquad q'^i=q'^i(t,q^j),
 \qquad q'^i_t=(\dr_t + q^j_t\dr_j)q'^i. \label{nn134}
\eeq
In mechanics on a configuration space $Q\to\mathbb R$, the jet
manifold $J^1Q$ plays the role of a velocity space.

\begin{remark} \label{nn133} \mar{nn133}
Any fibre bundle $Q\to \mathbb R$ is trivial. Its different
trivializations
\beq
\psi: Q=  \mathbb R\times M \label{gm219}
\eeq
differ from each other in fibrations $Q\to M$. Given the
trivialization  (\ref{gm219}) coordinated by $(t,\wt q^i)$, there
is a canonical isomorphism
\beq
J^1(\mathbb R\times M)=\mathbb R\times TM, \qquad \wt q^i_t=
\dot{\wt q}^i, \label{gm220}
\eeq
that one can justify by inspection of transition functions of
coordinates $\wt q^i_t$ and  $\dot{\wt q}^i$ when transition
functions of $q^i$ are time-independent. Due to the isomorphism
(\ref{gm220}), every trivialization (\ref{gm219}) yields the
corresponding trivialization of the jet manifold
\beq
J^1Q= \mathbb R\times TM. \label{jp2}
\eeq
As a palliative variant, one develops non-relativistic mechanics
on the configuration space (\ref{gm219}) and the velocity space
(\ref{jp2}) \cite{eche,leon}. Its phase space $\mathbb R\times
T^*M$ is provided with the presymplectic form
\mar{100'}\beq
\pr^*_2\Om_T =dp_i\w dq^i \label{100'}
\eeq
which is the pull-back of the canonical symplectic form $\Om_T$
(\ref{m83}) on $T^*M$. A problem is that the presymplectic form
(\ref{100'}) is broken by time-dependent transformations.
\end{remark}

With respect to the bundle coordinates (\ref{nn134}), the
canonical imbedding of $J^1Q$ takes a form
\ben
&& \la_{(1)}: J^1Q\ni (t,q^i,q^i_t)\to (t,q^i,\dot t=1, \dot q^i=q^i_t) \in TQ, \label{z260}\\
&& \la_{(1)}=d_t=\dr_t +q^i_t\dr_i. \label{z260'}
\een
From now on, the jet manifold $J^1Q$ is identified with its image
in $TQ$ which is an affine subbundle of $TX$ modelled over the
vertical tangent bundle $VQ$ of a fibre bundle $Q\to\mathbb R$.
Using the morphism (\ref{z260}), one can define the contraction
\be
J^1Q\op\times_Q T^*Q \op\to_Q Q\times \mathbb R,  \qquad (q^i_t;
\dot{\mathrm{t}}, \dot q_i) \to
\la_{(1)}\rfloor(\dot{\mathrm{t}}dt + \dot q_idq^i) =
\dot{\mathrm{t}} + q^i_t\dot q_i,
\ee
where $(t,q^i,\dot{\mathrm{t}}, \dot q_i)$ are holonomic
coordinates on the cotangent bundle $T^*Q$.

In view of the morphism $\la_{(1)}$ (\ref{z260}), any connection
\beq
\G=dt\ot (\dr_t +\G^i\dr_i) \label{z270}
\eeq
on a fibre bundle $Q\to\mathbb R$  can be identified with a
nowhere vanishing horizontal vector field
\mar{a1.10}\beq
\G = \dr_t + \G^i \dr_i \label{a1.10}
\eeq
on $Q$ which is the horizontal lift $\G\dr_t$ of the standard
vector field $\dr_t$ on $\mathbb R$ by means of the connection
(\ref{z270}). Conversely, any vector field $\G$ on $Q$ such that
$dt\rfloor\G =1$ defines a connection on $Q\to\mathbb R$.
Therefore, the connections (\ref{z270}) further are identified
with the vector fields (\ref{a1.10}). The integral curves of the
vector field (\ref{a1.10}) coincide with the integral sections for
the connection (\ref{z270}).

Connections on a fibre bundle $Q\to\mathbb R$ constitute an affine
space modelled over a vector space of vertical vector fields on
$Q\to\mathbb R$. Accordingly, the covariant differential,
associated to a connection $\G$ on $Q\to\mathbb R$, takes its
values into the vertical tangent bundle $VQ$ of $Q\to\mathbb R$:
\beq
D_\G: J^1Q\op\to_Q VQ, \qquad \ol q^i\circ D_\G =q^i_t-\G^i.
\label{z279}
\eeq

A connection $\G$ on a fibre bundle $Q\to\mathbb R$ is obviously
flat.

\begin{theorem} \label{gn1} \mar{gn1}
Being a flat, every connection $\G$ on a fibre bundle $Q\to\mathbb
R$ defines an atlas of local constant trivializations of
$Q\to\mathbb R$ such that the associated bundle coordinates
$(t,q^i_\G)$ on $Q$ possess transition functions independent of
$t$, and
\beq
\G=\dr_t \label{z271}
\eeq
with respect to these coordinates. Conversely, every atlas of
local constant trivializations of a fibre bundle $Q\to\mathbb R$
determines a connection on  $Q\to\mathbb R$ which is equal to
(\ref{z271}) relative to this atlas.
\end{theorem}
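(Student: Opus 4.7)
The plan is to use the flow of the vector field $\G$ to straighten it out, and for the converse to paste local copies of $\dr_t$ into a globally defined horizontal vector field using $t$-independence of the transitions.

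For the direct statement, I would work with the identification (\ref{a1.10}) of $\G$ with the nowhere vanishing vector field $\dr_t+\G^i\dr_i$, which by $dt\rfloor\G=1$ is transverse to every fibre. Around any $q_0\in Q$ with $t_0=\pi(q_0)$, the local flow $\Phi_s$ of $\G$ is defined on some neighbourhood of $q_0$ for $|s|<\e$. Fixing coordinates $q^i_\G$ on a small neighbourhood $W\subset Q_{t_0}$ of $q_0$, I would transport them along the flow via
\[
\psi:(t_0-\e,t_0+\e)\times W\to Q,\qquad (t,q^i_\G)\longmapsto \Phi_{t-t_0}(q^i_\G),
\]
and take $(t,q^i_\G)$ as adapted coordinates on $\im\psi$. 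Integral curves of $\G$ then appear as the lines $\{q^i_\G=\mathrm{const}\}$, which is equivalent to $\G=\dr_t$ in these coordinates.

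The crucial intermediate point is that the transition functions between two such charts are $t$-independent. If $(t,q^i_\G)$ and $(t,\wt q^i_\G)$ are two adapted coordinate systems obtained by this construction, then on the overlap each $\wt q^i_\G$ is constant along each integral curve of $\G$, i.e., annihilated by $\dr_t$; hence $\wt q^i_\G=\wt q^i_\G(q^j_\G)$ does not involve $t$. This produces the required atlas of local constant trivializations with $\G=\dr_t$ as in (\ref{z271}). For the converse, given an atlas $\{(U_\al,(t,q^i_\al))\}$ with $t$-independent transitions $q^i_\al=q^i_\al(q^j_\bt)$, I would set $\G_\al:=\dr_t$ in each chart; under such a change of coordinates the pushforward of $\dr_t$ picks up only Jacobian terms proportional to $\dr_t q^i_\al=0$, so the local pieces patch to a global vector field $\G$ with $dt\rfloor\G=1$, i.e., a connection in the sense following (\ref{z270}).

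The main technical ingredient is the flow-box construction in the forward direction; I expect it to be the least routine step, but the transversality $dt\rfloor\G=1$ ensures $\pi\circ\Phi_{t-t_0}(q)=t$ whenever defined, so $\psi$ automatically respects the projection onto $\mathbb R$ and no genuine obstacle materialises. The remaining verifications reduce to bookkeeping about constancy along integral curves and the chain rule under $t$-independent changes of fibre coordinates.
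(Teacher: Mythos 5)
Your proof is correct. The paper itself states Theorem \ref{gn1} without proof, deferring to the references \cite{book10,book98}, so there is no internal argument to compare against; but your flow-box construction is exactly the standard route. The two points that carry the whole argument are both present and correctly handled: first, $dt\rfloor\G=1$ gives $\pi\circ\Phi_s(q)=\pi(q)+s$, so transporting fibre coordinates along the flow produces charts respecting the fibration and in which $\G=\dr_t$; second, on an overlap of two such charts one has $0=\G(\wt q^i_\G)=\dr_t\wt q^i_\G$ (constancy of the new fibre coordinates along integral curves, read in the old chart where $\G=\dr_t$), which is precisely the $t$-independence of the transition functions. The converse is the routine patching you describe: under $t$-independent changes of fibre coordinates the pushforward of $\dr_t$ is again $\dr_t$, so the locally defined horizontal lifts agree on overlaps. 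Note that flatness plays no separate role here beyond what your construction already uses --- over a one-dimensional base the horizontal distribution is a line field spanned by $\G$ and is automatically integrable, its integral manifolds being the integral curves you straighten; this is why the theorem needs no curvature hypothesis, in contrast with its analogue over a base of dimension $n>1$.
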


A connection $\G$ on a fibre bundle $Q\to \mathbb R$ is said to be
complete  if the horizontal vector field (\ref{a1.10}) is
complete. By the well known theorem, a connection on a fibre
bundle $Q\to \mathbb R$ is complete if and only if it is an
Ehresmann connection. The following holds \cite{book98}.

\begin{theorem}\label{compl}
Every trivialization of a fibre bundle $Q\to \mathbb R$ yields a
complete connection on this fibre bundle. Conversely, every
complete connection $\G$ on $Q\to\mathbb R$ defines its
trivialization (\ref{gm219}) such that the horizontal vector field
(\ref{a1.10}) equals $\dr_t$ relative to the bundle coordinates
associated to this trivialization.
\end{theorem}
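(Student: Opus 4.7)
The plan is to exploit the identification, established earlier in the section, of a connection $\G$ on $Q\to\mathbb R$ with the horizontal vector field (\ref{a1.10}), and to translate between trivializations and complete horizontal flows via the flow of this vector field. For the direct implication, suppose $\s:Q=\mathbb R\times M$ is a trivialization as in (\ref{gm219}) with coordinates $(t,\wt q^i)$ having time-independent transition functions. Put $\G=\dr_t$ in these coordinates. This prescription is globally consistent because $\G^i=0$ is preserved under $t$-independent changes of the fibre coordinates, and the integral curves $s\mapsto(t_0+s,\wt q^i_0)$ exist for all $s\in\mathbb R$, so $\G$ is complete.

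For the converse, let $\G=\dr_t+\G^i\dr_i$ be a complete connection and let $\{\f_s\}_{s\in\mathbb R}$ be the one-parameter group of diffeomorphisms of $Q$ generated by the horizontal vector field (\ref{a1.10}). From $dt\rfloor\G=1$ one obtains $\pi\circ\f_s=\pi+s$, so $\f_s$ maps the fibre $\pi^{-1}(t)$ diffeomorphically onto $\pi^{-1}(t+s)$. Setting $M:=\pi^{-1}(0)$, define
\[
\s:\mathbb R\times M\to Q,\qquad \s(t,x):=\f_t(x).
\]
The map $q\mapsto(\pi(q),\f_{-\pi(q)}(q))$ is a smooth inverse, so $\s$ is a trivialization of the form (\ref{gm219}). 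In the coordinates $(t,\wt q^i)$ induced on $Q$ by $\s$ from coordinates $\wt q^i$ on $M$, every integral curve of $\G$ takes the form $s\mapsto(s,\wt q^i_0)$ with $\wt q^i_0$ constant, whence $\G=\dr_t$ in these coordinates.

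The main technical point is the global existence of the flow $\f_s$ used in the converse: one needs $\f_t$ to be defined for every $t\in\mathbb R$, which is exactly the completeness assumption. Without it one obtains only a local flow, yielding at best the local constant trivialization atlas of Theorem \ref{gn1}. The theorem thus says precisely that promoting that atlas to a single global chart (\ref{gm219}) is equivalent to extending the horizontal flow of $\G$ to all of $\mathbb R$.
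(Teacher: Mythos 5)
Your proof is correct. The paper itself gives no argument for Theorem \ref{compl} (it is stated with a citation to \cite{book98}), and your flow-based construction --- completeness of the horizontal lift giving a global one-parameter group $\f_s$ with $\pi\circ\f_s=\pi+s$, hence the diffeomorphism $\s(t,x)=\f_t(x)$ onto $Q$ from $\mathbb R\times\pi^{-1}(0)$ --- is exactly the standard proof the cited reference supplies, with the key point (global existence of the flow is where completeness enters, and without it one only recovers the local atlas of Theorem \ref{gn1}) correctly identified.
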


It follows from Theorem \ref{gn1} that, in mechanics unlike field
theory, connections $\G$ (\ref{a1.10}) on a configuration bundle
(\ref{gm360}) fail to be dynamic variables. They characterize
reference frames as follows.

From the physical viewpoint, a reference frame in mechanics
determines a tangent vector at each point of a configuration space
$Q$, which characterizes the velocity of an observer at this
point. This speculation leads to the following mathematical
definition of a reference frame in mechanics
\cite{book10,book98,sard98}.

\begin{definition}\label{gn10} \mar{gn10}
In non-relativistic mechanics, a reference frame  is the
connection $\G$ (\ref{a1.10}) on a configuration bundle
$Q\to\mathbb R$, i.e., a section of the velocity bundle $J^1Q\to
Q$.
\end{definition}

By virtue of this definition, one can think of the horizontal
vector field (\ref{a1.10}) associated to a connection $\G$ on
$Q\to\mathbb R$ as being a family of observers, while the
corresponding covariant  differential (\ref{z279}):
\mar{nn161}\beq
\ol q^i_\G= D_\G(q^i_t)= q^i_t-\G^i, \label{nn161}
\eeq
determines the relative velocity  with  respect to a reference
frame $\G$. Accordingly, $q^i_t$ are regarded as the absolute
velocities.

In accordance with Theorem \ref{gn1}, any reference frame $\G$ on
a configuration bundle $Q\to\mathbb R$ is associated to an atlas
of local constant trivializations, and vice versa. A connection
$\G$ takes the form $\G=\dr_t$ (\ref{z271}) with respect to the
corresponding coordinates $(t,q^i_\G)$, whose transition functions
are independent of time. One can think of these coordinates as
also being a reference frame, corresponding to the connection
(\ref{z271}). They are called the adapted coordinates  to a
reference frame $\G$. Thus, we come to the following definition,
equivalent to Definition \ref{gn10}.

\begin{definition}\label{gn11}
In mechanics, a reference frame is an atlas of local constant
trivializations of a configuration bundle $Q\to\mathbb R$.
\end{definition}

In particular, with respect to the coordinates $q^i_\G$ adapted to
a reference frame $\G$, the velocities relative to this reference
frame (\ref{nn161}) coincide with the absolute ones
\be
\rrq^i_\G=D_\G(q^i_{\G t})=q^i_{\G t}.
\ee

A reference frame is said to be complete if the associated
connection $\G$ is complete. By virtue of Theorem \ref{compl},
every complete reference frame defines a trivialization  of a
bundle $Q\to\mathbb R$, and vice versa.

\section{Lagrangian mechanics. Integrals of motion}

As was mentioned above, our exposition is restricted to first
order Lagrangian theory on a fibre bundle $Q\to \mathbb R$
\cite{book10,book98,sard132}. This is a standard case of
Lagrangian mechanics.

In mechanics, a first order Lagrangian is defined as a horizontal
density
\mar{23f2'}\beq
L=\cL dt, \qquad \cL: J^1Q\to \mathbb R, \label{23f2'}
\eeq
on a velocity space $J^1Q$.

The corresponding second order Euler--Lagrange operator, termed
the Lagrange operator,  reads
\mar{305'}\ben
&&\dl L= (\dr_i\cL- d_t\dr^t_i\cL) \thh^i\w dt, \label{305'}\\
&& d_t=\dr_t +q^i_t\dr_i + q^i_{tt}\dr_i^t. \nonumber
\een
Let us further use the notation
\mar{03}\beq
\pi_i=\dr^t_i\cL, \qquad \pi_{ji}=\dr_j^t\dr_i^t\cL. \label{03}
\eeq

The kernel $\gE_L=\Ker\dl L\subset J^2Q$ of the Lagrange operator
(\ref{305'}) defines a  second order Lagrange equation
\mar{b327'}\beq
(\dr_i- d_t\dr^t_i)\cL=0 \label{b327'}
\eeq
on $Q$. Its classical solutions  are (local) sections $c$ of a
fibre bundle $Q\to\mathbb R$ whose second order jet prolongations
$J^2c=\dr_{tt}c$ live in $\gE_L$ (\ref{b327'}).

Every first order Lagrangian $L$ (\ref{23f2'}) yields the Legendre
map
\mar{a303}\beq
\wh L:J^1Q\ar_Q V^*Q,\qquad  p_i \circ\wh L = \pi_i, \label{a303}
\eeq
where the Legendre bundle $\Pi=V^*Q$ is the vertical cotangent
bundle $V^*Q$ of $Q\to\mathbb R$ provided with holonomic
coordinates $(t,q^i,p_i)$. As was mentioned above, it plays the
role of a phase space of mechanics on a configuration space
$Q\to\mathbb R$. The corresponding Lagrangian constraint space is
\beq
N_L=\wh L(J^1Q)\label{jkl}
\eeq

\begin{definition} \label{d11} \mar{d11}
A Lagrangian $L$ is said to be:

$\bullet$ hyperregular if the Legendre map $\wh L$ is a
diffeomorphism;

$\bullet$ regular if $\wh L$ is a local diffeomorphism, i.e.,
$\det(\pi_{ij})\neq 0$;

$\bullet$ almost regular if the Lagrangian constraint space $N_L$
(\ref{jkl}) is a closed imbedded subbundle $i_N:N_L\to V^*Q$ of
the Legendre bundle $V^*Q\to Q$ and the Legendre map
\mar{cmp12}\beq
\wh L:J^1Q\to N_L \label{cmp12}
\eeq
is a fibred manifold with connected fibres.
\end{definition}

Given a first order Lagrangian $L$, there is the global
decomposition (called the variational formula)
\mar{+421}\beq
dL=\dl L-d_t\Xi_L \label{+421}
\eeq
where the Lepage equivalent $\Xi_L$ of $L$ is the
Poincar\'e--Cartan form
\mar{303'}\beq
H_L=\pi_i dq^i -(\pi_iq^i_t-\cL)dt \label{303'}
\eeq
(see the notation (\ref{03})).  This form takes its values into a
subbundle
\be
J^1Q\op\times_Q T^*Q\subset T^*J^1Q.
\ee
Hence, it defines the homogeneous Legendre map
\mar{N41'}\beq
\wh H_L: J^1Q\to Z_Y=T^*Q, \label{N41'}
\eeq
whose range is an imbedded subbundle
\beq
Z_L= \wh H_L(J^1Q)\subset T^*Q \label{23f10'}
\eeq
of the homogeneous Legendre bundle $Z_Y=T^*Q$ (\ref{N41'}). Let
$(t,q^i,p_0=\dot{\mathrm{t}},p_i=\dot q_i)$ denote holonomic
coordinates on $T^*Q$ which possess transition functions
\beq
{p'}_i = \frac{\dr q^j}{\dr{q'}^i}p_j, \qquad p'_0=
\left(p_0+\frac{\dr q^j}{\dr t}p_j\right). \label{2.3a}
\eeq
With respect to these coordinates, the homogeneous Legendre map
$\wh H_L$ (\ref{N41'}) reads
\be
(p_0,p_i)\circ \wh H_L =(\cL-q^i_t\p_i, \p_i).
\ee

In view of the morphism (\ref{N41'}), the cotangent bundle $T^*Q$
plays the role of a homogeneous phase space of mechanics.

A glance at the transition functions (\ref{2.3a}) shows that the
canonical map
\mar{b418a}\beq
\zeta:T^*Q\to V^*Q, \label{b418a}
\eeq
ia a one-dimensional affine bundle over the vertical cotangent
bundle $V^*Q$. Herewith, the Legendre map $\wh L$ (\ref{a303}) is
exactly the composition of morphisms
\be
\wh L=\zeta\circ H_L:J^1Q \op\to_Q V^*Q.
\ee

\begin{remark} \label{nn145} \mar{nn145}
The Poincar\'e--Cartan form $H_L$ (\ref{303'}) also is the
Poincar\'e--Cartan form $H_L=H_{\wt L}$ of a first order
Lagrangian
\beq
\wt L=\wh h_0(H_L) = (\cL + (q_{(t)}^i - q_t^i)\p_i)dt, \qquad \wh
h_0(dq^i)=q^i_{(t)} dt,\label{cmp80a}
\eeq
on the  repeated jet manifold $J^1J^1Q$. The Lagrange operator
(\ref{305'}) for $\wt L$ (\ref{cmp80a}) reads
\be
&& \dl\wt L  = [(\dr_i\cL - \wh d_t\p_i + \dr_i\p_j(q_{(t)}^j -
q_t^j))dq^i + \dr_i^t\p_j(q_{(t)}^j - q_t^j) dq_t^i]\w dt, \\
&& \wh d_t=\dr_t +\wh q^i_t\dr_i + q^i_{tt}\dr^t_i. \nonumber
\ee
Its kernel $\Ker\dl \ol L\subset J^1J^1Q$ defines the Cartan
equation
\mar{b336c}\beq
\dr_i^t\p_j(q_{(t)}^j - q_t^j)=0, \qquad \dr_i\cL - \wh d_t\p_i +
\dr_i\p_j(q_{(t)}^j - q_t^j)=0 \label{b336c}
\eeq
on a velocity space $J^1Q$.
\end{remark}

In mechanics, the Lagrange equation (\ref{b327'}) as like as the
Hamilton one (\ref{z20a}) is an ordinary differential equation.
One can think of its classical solutions $s(t)$ as being a motion
in a configuration space $Q$. In this case, the notion of
integrals of motion can be introduced as follows
\cite{book10,book15}.

In a general setting, let an equation of motion of a mechanical
system is an $r$-order differential equation $\gE$ on a fibre
bundle $Y\to\mathbb R$ given by a closed subbundle of the jet
bundle $J^rY\to\mathbb R$.

\begin{definition} \label{026} \mar{026}
An integral of motion of this mechanical system is defined as a
$(k<r)$-order differential operator $\Phi$ on $Y$ such that $\gE$
belongs to the kernel of an $r$-order jet prolongation of a
differential operator $d_t\Phi$, i.e.,
\mar{021}\ben
&& J^{r-k-1}(d_t\Phi)|_{\gE}=J^{r-k}\Phi|_{\gE}=0, \label{021} \\
&&  d_t=\dr_t +y^a_t\dr_a + y^a_{tt}\dr^t_a + \cdots. \nonumber
\een
\end{definition}

It follows that an integral of motion $\Phi$ is constant on
classical solutions $s$ of a differential equation $\gE$, i.e.,
there is the differential conservation law
\beq
(J^ks)^*\Phi=\mathrm{const}., \qquad (J^{k+1}s)^*d_t\Phi=0.
\label{020}
\eeq

We agree to write the condition (\ref{021}) as a weak equality
\beq
J^{r-k-1}(d_t\Phi)\ap 0, \label{022}
\eeq
which holds on-shell, i.e., on solutions of a differential
equation $\gE$ by the formula (\ref{020}).

In mechanics, we restrict our consideration to integrals of motion
$\Phi$ which are functions on $J^kY$. As was mentioned above,
equations of motion of mechanics mainly are either of first or
second order. Accordingly, their integrals of motion are functions
on $Y=J^0Y$ or $J^1Y$. In this case, the corresponding weak
equality (\ref{021}) takes a form
\mar{027}\beq
d_t\Phi\ap 0 \label{027}
\eeq
of a weak conservation law of an integral of motion.

Integrals of motion can come from symmetries. This is the case
both of Lagrangian mechanics on a configuration space $Y=Q$
(Theorems \ref{035'} -- \ref{nn191}) and Hamiltonian mechanics on
a phase space $Y=V^*Q$ (Theorem \ref{nn192}).

\begin{definition} \label{025} \mar{025}
Let an equation of motion of a mechanical system be an $r$-order
differential equation $\gE\subset J^rY$. Its infinitesimal
symmetry (or, shortly, a symmetry) is defined as a vector field on
$J^rY$ whose restriction to $\gE$ is tangent to $\gE$.
\end{definition}

Following Definition \ref{025}, let us introduce a notion of the
symmetry of differential operators in the following relevant case.
Let us consider an $r$-order differential operator on a fibre
bundle $Y\to\mathbb R$ which is represented by an exterior form
$\cE$ on $J^rY$. Let its kernel $\Ker\cE$ be an $r$-order
differential equation on $Y\to\mathbb R$.

\begin{theorem} \label{082} \mar{082}
It is readily justified that a vector field $\vt$ on $J^rY$ is a
symmetry of the equation $\Ker\cE$ in accordance with Definition
\ref{025} if and only if $\bL_\vt \cE\ap 0$.
\end{theorem}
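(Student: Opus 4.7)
The plan is to reduce the statement to the standard criterion that a vector field is tangent to a regular level set iff it annihilates the defining functions along that level set. Since the claim is local on $J^r Y$, I would work in a chart and pick a frame $\{\omega^I\}$ of pointwise linearly independent forms of the same degree as $\cE$, so that $\cE$ admits a unique expansion $\cE=\cE_I\omega^I$ with smooth coefficient functions $\cE_I$. Then $\Ker\cE$ is locally cut out by the simultaneous vanishing of the $\cE_I$, and the tacit assumption that $\Ker\cE$ is a differential equation (a closed subbundle of $J^r Y$) lets me treat it as a submanifold to which tangency of vector fields is meaningful.

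Next I would translate Definition \ref{025} analytically: $\vt$ is a symmetry of $\Ker\cE$ iff for every function on $J^r Y$ that vanishes on $\Ker\cE$ the derivative along $\vt$ also vanishes on $\Ker\cE$; applied to the generators, this becomes $\vt(\cE_I)\ap 0$ for all $I$. I would then expand the Lie derivative by Leibniz,
\[
\bL_\vt \cE = \vt(\cE_I)\,\omega^I + \cE_I\, \bL_\vt\omega^I ,
\]
and observe that the second sum vanishes on $\Ker\cE$ term by term because each $\cE_I$ does. Linear independence of the frame then yields $\bL_\vt\cE \ap 0$ iff $\vt(\cE_I)\ap 0$ for every $I$, which is precisely the symmetry condition identified in the previous sentence.

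The only delicate point I anticipate is the validity of the tangency criterion when $\cE$ is not of ``full rank'', i.e.\ when the ideal generated by the coefficients $\cE_I$ is strictly smaller than the ideal of smooth functions vanishing on $\Ker\cE$. In the regular case implicit throughout the paper --- satisfied in particular by the Lagrange and Cartan equations under the hypotheses of Definition \ref{d11} --- the two ideals coincide and the argument above is automatic. Otherwise only the direction ``$\vt$ tangent $\Rightarrow \bL_\vt\cE\ap 0$'' remains free, while the converse requires a supplementary regularity assumption on $\cE$; this is, I suspect, why the theorem is presented as ``readily justified'' rather than with elaborate hypotheses.
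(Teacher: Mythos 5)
The paper itself offers no proof of this statement --- it is asserted as ``readily justified'' --- so there is nothing to compare line by line; your argument supplies exactly the standard reasoning the author is implicitly invoking. Your local expansion $\cE=\cE_I\omega^I$, the identification of tangency to $\Ker\cE$ with $\vt(\cE_I)\ap 0$, and the Leibniz computation $\bL_\vt\cE=\vt(\cE_I)\,\omega^I+\cE_I\,\bL_\vt\omega^I$ (whose second term vanishes on-shell) are correct, and you are right to flag that the converse direction ``$\bL_\vt\cE\ap 0\Rightarrow\vt$ tangent'' needs the $\cE_I$ to generate the vanishing ideal of $\Ker\cE$; this regularity is tacit in the paper's standing assumption that $\Ker\cE$ is a (closed imbedded) differential equation, and it does hold for the Lagrange, Cartan and Hamilton operators actually used later. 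No gap beyond the one you yourself identified and correctly localized.
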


Motivated by Theorem \ref{082}, we come to the following notion.

\begin{definition} \label{084}  \mar{084} Let $\cE$ be the above
mentioned differential operator. A vector field $\vt$ on $J^rY$ is
termed the symmetry of a differential operator $\cE$ if the Lie
derivative $\bL_\vt \cE$ vanishes.
\end{definition}

By virtue of Theorem \ref{082}, a symmetry of a differential
operator $\cE$ also is a symmetry of a differential equation
$\Ker\cE$.

Note that there exist integrals of motion which are not associated
with symmetries of an equation of motion, e.g., the Rug--Lenz
vector (\ref{050}) in a Lagrangian Kepler system (Section 9).

\section{Noether's first theorem: Energy conservation laws}

In Lagrangian mechanics, integrals of motion come from symmetries
of a Lagrangian (Theorem \ref{035'}) in accordance with Noether's
first theorem (Theorem \ref{j22}). However as was mentioned above,
not all integrals of motion are of this type.

In the framework of first order Lagrangian formalism, we restrict
our consideration to classical symmetries  of a Lagrangian system
represented by vector fields $\up$ on a configuration bundle $Q\to
\mathbb R$. Moreover, not concerned with time-reparametrization,
we deal with vector fields
\mar{j15'}\beq
\up=\up^t\dr_t +\up^i(t,q^i)\dr_i, \qquad \up^t=0,1. \label{j15'}
\eeq
Their first order jet prolongation onto the velocity space $J^1Q$
read
\mar{a23f41}\beq
J^1\up= \up^t\dr_t + \up^i\dr_i + d_t \up^i\dr^t_i. \label{a23f41}
\eeq

Let $L$ be the Lagrangian (\ref{23f2'}) on a velocity space
$J^1Q$. Due ti the variational formula (\ref{+421}), its Lie
derivative $\bL_{J^1\up} L$ along the $J^1\up$ (\ref{a23f41})
obeys the relation (called the first variational formula)
\mar{a23f42}\beq
\bL_{J^1\up}L= \up_V\rfloor\dl L + d_H(\up\rfloor H_L),
\label{a23f42}
\eeq
where $H_L$ is the Poincar\'e--Cartan form (\ref{303'}). Its
coordinate expression reads
\mar{J4'}\beq
[\up^t\dr_t+ \up^i\dr_i +d_t \up^i\dr^t_i]\cL = (\up^i-q^i_t
\up^t)\cE_i + d_t[\pi_i(\up^i-\up^t q^i_t) +\up^t\cL]. \label{J4'}
\eeq

\begin{definition} \label{nn108} \mar{nn108}
The vector field $\up$ (\ref{j15'}) on $Q$ is called the
Lagrangian symmetry (or, shortly, the symmetry) of a Lagrangian
$L$ if the Lie derivative $\bL_{J^1\up} L$ of $L$ is $d_t$-exact,
i.e.,
\mar{22f1}\beq
\bL_{J^1 \up} L=d_t\si dt \label{22f1}
\eeq
where $\si$ is a function on $J^1Q$.
\end{definition}

Then Noether's first theorem is formulated as follows.

\begin{theorem} \label{j22} \mar{j22} If
the vector field $\up$ (\ref{j15'}) is a symmetry of a Lagrangian
$L$, a corollary of the first variational formula (\ref{J4'})
on-shell is the weak Lagrangian conservation law
\mar{d22f2}\beq
0\ap d_t(\up\rfloor H_L -\si)dt\ap  d_t(\pi_i(\up^i- \up^t q^i_t)
+\up^t\cL -\si)dt\ap -d_t\cJ_\up dt \label{d22f2}
\eeq
of a symmetry current
\mar{am225'}\beq
\cJ_\up=-(\up\rfloor H_L-\si) =-(\pi_i(\up^i-\up^tq^i_t) +
\up^t\cL-\si) \label{am225'}
\eeq
along $\up$. The symmetry current (\ref{am225'}) obviously is
defined with the accuracy to a constant summand.
\end{theorem}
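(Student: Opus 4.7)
The plan is to read off the conclusion almost directly from the first variational formula (\ref{J4'}), combined with the symmetry condition (\ref{22f1}) and the on-shell vanishing of the Euler--Lagrange expressions. So the proof will be short; the only thing to verify carefully is the bookkeeping of which terms survive on-shell.

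First I would take the coordinate form of the first variational formula (\ref{J4'}),
\[
[\up^t\dr_t+ \up^i\dr_i +d_t \up^i\dr^t_i]\cL = (\up^i-q^i_t \up^t)\cE_i + d_t\bigl[\pi_i(\up^i-\up^t q^i_t) +\up^t\cL\bigr],
\]
where the left-hand side is exactly the component of $\bL_{J^1\up}L$ relative to $dt$. By the symmetry hypothesis (\ref{22f1}), this left-hand side equals $d_t\si$. Substituting this and transposing gives the off-shell identity
\[
(\up^i-q^i_t \up^t)\cE_i = d_t\bigl[\pi_i(\up^i-\up^t q^i_t) +\up^t\cL -\si\bigr].
\]

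Next I would pass on-shell. By the Lagrange equation (\ref{b327'}), the Euler--Lagrange expressions $\cE_i = \dr_i\cL - d_t\pi_i$ vanish on $\gE_L$, hence the left-hand side of the above identity is weakly zero in the sense of (\ref{027}). Therefore
\[
d_t\bigl[\pi_i(\up^i-\up^t q^i_t) +\up^t\cL -\si\bigr]\ap 0,
\]
which, multiplied by $dt$, is exactly the middle equality in (\ref{d22f2}). Recognising the bracketed expression as $\up\rfloor H_L - \si$ from the definition (\ref{303'}) of the Poincar\'e--Cartan form yields the first equality, and the substitution $\cJ_\up=-(\up\rfloor H_L-\si)$ from (\ref{am225'}) gives the last one.

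There is no real obstacle: once the variational formula is in hand, the argument is purely algebraic manipulation plus the on-shell cancellation. The only subtle point worth flagging is the final remark about the constant ambiguity in $\cJ_\up$. This is immediate from the fact that $\si$ enters (\ref{22f1}) only through $d_t\si$, so replacing $\si$ by $\si + c$ for any constant $c$ preserves the symmetry condition and shifts $\cJ_\up$ by $-c$; hence the symmetry current is well defined up to an additive constant, as stated.
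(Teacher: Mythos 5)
Your derivation is exactly the paper's (implicit) argument: Theorem \ref{j22} is stated as a direct corollary of the first variational formula (\ref{J4'}) combined with the symmetry condition (\ref{22f1}) and the on-shell vanishing of the Euler--Lagrange expressions, which is precisely what you carry out, including the correct identification of $\up\rfloor H_L$ and the observation about the constant ambiguity in $\si$. The only blemish is a sign slip in your intermediate identity --- transposing actually gives $(\up^i-q^i_t\up^t)\cE_i=-d_t\bigl[\pi_i(\up^i-\up^tq^i_t)+\up^t\cL-\si\bigr]$ --- but since the conclusion is the weak vanishing of the total derivative, this does not affect (\ref{d22f2}).
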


\begin{theorem} \label{035'} \mar{035'}
It is readily observed that the conserved current $\cJ_\up$
(\ref{am225'}) along a classical symmetry is a function on a
velocity space $J^1Q$, and it is an  integral of motion of a
Lagrangian system in accordance with Definition \ref{026}.
\end{theorem}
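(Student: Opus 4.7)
The statement really just asks us to match Theorem \ref{j22} to Definition \ref{026}, so the plan is to verify the two clauses in order and note that no new computation is required.

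First, I would check that $\cJ_\up$ is a function on $J^1Q$ by direct inspection of the coordinate expression (\ref{am225'}). The components $\up^t$ and $\up^i$ of the projectable vector field $\up$ (\ref{j15'}) depend only on $(t,q^i)$, the Lagrangian $\cL$ and the momenta $\pi_i=\dr^t_i\cL$ (see (\ref{03})) are functions on $J^1Q$, the coordinates $q^i_t$ are coordinates on $J^1Q$, and the function $\si$ was postulated in Definition \ref{nn108} to live on $J^1Q$. Hence every summand in
\be
\cJ_\up=-(\pi_i(\up^i-\up^tq^i_t)+\up^t\cL-\si)
\ee
is a function on $J^1Q$, and so is $\cJ_\up$ itself.

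Second, I would apply Definition \ref{026} with $Y=Q$ and $r=2$ (the Lagrange equation $\gE_L=\Ker\dl L$ is second order, sitting inside $J^2Q$). The integral of motion $\cJ_\up$ is of order $k=1$, so the defining condition (\ref{022}) reduces to the weak equality (\ref{027}), namely $d_t\cJ_\up\ap 0$ on $\gE_L\subset J^2Q$. But this is precisely the content of the weak Lagrangian conservation law (\ref{d22f2}) proved in Theorem \ref{j22}: the first variational formula (\ref{J4'}) reads
\be
\bL_{J^1\up}L=(\up^i-q^i_t\up^t)\cE_i+d_t\cJ_\up\,dt,
\ee
and on $\gE_L$ the term $(\up^i-q^i_t\up^t)\cE_i$ vanishes (since the Euler--Lagrange expressions $\cE_i=\dr_i\cL-d_t\pi_i$ vanish there), while the left-hand side $\bL_{J^1\up}L=d_t\si\,dt$ by the symmetry hypothesis (\ref{22f1}); subtracting yields $d_t\cJ_\up\ap 0$.

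No obstacle arises: both clauses are immediate consequences of definitions and of Theorem \ref{j22} already established. The only thing to be careful about is bookkeeping of orders — that the Lagrange equation is second order, so that a function on $J^1Q$ whose total derivative vanishes on-shell qualifies as an integral of motion in the sense of Definition \ref{026} — and this is why the theorem is phrased as an observation rather than requiring a separate computation.
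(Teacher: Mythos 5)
Your argument is correct and is exactly the (unwritten) argument the paper intends: the theorem is stated as an observation, and its content is precisely the combination of the coordinate expression (\ref{am225'}) with the weak conservation law of Theorem \ref{j22}, read against Definition \ref{026} with $Y=Q$, $r=2$, $k=1$. The only blemish is a sign/bookkeeping slip in your displayed version of (\ref{J4'}): with the convention $\cJ_\up=-(\up\rfloor H_L-\si)$ the formula reads $\bL_{J^1\up}L=(\up^i-q^i_t\up^t)\cE_i\,dt-d_t(\cJ_\up+ \ldots)\,dt$ rather than $+d_t\cJ_\up\,dt$, but this does not affect the conclusion $d_t\cJ_\up\ap 0$.
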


\begin{theorem} \label{nn191} \mar{nn191}
If a symmetry $\up$ of a Lagrangian $L$ is classical, this is a
symmetry of the Lagrange operator $\dl L$ (\ref{305'}) and, as a
consequence, an infinitesimal symmetry of the Lagrange equation
$\gE_L$ (\ref{b327'}) (Theorem \ref{082}).
\end{theorem}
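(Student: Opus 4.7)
The plan is to show $\bL_{J^2\up}\dl L = 0$ directly from the symmetry hypothesis $\bL_{J^1\up}L = d_t\si\,dt$ of Definition \ref{nn108}, and then read off the infinitesimal symmetry of $\gE_L$ via Theorem \ref{082}.

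The first step is the standard naturality of the Euler--Lagrange operator under prolongations of projectable vector fields. Since $\up^t$ is constant, $\up$ in (\ref{j15'}) is projectable on $Q\to\mathbb R$, and one has
\[
\bL_{J^2\up}(\dl L) = \dl(\bL_{J^1\up}L).
\]
This follows from the uniqueness of the variational decomposition (\ref{+421}) together with the fact that $d$ and the horizontal/contact bigrading commute with jet prolongations of projectable vector fields.

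The second step is to observe that horizontally exact forms are trivial Lagrangians: $\dl(d_t\si\,dt) = 0$ identically for any function $\si$, by the cohomological identity $\dl\circ d_H = 0$ of the variational bicomplex. In the present first-order setup this is a one-line coordinate check that $\dr_i(d_t\si) - d_t\dr^t_i(d_t\si) + d^2_t\dr^{tt}_i(d_t\si) = 0$. Combining the two steps,
\[
\bL_{J^2\up}(\dl L) = \dl(d_t\si\,dt) = 0,
\]
so $J^2\up$ is a symmetry of the Lagrange operator $\dl L$ in the sense of Definition \ref{084}. Since $\gE_L = \Ker\dl L$, Theorem \ref{082} then immediately yields that $J^2\up$ is an infinitesimal symmetry of the Lagrange equation (\ref{b327'}) per Definition \ref{025}, as claimed.

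The main obstacle is really the first step, the commutation $\dl\circ\bL_{J^1\up} = \bL_{J^2\up}\circ\dl$. This is classical in the variational bicomplex, but in the first-order setting at hand it should also be accessible by a direct coordinate calculation from (\ref{305'}) and (\ref{a23f41}), exploiting constancy of $\up^t$; the computation is routine but mildly tedious because $\bL_{J^2\up}$ must reach the $q^i_{tt}$-derivatives appearing in (\ref{305'}).
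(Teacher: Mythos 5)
The paper states Theorem \ref{nn191} without proof, so there is nothing to compare against verbatim; your argument supplies the standard proof (the one used in the monographs the paper cites) and it is correct. Both steps are sound: the naturality identity $\bL_{J^2\up}(\dl L)=\dl(\bL_{J^1\up}L)$ holds for the projectable vector fields (\ref{j15'}) since $\up^t$ is constant, and $\dl(d_t\si\,dt)=0$ is exactly the variational triviality of $d_H$-exact densities (your coordinate check, with the second-order term $d_t^2\dr_i^{tt}$ included because $d_t\si$ depends on $q^i_{tt}$, goes through: $\dr_i(d_t\si)=d_t(\dr_i\si)$ and $\dr_i^t(d_t\si)=\dr_i\si+d_t(\dr_i^t\si)$ make the three terms cancel). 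One minor point of precision worth keeping: it is $J^2\up$, not $\up$ itself, that is the symmetry of the operator $\dl L$ on $J^2Q$ in the sense of Definition \ref{084}; the paper's phrasing is loose here and your proof handles it correctly.
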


\begin{remark} \label{045'}  \mar{045'}Given a Lagrangian $L$, let
$\wh L$ be its partner (\ref{cmp80a}) on the repeated jet manifold
$J^1J^1Q$. Since $H_L$ (\ref{a303}) is the Poincar\'e--Cartan form
both for $L$ and $\wt L$, a Lagrangian $\wh L$ does not lead to
new conserved currents.
\end{remark}

If a symmetry $\up$ of a Lagrangian $L$ is exact, i.e,
\be
\bL_{J^1\up}L=0,
\ee
the first variational formula (\ref{a23f42}) takes a form
\mar{nn230}\beq
 0= \up_V\rfloor\dl L + d_H(\up\rfloor H_L). \label{nn230}
\eeq
It leads to the weak conservation law (\ref{d22f2}):
\beq
0\ap -d_t\cJ_\up, \label{gm488}
\eeq
of the symmetry current
\mar{m225}\beq
\cJ_\up=-\up\rfloor H_L=-\pi_i(\up^i-\up^tq^i_t) - \up^t\cL
\label{m225}
\eeq
along a vector field $\up$.

\begin{remark}
The first variational formula  (\ref{nn230}) also can be utilized
when a Lagrangian possesses exact symmetries, but an equation of
motion is a sum
\mar{z0122}\beq
(\dr_i- d_t\dr^t_i)\cL +f_i(t,q^j,q^j_t)=0 \label{z0122}
\eeq
of a Lagrange equation and an additional non-Lagrangian external
force. Let us substitute $\cE_i=-f_i$ from this equality in the
first variational formula (\ref{nn230}). Then we have the weak
transformation law
\be
(\up^i-q^i_t\up^t)f_i\ap d_t\cJ_\up
\ee
of the symmetry current $\cJ_\up$ (\ref{m225}) on the shell
(\ref{z0122}).
\end{remark}

It is readily observed that the first variational formula
(\ref{J4'}) is linear in a vector field $\up$. Therefore, one can
consider superposition of the equalities (\ref{J4'}) for different
vector fields.

For instance, if $\up$ and $\up'$ are projectable vector fields
(\ref{j15'}), they are projected onto the standard vector field
$\dr_t$ on $\mathbb R$, and the difference of the corresponding
equalities (\ref{J4'}) results in the first variational formula
(\ref{J4'}) for a vertical vector field $\up-\up'$.

Conversely, every vector field $\up$ (\ref{j15'}), projected onto
$\dr_t$, can be written as a sum
\mar{gm490}\beq
\up=\G +v \label{gm490}
\eeq
of some reference frame (\ref{a1.10}):
\mar{gm489}\beq
\G=\dr_t +\G^i\dr_i, \label{gm489}
\eeq
and a vertical vector field $v$ on $Q\to\mathbb R$.

It follows that the first variational formula (\ref{J4'}) for the
vector field $\up$ (\ref{j15'}) can be represented as a
superposition of those for the reference frame $\G$ (\ref{gm489})
and a vertical vector field $v$.

If $\up=v$ is a vertical vector field, the first variational
formula (\ref{J4'}) reads
\be
(v^i\dr_i +d_t v^i \dr^t_i)\cL = v^i\cE_i + d_t(\pi_i v^i).
\ee
If $v$ is an exact symmetry of $L$, we obtain from (\ref{gm488})
the Noether conservation law
\be
0\ap d_t(\pi_i v^i)
\ee
of the  Noether current
\mar{z384}\beq
\cJ_v=-\pi_i v^i, \label{z384}
\eeq
which is a Lagrangian integral of motion by virtue of Theorem
\ref{035'}.

\begin{remark} \label{057} \mar{057}
Let us assume that, given a trivialization $Q= \mathbb R\times M$
in bundle coordinates $(t,q^i)$, a Lagrangian $L$ is independent
of some coordinate $q^a$. Then a vertical vector field $v=\dr_i$
is an exact symmetry of $L$, and we have the conserved Noether
current $\cJ_v=-\pi_i$ (\ref{z384}) which is an integral of
motion.
\end{remark}

In the case of the reference frame $\G$ (\ref{gm489}), where
$\up^t=1$, the first variational formula (\ref{J4'}) reads
\mar{m227}\beq
(\dr_t +\G^i\dr_i +d_t\G^i\dr_i^t)\cL = (\G^i-q^i_t)\cE_i -
d_t(\pi_i(q^i_t-\G^i) -\cL), \label{m227}
\eeq
where
\mar{m228}\beq
E_\G=\cJ_\G= \pi_i(q^i_t -\G^i) -\cL \label{m228}
\eeq
is called the energy function relative to a reference frame $\G$
\cite{book10,sard98}.

With respect to the coordinates $q^i_\G$ adapted to a reference
frame $\G$, the first variational formula (\ref{m227}) takes a
form
\beq
\dr_t\cL = -q^i_{\G t}\cE_i - d_t(\pi_iq^i_{\G t}
-\cL),\label{m229}
\eeq
and the $E_\G$ (\ref{m228}) coincides with the canonical energy
function
\mar{nn162}\beq
E_L=\pi_iq^i_{\G t} -\cL. \label{nn162}
\eeq
A glance at the expression (\ref{m229}) shows that the vector
field $\G$ (\ref{gm489}) is an exact symmetry of a Lagrangian $L$
if and only if, written with respect to coordinates adapted to
$\G$, this Lagrangian is independent on the time $t$. In this
case, the energy function $E_\G$ (\ref{m229}) relative to a
reference frame $\G$ is conserved:
\mar{nn163}\beq
0\ap -d_t E_\G. \label{nn163}
\eeq
It is a Lagrangian integral of motion in accordance with Theorem
\ref{035'}.

Since any vector field $\up$ (\ref{j15'}) can be represented as
the sum (\ref{gm490}) of the reference frame $\G$ (\ref{gm489})
and a vertical vector field $v$, the symmetry current (\ref{m225})
along the vector field $\up$ (\ref{j15'}) is a sum
\be
\cJ_\up=E_\G +\cJ_v
\ee
of the  Noether current $\cJ_v$ (\ref{z384}) along a vertical
vector field $v$ and the energy function $E_\G$ (\ref{m228})
relative to a reference frame $\G$. Conversely, energy functions
relative to different reference frames $\G$ and $\G'$ differ from
each other in the Noether current (\ref{z384}) along a vertical
vector field $\G'-\G$:
\mar{nn170}\beq
E_\G-E_{\G'}=\pi_i(\G'^i-\G^i)=\cJ_{\G-\G'}. \label{nn170}
\eeq

\begin{example} \label{nn160} \mar{nn160}
Given a configuration space $Q$ of a mechanical system and the
connection $\G$ (\ref{gm489}) on $Q\to\mathbb R$, let us consider
a quadratic Lagrangian
\mar{cqg48}\beq
L=\frac12m_{ij}(t,q^k)( q^i_t-\G^i)(q^j_t-\G^j)dt, \label{cqg48}
\eeq
where $m_{ij}$ is a non-degenerate positive-definite fibre metric
in the vertical tangent bundle $VQ\to Q$. It is called the mass
tensor.  Such a Lagrangian is globally defined owing to the linear
transformation laws of the relative velocities $\ol q^i_\G$
(\ref{nn161}). Let $q^i_\G$ be fibre coordinates adapted to a
reference frame $\G$. Then the Lagrangian (\ref{cqg48}) reads
\mar{jp71}\beq
L=\frac12\ol m_{ij}(q^k)q^i_{\G t} q^j_{\G t} dt. \label{jp71}
\eeq
Since coordinates $q^i_\G$ possess time-independent transition
functions, let us assume that a mass tensor $\ol m_{ij}$ is
independent of time. It is readily observed that, in this case, a
horizontal vector field $\G\dr_t=\G=\dr_t$ is an exact symmetry of
the Lagrangian $L$ (\ref{jp71}) that leads to a weak conservation
law of the canonical energy function (\ref{nn162}):
\mar{nn164}\beq
E_L=\frac12\ol m_{ij}(q^k)q^i_{\G t} q^j_{\G t} dt. \label{nn164}
\eeq
With respect to arbitrary bundle coordinates $(t,q^i)$ on $Q$, the
energy function (\ref{nn164}) takes a form
\be
E_\G=\frac12m_{ij}(t,q^k)( q^i_t-\G^i)(q^j_t-\G^j).
\ee
This is an energy function relative to a reference frame $\G$.
\end{example}

\begin{example} \label{nn167} \mar{nn167}
Let us consider a one-dimensional motion of a point mass $m_0$
subject to friction. It is described by the dynamic equation
\be
m_0q_{tt}=-kq_t, \qquad k> 0,
\ee
on a configuration space $\mathbb R^2\to\mathbb R$ coordinated by
$(t,q)$. This equation is a Lagrange equation of a Lagrangian
\be
L=\frac12 m_0\exp\left[\frac{k}{m_0}t\right]q_t^2 dt,
\ee
termed the Havas Lagrangian \cite{book10,rie}. It is readily
observed that the Lie derivative of this Lagrangian along a vector
field
\beq
\G=\dr_t- \frac12\frac{k}{m_0}q\dr_q \label{gm23}
\eeq
vanishes. Consequently, we have the conserved energy function
(\ref{m228}) with respect to the reference frame $\G$
(\ref{gm23}). This energy function reads
\be
E_\G=\frac12m_0\exp\left[\frac{k}{m_0}t\right]
q_t\left(q_t+\frac{k}{m_0}q\right).
\ee
\end{example}

In Section 9, an example of the global Kepler problem is analyzed
in details.

\section{Noether's third theorem: Gauge symmetries}

In mechanics, we follow general definition of gauge symmetries of
Lagrangian theory on fibre bundles \cite{book09,gauge09,sard14}.
It is given by a vector field
\mar{gg2'}\beq
u=\left(\op\sum_{0\leq|\La|\leq m}
u^{i\La}_a(t,q^j_\Si)\chi^a_\La\right)\dr_i. \label{gg2'}
\eeq
on a configuration space $Q$ which depends on sections $\chi$ of
some vector bundle $E\to\mathbb R$.

If a Lagrangian $L$ admits the gauge symmetry $u$ (\ref{gg2'}),
the weak conservation law (\ref{d22f2}) of the corresponding
symmetry current $\cJ_u$ (\ref{am225'}) holds. Because gauge
symmetries depend on derivatives of gauge parameters, all gauge
conservation laws in first order Lagrangian mechanics possess the
following peculiarity.

\begin{theorem} \label{supp2} \mar{supp2}
If $u$ (\ref{gg2'}) is a gauge symmetry of a first order
Lagrangian $L$, the corresponding symmetry current $\cJ_u$
(\ref{am225'}) vanishes on-shell, i.e., $\cJ\ap 0$.
\end{theorem}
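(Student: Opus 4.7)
The plan is to expand the off-shell identity coming from the gauge-symmetry condition as a polynomial identity in the jet coordinates of the gauge parameter $\chi$, and then read off a recursive system of relations that forces every coefficient of $\cJ_u$ to vanish on the prolonged Lagrange equation.

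First I apply the first variational formula (\ref{J4'}) to the vertical vector field $u$ of (\ref{gg2'}). Since $u^t=0$ and $\bL_{J^1u}L=d_t\sigma\,dt$ by Definition \ref{nn108}, the formula collapses to the off-shell identity
\[
d_t\cJ_u=u^i\cE_i,\qquad \cJ_u=\sigma-\pi_i u^i,
\]
which is the specialisation of (\ref{am225'}) to $u^t=0$.

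Next I substitute the gauge ansatz $u^i=\sum_{k=0}^{m}u^{i,k}_a\chi^a_{(k)}$ and expand
\[
\cJ_u=\sum_{k=0}^{M}J^k_a\chi^a_{(k)},
\]
where the coefficients $J^k_a$ are functions on $J^1Q$; the first-order character of $L$ guarantees that $\pi_i$, and hence $\cJ_u$, live on $J^1Q$ and that $d_t$ raises the $\chi$-jet order by exactly one, so the expansion is polynomial and finite. Because $\chi^a$ is an arbitrary section of $E\to\mathbb R$, the jet values $\chi^a_{(k)}$ may be prescribed independently at any point, so $d_t\cJ_u=u^i\cE_i$ must hold as a polynomial identity in the $\chi^a_{(k)}$. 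Matching the coefficient of $\chi^a_{(k)}$ on both sides yields the chain
\[
J^{k-1}_a+d_tJ^k_a=u^{i,k}_a\cE_i,\qquad k=0,1,\ldots,m,
\]
with the convention $J^{-1}_a=0$, while matching the top-order coefficient $\chi^a_{(M+1)}$ forces $M\le m-1$ together with $J^M_a=u^{i,M+1}_a\cE_i$.

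The last step is a downward induction along this chain. The top identity already displays $J^M_a$ as a multiple of $\cE_i$, which vanishes on the prolonged shell. Solving the chain recursively,
\[
J^{k-1}_a=u^{i,k}_a\cE_i-d_tJ^k_a,
\]
expresses each $J^{k-1}_a$ as a linear combination of $\cE_i$ and its total time derivatives, all of which vanish on the infinite prolongation of $\gE_L$. Hence $J^k_a\ap 0$ for every $k$, and therefore $\cJ_u=\sum_k J^k_a\chi^a_{(k)}\ap 0$, as asserted. The main obstacle in this argument is justifying the coefficient comparison: one must be sure that the jets $\chi^a_{(k)}$ can genuinely be treated as formal independent variables, which rests on the freedom to realise any prescribed jet of $\chi$ at any point by a suitable choice of section and on the fact that the $\chi$-expansion of $\cJ_u$ terminates thanks to the first-order structure of the Lagrangian.
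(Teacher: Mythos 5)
Your argument is correct and follows essentially the same route as the paper's own proof: expand the current $\cJ_u$ in the jets of the gauge parameter, use the first variational formula to obtain $d_t\cJ_u=u^i\cE_i$ off-shell, match coefficients of the independent jets $\chi^a_{(k)}$ to get a descending chain of identities, and conclude by downward recursion that every coefficient is a combination of $\cE_i$ and its total derivatives, hence vanishes on the prolonged shell. The only cosmetic differences are that you keep the divergence term $\si$ explicit and index by the derivative order $k$ rather than by multi-indices, neither of which changes the substance.
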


\begin{proof}
Let a gauge symmetry $u$ be at most of jet order $N$ in gauge
parameters. Then the current $\cJ_u$ is decomposed into a sum
\mar{g2g}\ben
\cJ_u= \op\sum_{1<|\La|\leq N} J^\La_a\chi^a_\La + J^t_a\chi^a_t
+J_a\chi^a.\label{g2g}
\een
The first variational formula (\ref{J4'}) takes a form
\be
0=\left[\op\sum_{|\La|=1}^N u^i{}_a^\La\chi^a_\La
+u^i_a\chi^a\right]\cE_i - d_t\left(\op\sum_{|\La|=1}^N
J^\La_a\chi^a_\La +J_a\chi^a\right).
\ee
It falls into a set of equalities for each $\chi^a_{t\La}$,
$\chi^a_\La$, $|\La|=1,\ldots,N$, and $\chi^a$ as follows:
\mar{g4g,-7}\ben
&& 0=J^\La_a, \qquad |\La|=N, \label{g4g}\\
&& 0=-u_a^{it\La}\cE_i+ J^\La_a
+d_t J^{t\La}_a, \qquad 1\leq |\La|<N, \label{g5g}\\
&& 0= -u_a^{it}\cE_i + J_a + d_t J^t_a,
\label{g6g}\\
&& 0= -u^i_a\cE_i + d_tJ_a. \label{g7g}
\een
With the equalities (\ref{g4g}) -- (\ref{g6g}), the decomposition
(\ref{g2g}) takes a form
\be
\cJ_u= \op\sum_{1<|\La|< N}[(u_a^{it\La}\cE_i - d_t
J^{t\La}_a]\chi^a_\La+ (u_a^{itt}\cE_i -d_t J^{tt}_a)\chi^a_t +
(u_a^{it}\cE_i +- d_t J^t_a)\chi^a.
\ee
A direct computation leads to the expression
\mar{g8g}\ben
&& \cJ_u=\left(\op\sum_{1\leq |\La|<
N}u_a^{it\La}
\chi^a_\La+ u_a^{it} \chi^a\right)\cE_i -\label{g8g}\\
&& \qquad \left(\op\sum_{1\leq |\La|< N}d_t
J^{t\La}_a\chi^a_\La+ d_t J^t_a\chi^a\right). \nonumber
\een
The first summand of this expression vanishes on-shell. Its second
one contains the terms $d_t J^\La_a$, $|\La|=1,\ldots, N$. By
virtue of the equalities (\ref{g5g}), every $d_t J^\La_a$,
$|\La|<N$, is expressed in the terms vanishing on-shell and the
term $d_td_t J^{t\La}_a$. Iterating the procedure and bearing in
mind the equality (\ref{g4g}), one can easily show that the second
summand of the expression (\ref{g8g}) also vanishes on-shell.
Thus, a current $\cJ_u$ vanishes on-shell.
\end{proof}

Let us note that the statement of Theorem \ref{supp2} is a
particular case of so-called Noether's third theorem that currents
of gauge symmetries in Lagrangian theory are reduced to a
superpotential \cite{fat94,book09,got92,gauge09} because a
superpotential equals zero on $X=\mathbb R$ .

\section{Non-autonomous Hamiltonian mechanics}

As was mentioned above, a Hamiltonian formulation of
non-autonomous non-relativistic mechanics is similar to covariant
Hamiltonian field theory on fibre bundles
\cite{book09,sard15,book16} in the particular case of fibre
bundles over $\mathbb R$ \cite{book10,sard98,sard132,book15}.

In accordance with the Legendre map (\ref{a303}) and the
homogeneous Legendre map (\ref{N41'}), a phase space and a
homogeneous phase space of mechanics on a configuration bundle
$Q\to \mathbb R$ are the vertical cotangent bundle $V^*Q$ and the
cotangent bundle $T^*Q$ of $Q$, respectively.

A key point is that a non-autonomous Hamiltonian system of $k$
degrees of freedom on a phase space $V^*Q$ is equivalent both to
some autonomous symplectic Hamiltonian system of $k+1$ degrees of
freedom on a homogeneous phase space $T^*Q$ (Theorem \ref{09121})
and to a particular first order Lagrangian system with the
characteristic Lagrangian (\ref{Q33}) on a configuration space
$V^*Q$.

The cotangent bundle $T^*Q$ is endowed with holonomic coordinates
$(t,q^i,p_0,p_i)$, possessing the transition functions
(\ref{2.3a}). It admits the canonical Liouville form
(\ref{nn237}):
\mar{N43a}\beq
\Xi_T=p_0dt + p_idq^i, \label{N43a}
\eeq
the canonical symplectic form (\ref{m83}):
\mar{m91'}\beq
\Om_T=d\Xi_T=dp_0\w dt +dp_i\w dq^i, \label{m91'}
\eeq
and the corresponding canonical Poisson bracket (\ref{nn238}):
\mar{m116}\beq
\{f,g\}_T =\dr^0f\dr_tg - \dr^0g\dr_tf +\dr^if\dr_ig-\dr^ig\dr_if,
\quad f,g\in C^\infty(T^*Q). \label{m116}
\eeq

There is the canonical one-dimensional affine bundle
(\ref{b418a}):
\mar{nn169}\beq
\zeta:T^*Q\to V^*Q. \label{nn169} \eeq
 A glance at the
transformation law (\ref{2.3a}) shows that it is a trivial affine
bundle. Indeed, given a global section $h$ of $\zeta$, one can
equip $T^*Q$ with a global fibre coordinate
\mar{09151}\beq
I_0=p_0-h, \qquad I_0\circ h=0, \label{09151}
\eeq
possessing the identity transition functions. With respect to
coordinates
\mar{09150}\beq
(t,q^i,I_0,p_i), \qquad i=1,\ldots,m, \label{09150}
\eeq
the fibration (\ref{nn169}) reads
\be
\zeta: \mathbb R\times V^*Q \ni (t,q^i,I_0,p_i)\to (t,q^i,p_i)\in
V^*Q,
\ee
where $(t,q^i,p_i)$ are holonomic coordinates on the vertical
cotangent bundle $V^*Q$.

Let us consider a subring of $C^\infty(T^*Q)$ which comprises the
pull-back $\zeta^*f$ onto $T^*Q$ of functions $f$ on the vertical
cotangent bundle $V^*Q$ by the fibration $\zeta$ (\ref{nn169}).
This subring is closed under the Poisson bracket (\ref{m116}).
Then by virtue of the well known theorem \cite{book10,vais}, there
exists a degenerate coinduced Poisson bracket
\mar{m72}\beq
\{f,g\}_V = \dr^if\dr_ig-\dr^ig\dr_if, \qquad f,g\in
C^\infty(V^*Q), \label{m72}
\eeq
on  a phase space $V^*Q$ such that
\beq
\zeta^*\{f,g\}_V=\{\zeta^*f,\zeta^*g\}_T.\label{m72'}
\eeq
Holonomic coordinates on $V^*Q$ are canonical for the Poisson
structure (\ref{m72}).

With respect to the Poisson bracket (\ref{m72}), the Hamiltonian
vector fields  of functions on $V^*Q$  read
\mar{m73,094}\ben
&& \vt_f = \dr^if\dr_i- \dr_if\dr^i, \qquad f\in C^\infty(V^*Q),
\label{m73}\\
&& \{f,f'\}_V=\vt_f\rfloor df', \qquad [\vt_f,\vt_{f'}]=\vt_{\{f,f'\}_V}. \label{094}
\een
They are vertical vector fields on $V^*Q\to \mathbb R$.
Accordingly, the characteristic distribution of the Poisson
structure (\ref{m72}) is the vertical tangent bundle $VV^*Q\subset
TV^*Q$ of a fibre bundle $V^*Q\to \mathbb R$. The corresponding
symplectic foliation on a phase space $V^*Q$ coincides with the
fibration $V^*Q\to \mathbb R$.

However, the Poisson structure (\ref{m72}) fails to provide any
dynamic equation on a phase space $V^*Q\to\mathbb R$ because
Hamiltonian vector fields (\ref{m73}) of functions on $V^*Q$ are
vertical vector fields. Hamiltonian dynamics on $V^*Q$ is
described as a particular Hamiltonian dynamics on fibre bundles
\cite{book10,sard98,book15}.

A Hamiltonian  on a phase space $V^*Q\to\mathbb R$ is defined as a
global section
\mar{ws513}\beq
h:V^*Q\to T^*Q, \qquad p_0\circ h=\cH(t,q^j,p_j), \label{ws513}
\eeq
of the affine bundle $\zeta$ (\ref{nn169}). Given the Liouville
form $\Xi_T$ (\ref{N43a}) on $T^*Q$, this section yields the
pull-back Hamiltonian form
\mar{b4210}\beq
H=(-h)^*\Xi_T= p_k dq^k -\cH dt  \label{b4210}
\eeq
on $V^*Q$. This is the well-known invariant of Poincar\'e--Cartan
\cite{book10}.

It should be emphasized that, in contrast with a Hamiltonian in
autonomous mechanics, the Hamiltonian $\cH$ (\ref{ws513}) is not a
function on $V^*Q$, but it obeys the transformation law
\be
\cH'(t,q'^i,p'_i)=\cH(t,q^i,p_i)+ p'_i\dr_t q'^i.
\ee

\begin{remark} \label{ws512}
Any connection $\G$ (\ref{a1.10}) on a configuration bundle
$Q\to\mathbb R$ defines the global section $h_\G=p_i\G^i$
(\ref{ws513}) of the affine bundle $\zeta$ (\ref{nn169}) and the
corresponding Hamiltonian form
\mar{ws515}\beq
H_\G= p_k dq^k -\cH_\G dt= p_k dq^k -p_i\G^i dt. \label{ws515}
\eeq
Furthermore, given a connection $\G$, any Hamiltonian form
(\ref{b4210}) admits a splitting
\mar{m46'}\beq
H= H_\G -\cE_\G dt, \label{m46'}
\eeq
where
\mar{xx60}\beq
\cE_\G=\cH-\cH_\G=\cH- p_i\G^i \label{xx60}
\eeq
is a function on $V^*Q$. It is called the Hamiltonian function
relative to a reference frame $\G$. With respect to the
coordinates adapted to a reference frame $\G$, we have
$\cE_\G=\cH$. Given different reference frames $\G$ and $\G'$, the
decomposition (\ref{m46'}) leads at once to a relation
\be
 \cE_{\G'}=\cE_\G + \cH_\G -\cH_{\G'}=\cE_\G + (\G^i
-\G'^i)p_i
\ee
(cf. (\ref{nn170})) between the Hamiltonian functions with respect
to different reference frames.
\end{remark}

Given the Hamiltonian form $H$ (\ref{b4210}), there exists a
unique Hamiltonian connection
\mar{z3}\beq
\g_H=\dr_t + \dr^k\cH\dr_k- \dr_k\cH\dr^k. \label{z3}
\eeq
on $V^*Q\to \mathbb R$ such that $\g_H\rfloor dH=0$. It  yields a
first order dynamic Hamilton equation
\mar{z20a}\beq
q^k_t=\dr^k\cH, \qquad p_{tk}=-\dr_k\cH \label{z20a}
\eeq
on $V^*Q\to\mathbb R$, where $(t,q^k,p_k,q^k_t,p_{tk})$ are
adapted coordinates on the first order jet manifold $J^1V^*Q$ of
$V^*Q\to\mathbb R$.

A classical solution of the Hamilton equation (\ref{z20a}) is an
integral section $r$ for the connection $\g_H$ (\ref{z3}).

We agree to call $(V^*Q,H)$ the Hamiltonian system of $k=\di Q-1$
degrees of freedom.

In order to describe evolution of a Hamiltonian system at any
instant, the Hamilton connection $\g_H$ (\ref{z3}) is assumed to
be complete, i.e., it is an Ehresmann connection (Remark
\ref{047}). In this case, the Hamilton equation (\ref{z20a})
admits a unique global classical solution through each point of a
phase space $V^*Q$. By virtue of Theorem \ref{compl}, there exists
a trivialization of a fibre bundle $V^*Q\to \mathbb R$ (not
necessarily compatible with its fibration $V^*Q\to Q$) such that
\be
\g_H=\dr_t, \qquad H=\wt p_id\wt q^i
\ee
with respect to the associated bundle coordinates $(t,\wt q^i, \wt
p_i)$. A direct computation shows that the Hamilton vector field
$\g_H$ (\ref{z3}) is an infinitesimal generator of a one-parameter
group of automorphisms of a Poisson manifold $(V^*Q,\{,\}_V)$.
Then one can show that $(t,\wt q^i,\wt p_i)$ are canonical
coordinates for the Poisson bracket $\{,\}_V$ \cite{book98}. Since
$\cH=0$, the Hamilton equation (\ref{z20a}) in these coordinates
takes a form
\be
\wt q^i_t=0, \qquad \wt p_{ti}=0,
\ee
i.e., $(t,\wt q^i,\wt p_i)$ are the initial data coordinates.

\begin{remark} \label{nn182}
In applications, the condition of the Hamilton connection $\g_H$
(\ref{z3}) to be complete need not holds on the entire phase space
(Section 9). In this case, one consider its subsets, and sometimes
we have different Hamiltonian systems on different subsets of
$V^*Q$.
\end{remark}

As was mentioned above,  one can associate to any non-autonomous
Hamiltonian system on a phase space $V^*Q$ an equivalent
autonomous symplectic Hamiltonian system on the cotangent bundle
$T^*Q$ as follows (Theorem \ref{09121}).

Given a Hamiltonian system $(V^*Q,H)$, its Hamiltonian $\cH$
(\ref{ws513}) defines a function
\mar{mm16}\beq
\cH^*=\dr_t\rfloor(\Xi_T-\zeta^* (-h)^*\Xi_T))=p_0+h=p_0+\cH
\label{mm16}
\eeq
on $T^*Q$. Let us regard $\cH^*$ (\ref{mm16}) as a Hamiltonian of
an autonomous Hamiltonian system on a symplectic manifold
$(T^*Q,\Om_T)$. The corresponding autonomous Hamilton equation on
$T^*Q$ takes a form
\mar{z20'}\beq
\dot t=1, \qquad \dot p_0=-\dr_t\cH, \qquad \dot q^i=\dr^i\cH,
\qquad \dot p_i=-\dr_i\cH. \label{z20'}
\eeq

\begin{remark} \label{0170}
Let us note that the splitting $\cH^*=p_0+\cH$ (\ref{mm16}) is ill
defined. At the same time, any reference frame $\G$ yields the
decomposition
\be
\cH^*=(p_0+\cH_\G) + (\cH-\cH_\G) = \cH^*_\G +\cE_\G,
\ee
where $\cH_\G$ is the Hamiltonian (\ref{ws515}) and $\cE_\G$
(\ref{xx60}) is the Hamiltonian function relative to a reference
frame $\G$.
\end{remark}

A Hamiltonian vector field $\vt_{\cH^*}$ of the function $\cH^*$
(\ref{mm16}) on $T^*Q$ is
\be
\vt_{\cH^*}=\dr_t -\dr_t\cH\dr^0+ \dr^i\cH\dr_i- \dr_i\cH\dr^i,
\qquad \vt_{\cH^*}\rfloor\Om_T=-d\cH^*.
\ee
Written relative to the coordinates (\ref{09150}), this vector
field reads
\beq
\vt_{\cH^*}=\dr_t + \dr^i\cH\dr_i- \dr_i\cH\dr^i. \label{z5'}
\eeq
It is identically projected onto the Hamiltonian connection $\g_H$
(\ref{z3}) on $V^*Q$ such that
\beq
\zeta^*(\bL_{\g_H}f)=\{\cH^*,\zeta^*f\}_T, \qquad f\in
C^\infty(V^*Q). \label{ws525}
\eeq
Therefore, the Hamilton equation (\ref{z20a}) is equivalent to the
autonomous Hamilton equation (\ref{z20'}).

Obviously, the Hamiltonian vector field $\vt_{\cH^*}$ (\ref{z5'})
is complete if the Hamilton vector field $\g_H$ (\ref{z3}) is so.

Thus, the following has been proved \cite{dew,book10,mang00}.

\begin{theorem} \label{09121} \mar{09121}  A non-autonomous Hamiltonian system $(V^*Q,H)$
of $k$ degrees of freedom is equivalent to an autonomous
Hamiltonian system $(T^*Q,\cH^*)$ of $k+1$ degrees of freedom on a
symplectic manifold $(T^*Q,\Om_T)$ whose Hamiltonian is the
function $\cH^*$ (\ref{mm16}).
\end{theorem}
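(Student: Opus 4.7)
The plan is to assemble the pieces already developed in the text: since the function $\cH^*$ in (\ref{mm16}) is defined intrinsically and $(T^*Q,\Om_T)$ is a genuine symplectic manifold of dimension $2(k+1)$ where $k+1=\di Q$, the pair $(T^*Q,\cH^*)$ is by definition an autonomous symplectic Hamiltonian system. What must be shown is that its dynamics is equivalent, via the projection $\zeta$, to the Hamilton equation (\ref{z20a}) on $V^*Q$.

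First I would verify that $\cH^*=\dr_t\rfloor(\Xi_T-\zeta^*(-h)^*\Xi_T)$ is genuinely a smooth function on $T^*Q$. This is transparent from the intrinsic form: $\Xi_T$ is the canonical Liouville form (\ref{N43a}), $\zeta^*(-h)^*\Xi_T$ is a well-defined one-form on $T^*Q$, and contraction with the pull-back of $\dr_t$ yields a function. Although in coordinates the splitting $\cH^*=p_0+\cH$ separately involves $p_0$ (with the nontrivial transformation rule (\ref{2.3a})) and $\cH$ (which is not a scalar but a section, by (\ref{ws513})), the two non-tensorial pieces exactly cancel by virtue of the Hamiltonian transformation law stated just after (\ref{b4210}).

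Next I would compute the Hamiltonian vector field $\vt_{\cH^*}$ from $\vt_{\cH^*}\rfloor\Om_T=-d\cH^*$ using (\ref{m91'}); the result is the coordinate expression displayed immediately before (\ref{z5'}), and in the adapted coordinates (\ref{09150}) it reduces to (\ref{z5'}). The autonomous Hamilton equation associated with $\vt_{\cH^*}$ is then precisely (\ref{z20'}). To obtain equivalence I would verify the intertwining identity (\ref{ws525}) by applying both sides to an arbitrary $f\in C^\infty(V^*Q)$, using the bracket compatibility (\ref{m72'}) and the canonical bracket formula (\ref{m116}); this shows $\vt_{\cH^*}$ is $\zeta$-related to $\g_H$ (\ref{z3}). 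Projecting (\ref{z20'}) by $\zeta$ then recovers (\ref{z20a}), and conversely any classical solution $r$ of (\ref{z20a}) lifts to a solution of (\ref{z20'}) by prescribing $p_0$ via integration of the $\dot p_0$-equation along $r$. Completeness of $\vt_{\cH^*}$ is equivalent to completeness of $\g_H$, since the fibres of $\zeta$ are copies of $\mathbb R$ and the fibrewise equation $\dot p_0=-\dr_t\cH$ is a non-autonomous linear ODE with global solutions.

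The main obstacle is really the first step: making precise the sense in which the formal replacement $p_0\mapsto -\cH$ produces a globally defined function on $T^*Q$ when neither $p_0$ nor $\cH$ is individually a scalar. Once the intrinsic formula (\ref{mm16}) is recognised as correctly packaging this cancellation, the remainder is an unobstructed verification that the diagram of vector fields and Hamilton equations commutes with $\zeta$.
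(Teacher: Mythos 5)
Your proposal follows essentially the same route as the paper: define $\cH^*$ by the intrinsic formula (\ref{mm16}), compute its Hamiltonian vector field $\vt_{\cH^*}$ from $\vt_{\cH^*}\rfloor\Om_T=-d\cH^*$, observe that it is $\zeta$-projected onto the Hamilton connection $\g_H$ via the intertwining relation (\ref{ws525}), and conclude the equivalence of the Hamilton equations (\ref{z20a}) and (\ref{z20'}). The only additions are worthwhile but minor fillings-in of details the paper leaves implicit, namely the explicit check that the non-tensorial pieces of $p_0+\cH$ cancel globally and the lifting of solutions of (\ref{z20a}) by integrating the $\dot p_0$-equation along the fibres of $\zeta$.
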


We agree to call $(T^*Q,\cH^*)$ the homogeneous Hamiltonian system
and $\cH^*$ (\ref{mm16}) the homogeneous Hamiltonian.

It is readily observed that the Hamiltonian form $H$ (\ref{b4210})
also is the Poincar\'e--Cartan form (\ref{303'}) of the
characteristic Lagrangian
\mar{Q33}\beq
L_H=h_0(H) = (p_iq^i_t - \cH)dt \label{Q33}
\eeq
on the jet manifold $J^1V^*Q$ of $V^*Q\to\mathbb R$.

\begin{remark} \label{0110}
In fact, the Lagrangian (\ref{Q33}) is the pull-back onto
$J^1V^*Q$ of an exterior form $L_H$ on a product $V^*Q\times_Q
J^1Q$.
\end{remark}

The Lagrange operator (\ref{305'}) associated to the
characteristic Lagrangian $L_H$ (\ref{Q33}) reads
\be
\cE_H=\dl L_H=[(q^i_t-\dr^i\cH) dp_i -(p_{ti}+\dr_i\cH) dq^i]\w
dt.
\ee
The corresponding Lagrange equation (\ref{b327'}) is of first
order, and it coincides with the Hamilton equation (\ref{z20a}) on
$V^*Q$.

Due to this fact, Hamiltonian mechanics can be formulated as a
specific Lagrangian mechanics on a configuration space $V^*Q$.

In particular, let
\mar{z372}\beq
u=u^t\dr_t + u^i(t,q^j)\dr_i, \qquad u^t=0,1, \label{z372}
\eeq
be a vector field on a configuration space $Q$. Its canonical
functorial lift onto the cotangent bundle $T^*Q$ is
\beq
\wt u=u^t\dr_t + u^i\dr_i - p_j\dr_i u^j \dr^i. \label{gm513}
\eeq
This vector field is identically projected onto a vector field,
also given by the expression (\ref{gm513}), on a phase space
$V^*Q$ as a base of the trivial fibre bundle (\ref{nn169}). Then
we have the equality
\be
\bL_{\wt u}H= \bL_{J^1\wt u}L_H= (-u^t\dr_t\cH+p_i\dr_tu^i
-u^i\dr_i\cH +p_i\dr_j u^i\dr^j\cH)dt
\ee
for any Hamiltonian form $H$ (\ref{b4210}). This equality enables
us to study conservation laws in Hamiltonian mechanics similarly
to those in Lagrangian mechanics (Section 7).

Lagrangian and Hamiltonian formulations of mechanics as like as
those of field theory fail to be equivalent, unless Lagrangians
are hyperregular \cite{book09,sard15,book16}. The comprehensive
relations between Lagrangian and Hamiltonian systems can be
established in the case of almost regular Lagrangians
\cite{book10,mang00,sard98}. This is a particular case of the
relations between Lagrangian and covariant Hamiltonian theories on
fibre bundles \cite{book09,sard15}.

If the Lagrangian $L$ (\ref{23f2'}) (Definition \ref{d11}) is
hyperregular, it admits a unique associated Hamiltonian form
\beq
H=p_idq^i -(p_i\wh L^{-1i} - \cL(t, q^j,\wh L^{-1j}))dt.
\label{cc311b}
\eeq
Let $s$ be a classical solution of the Lagrange equation
(\ref{b327'}) for a Lagrangian $L$. A direct computation shows
that $\wh L\circ J^1s$ is a classical solution of the Hamilton
equation (\ref{z20a}) for the Hamiltonian form $H$ (\ref{cc311b}).
Conversely, if $r$ is a classical solution of the Hamilton
equation (\ref{z20a}) for the Hamiltonian form $H$ (\ref{cc311b}),
then $s=\pi_\Pi\circ r$ is a solution of the Lagrange equation
(\ref{b327'}) for $L$.

Let us restrict our consideration to almost regular Lagrangians
$L$ (Definition \ref{d11}).

\begin{theorem}\label{d3.23}
Let a section $r$ of $V^*Q\to \mathbb R$ be a  classical solution
of the Hamilton equation (\ref{z20a}) for a Hamiltonian form $H$
weakly associated to an almost regular Lagrangian $L$
\cite{book10,book16}. If $r$ lives in the Lagrangian constraint
space $N_L$, a section $s=\pi\circ r$ of $\pi:Q\to \mathbb R$
satisfies the Lagrange equation (\ref{b327'}), while $\ol s=\wh
H\circ r$, where
\be
\wh H: V^*Q\ar_Q J^1Q, \qquad q^i_t\circ\wh H=\dr^i\cH
\ee
is a Hamiltonian map, obeys the Cartan equation (\ref{b336c}).
\end{theorem}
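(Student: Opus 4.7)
The strategy is to exploit the weak association of $H$ with $L$ in order to convert the Hamilton equation (\ref{z20a}) along $r$ into the Lagrange equation (\ref{b327'}) and the Cartan equation (\ref{b336c}) along the lifted section $\ol s=\wh H\circ r$. The starting observation is that, because $r$ takes values in $N_L=\wh L(J^1Q)$ and $H$ is weakly associated to $L$, the composition $\wh L\circ\wh H$ restricts to the identity on $N_L$. Hence $\wh L\circ\ol s=r$, which in holonomic coordinates reads $p_i\circ r=\pi_i\circ\ol s$ with $\pi_i=\dr^t_i\cL$, and further projecting to $Q$ gives the section $s=\pi\circ r$ lifted by $\ol s$ into $J^1Q$.

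Next I would verify that $\ol s$ equals the first jet prolongation $J^1 s$. The $q^i_t$-component of $\ol s$ is $\dr^i\cH\circ r$ by the coordinate description of $\wh H$, while the first half of (\ref{z20a}) yields $d_t s^i=q^i_t\circ r=\dr^i\cH\circ r$. Hence $\ol s=J^1 s$. Differentiating the identity $p_i\circ r=\pi_i\circ J^1 s$ in $t$ and invoking the second half of (\ref{z20a}) then gives $d_t(\pi_i\circ J^1 s)=-\dr_i\cH\circ r$. The decisive input of weak association is the pointwise identity $\dr_i\cL|_{J^1 s}=-\dr_i\cH|_r$ along $N_L$, obtained by differentiating the defining relation $\cH\circ\wh L=\pi_i q^i_t-\cL$ in the $q^i$-directions (the $q^i_t$-directions contribute nothing, since $p_i=\pi_i$ on $N_L$ makes the would-be $q^i_t$-derivative vanish). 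Combining these identities yields $(\dr_i-d_t\dr^t_i)\cL|_{J^2 s}=0$, which is precisely (\ref{b327'}).

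For the Cartan equation (\ref{b336c}), since $\ol s=J^1 s$ the coordinates of $J^1\ol s$ satisfy $q^i_{(t)}-q^i_t=0$, so the first Cartan relation $\dr^t_i\pi_j(q^j_{(t)}-q^j_t)=0$ is automatic. The second Cartan relation then collapses to the Lagrange equation already established, so $\ol s$ solves (\ref{b336c}) in the sense of Remark \ref{nn145}. The main obstacle is the identification $\dr_i\cL|_{J^1 s}=-\dr_i\cH|_r$: almost regularity of $L$, namely the fibred-manifold structure of $\wh L:J^1Q\to N_L$ with connected fibres (Definition \ref{d11}), is exactly what guarantees that weak association encodes this pointwise equality on $N_L$ rather than only a coarser statement modulo fibres of $\wh L$. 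Once this identity is available, the remainder of the argument is a direct unwinding of the coordinate forms of $\wh H$ and (\ref{z20a}).
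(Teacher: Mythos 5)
The paper itself offers no proof of Theorem \ref{d3.23}: it is quoted from \cite{book10,book16}, so your argument can only be measured against the standard proof in those references, which it reproduces in all essentials --- the identification $\ol s=J^1s$ from the first half of (\ref{z20a}), the relation $p_i\circ r=\pi_i\circ\ol s$ from $\wh L\circ\wh H|_{N_L}=\id$, the conversion of the second half of (\ref{z20a}) into the Lagrange equation via $\dr_i\cL|_{J^1s}=-\dr_i\cH|_r$, and the observation that on a holonomic section the Cartan equation (\ref{b336c}) collapses to the Lagrange equation. The one step you should state more carefully is the source of the identity $\cH\circ\wh L=\pi_iq^i_t-\cL$, which you call ``the defining relation''. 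It is not: weak association only asserts $\cH=p_i\dr^i\cH-\cL\circ\wh H$ \emph{on} $N_L$, and to transport this to all of $J^1Q$ one needs the lemma that $\pi_iq^i_t-\cL$ (equivalently, the Poincar\'e--Cartan form $H_L$) is constant on the fibres of $\wh L$ --- a vertical vector $v^j\dr^t_j$ tangent to a fibre obeys $v^j\pi_{ji}=0$, whence $v^j\dr^t_j(\pi_iq^i_t-\cL)=v^j\pi_{ji}q^i_t=0$ --- together with the connectedness of those fibres; this, rather than the pointwise-versus-coarse dichotomy you invoke at the end, is exactly where almost regularity enters. Relatedly, differentiating $\cH\circ\wh L=\pi_iq^i_t-\cL$ in $q^i$ leaves the cross term $\dr_i\pi_j\,(q^j_t-\dr^j\cH\circ\wh L)$; it does not disappear ``because $p_i=\pi_i$ on $N_L$'', but because you evaluate at points of $\wh H(N_L)$, where $q^j_t=\dr^j\cH\circ\wh L$ holds by construction of $\ol s$ and by $\wh L\circ\wh H|_{N_L}=\id$. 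With these two clarifications your proof is complete and agrees with the cited one.
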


\begin{theorem}\label{d3.24}  Given an almost regular
Lagrangian $L$, let a section $\ol s$ of the jet bundle $J^1Q\to
\mathbb R$ be a solution of the Cartan equation  (\ref{b336c}).
Let $H$ be a Hamiltonian form weakly associated to $L$, and let
$H$ satisfy a relation
\beq
\wh H\circ \wh L\circ \ol s=J^1s, \label{2.36q}
\eeq
where $s$ is the projection of $\ol s$ onto $Q$. Then a section
$r=\wh L\circ \ol s$ of a fibre bundle $V^*Q\to \mathbb R$ is a
classical solution of the Hamilton equation (\ref{z20a}) for $H$.
\end{theorem}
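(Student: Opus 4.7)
I would verify the two components of the Hamilton equation (\ref{z20a}) directly, in local coordinates along $r=\wh L\circ\ol s$. Writing $\ol s(t)=(t,\,s^i(t),\,\ol s^i_t(t))$ in adapted coordinates on $J^1Q$ — where $\ol s^i_t(t)$ is a priori independent of $\dr_t s^i(t)$, since $\ol s$ need not be a holonomic prolongation of $s$ — one has $r(t)=(t,\,s^i(t),\,\pi_i(\ol s(t)))$, whose jet prolongation $J^1r$ has $q^i_t$-coordinate $\dr_t s^i$ and $p_{ti}$-coordinate $\dr_t[\pi_i\circ\ol s]$. The two equalities to check along $r$ are therefore $\dr_t s^i=\dr^i\cH\circ r$ and $\dr_t[\pi_i\circ\ol s]=-\dr_i\cH\circ r$.

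The first equation is a direct unpacking of the hypothesis (\ref{2.36q}). Using $p_i\circ\wh L=\pi_i$ and $q^i_t\circ\wh H=\dr^i\cH$, the condition $\wh H\circ\wh L\circ\ol s=J^1s$ equates the $q^i_t$-coordinates of both sides, giving $\dr^i\cH|_r=\dr_t s^i$.

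For the second equation I would combine the second Cartan equation (\ref{b336c}) with weak association. Pulling (\ref{b336c}) back along $J^1\ol s$, with the identifications $q^i_{(t)}\circ J^1\ol s=\dr_t s^i$ and $\wh d_t\pi_i|_{J^1\ol s}=\dr_t[\pi_i\circ\ol s]$, one obtains
\[
\dr_t[\pi_i\circ\ol s]=\dr_i\cL|_{\ol s}+\dr_i\pi_j|_{\ol s}\,(\dr_t s^j-\ol s^j_t).
\]
Weak association of $H$ to $L$ yields, on the constraint space $N_L$, the identity $\cH(t,q,\pi(t,q,q_t))=\pi_j(t,q,q_t)\,q_t^j-\cL(t,q,q_t)$, which is an identity on $J^1Q$ once lifted through $\wh L$. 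Differentiating in $q^i$ with $q_t$ held fixed, evaluating at $\ol s(t)$, and applying the first Hamilton equation $\dr^j\cH|_r=\dr_t s^j$, I get
\[
\dr_i\cH|_r=-\dr_i\cL|_{\ol s}-\dr_i\pi_j|_{\ol s}\,(\dr_t s^j-\ol s^j_t),
\]
which is exactly $-\dr_t[\pi_i\circ\ol s]$ by the previous display. Hence $p_{ti}\circ J^1r=-\dr_i\cH\circ r$, completing the verification.

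The main obstacle I foresee is handling the differentiation of the weak-association identity in the almost regular case, where $\wh L$ is only a fibred manifold onto $N_L$ and is not invertible. Concretely, one must verify that the expression $\dr_i\cL|_{\ol s(t)}+\dr_i\pi_j|_{\ol s(t)}\,(\dr_t s^j-\ol s^j_t)$ — viewed as a function on the fibre $\wh L^{-1}(r(t))$ — agrees with $-\dr_i\cH|_{r(t)}$ independently of the chosen preimage. The first Cartan equation in (\ref{b336c}) does this work: it forces $\dr_t s^j-\ol s^j_t$ at $\ol s(t)$ to lie in $\ker\pi_{ij}$, i.e.\ tangent to the fibre of $\wh L$ through $\ol s(t)$, which is precisely the consistency condition needed to read off the Hamiltonian derivative from the Lagrangian data along $\ol s$.
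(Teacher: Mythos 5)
Your proof is correct and is the standard argument (the paper itself states Theorem \ref{d3.24} without proof, deferring to \cite{book10,book16}): the first Hamilton equation is read off directly from (\ref{2.36q}), and the second follows by combining the second Cartan equation along $J^1\ol s$ with the $q^i$-derivative of the weak-association identity $\cH\circ\wh L=\pi_jq^j_t-\cL$, exactly as you do. The only quibble concerns your closing paragraph: the fibre-independence you worry about is automatic, since the left-hand side $\dr_i\cH|_{r(t)}$ is a well-defined number and you only ever evaluate the right-hand side at the single point $\ol s(t)$, so the first Cartan equation plays no real role in your argument --- what does need (and deserves) a word of justification is that $\pi_jq^j_t-\cL$ is constant on the connected fibres of $\wh L$ (its fibrewise differential is $-q^j_t\pi_{ij}dq^i_t$, which vanishes on $\ker \wh L_*$), which is what upgrades the weak-association condition on $N_L$ to the identity on $J^1Q$ that you differentiate.
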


A set of Hamiltonian forms $H$ weakly associated to an almost
regular Lagrangian $L$ is said to be complete  if, for each
classical solution $s$ of a Lagrange equation, there exists a
classical solution $r$ of a Hamilton equation for a Hamiltonian
form $H$ from this set such that $s=\pi_\Pi\circ r$. By virtue of
Theorem \ref{d3.24}, a set of weakly associated Hamiltonian forms
is complete if, for every classical solution $s$ of a Lagrange
equation for $L$, there exists a Hamiltonian form $H$ from this
set which fulfills the relation (\ref{2.36q}) where $\ol s=J^1s$,
i.e.,
\beq
\wh H\circ \wh L\circ J^1s=J^1s. \label{072}
\eeq

\section{Hamiltonian conservation laws: Noether's inverse first theorem}

As was mentioned above, integrals of motion in Lagrangian
mechanics can come from Lagrangian symmetries (Theorem
\ref{035'}), but not any integral of motion is of this type. In
Hamiltonian mechanics, all integrals of motion are conserved
currents (Theorem \ref{0150'}). One can think of this fact as
being Noether's inverse first theorem.

\begin{definition} \label{nn232} \mar{nn232} An integral of
motion of a Hamiltonian system $(V^*Q,H)$ is defined as a smooth
real function $\Phi$ on $V^*Q$ which is an integral of motion of
the Hamilton equation (\ref{z20a}) in accordance with Definition
\ref{026}, i.e., it satisfies the relation (\ref{027}).
\end{definition}

Since the Hamilton equation (\ref{z20a}) is the kernel of the
covariant differential $D_{\g_H}$, this relation $d_t\Phi\ap 0$ is
equivalent to the equality
\mar{ws516}\beq
\bL_{\g_H} \Phi=(\dr_t +\g_H^i\dr_i +\g_{Hi}\dr^i)\Phi =\dr_t\Phi
+\{\cH,\Phi\}_V=0, \label{ws516}
\eeq
i.e., the Lie derivative of $\Phi$ along the Hamilton connection
$\g_H$ (\ref{z3}) vanishes.

At the same time, it follows from Theorem \ref{082} that  a vector
field $\up$ on $V^*Q$ is a symmetry of the Hamilton equation
(\ref{z20a}) in accordance with Definition \ref{025} if and only
if $[\g_H,\up]=0$. Given the Hamiltonian vector field $\vt_\Phi$
(\ref{m73}) of $\Phi$ with respect to the Poisson bracket
(\ref{m72}), it is easily justified that
\mar{092}\beq
[\g_H,\vt_\Phi]=\vt_{\bL_{\g_H} \Phi}. \label{092}
\eeq
Thus, we can conclude the following.

\begin{theorem} \label{nn192} \mar{nn192}
The Hamiltonian vector field of an integral of motion is a
symmetry of the Hamilton equation (\ref{z20a}).
\end{theorem}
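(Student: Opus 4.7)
The plan is to combine the two characterizations established just before the theorem statement. By Definition~\ref{nn232} together with (\ref{ws516}), the condition that $\Phi\in C^\infty(V^*Q)$ is an integral of motion is equivalent to the vanishing of the Lie derivative $\bL_{\g_H}\Phi=0$. On the other hand, by Theorem~\ref{082} applied to the first order differential operator whose kernel is the Hamilton equation (\ref{z20a}), a vector field $\up$ on $V^*Q$ is a symmetry of (\ref{z20a}) precisely when the bracket $[\g_H,\up]$ vanishes. The theorem will follow immediately by feeding $\up=\vt_\Phi$ into this criterion and invoking the key identity (\ref{092}).

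More concretely, I would proceed as follows. First, recall the Hamiltonian vector field $\vt_\Phi=\dr^i\Phi\,\dr_i-\dr_i\Phi\,\dr^i$ from (\ref{m73}); it is a vertical vector field on $V^*Q\to\mathbb R$, and in particular a well-defined vector field on the phase space. Second, substitute $\up=\vt_\Phi$ into the identity (\ref{092}) to obtain
\[
[\g_H,\vt_\Phi]=\vt_{\bL_{\g_H}\Phi}.
\]
Third, since $\Phi$ is an integral of motion, $\bL_{\g_H}\Phi=0$, and the map $f\mapsto\vt_f$ is $\mathbb R$-linear with $\vt_0=0$; hence $[\g_H,\vt_\Phi]=0$. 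Finally, by the symmetry criterion $[\g_H,\up]=0$ cited above, this means $\vt_\Phi$ is a symmetry of the Hamilton equation (\ref{z20a}).

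There is essentially no hard step: given the machinery already in place (the Poisson-bracket formula for $\{f,f'\}_V$, the commutator rule (\ref{094}) that underlies (\ref{092}), and the characterization of symmetries via $[\g_H,\up]=0$), the proof reduces to a one-line substitution. If anything needs to be verified with a little care it is the identity (\ref{092}) itself, which is a direct consequence of $\bL_{\g_H}$ being a derivation on $C^\infty(V^*Q)$ together with the Jacobi identity for the Poisson bracket; however, this is already stated in the excerpt and may be used freely. Thus the main content of the theorem is really a repackaging of Definitions~\ref{nn232} and~\ref{025} through the bracket formula (\ref{092}).
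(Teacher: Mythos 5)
Your proposal is correct and follows essentially the same route as the paper: the paper likewise characterizes integrals of motion by $\bL_{\g_H}\Phi=0$ via (\ref{ws516}), characterizes symmetries of the Hamilton equation by $[\g_H,\up]=0$ via Theorem \ref{082}, and concludes by substituting into the identity (\ref{092}). Nothing further is needed.
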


Given a Hamiltonian system $(V^*Q,H)$, let $(T^*Q,\cH^*)$ be an
equivalent homogeneous Hamiltonian system. It follows from the
equality (\ref{ws525}) that
\beq
\zeta^*(\bL_{\g_H}\Phi)=\{\cH^*,\zeta^*\Phi\}_T =\zeta^*(\dr_t\Phi
+\{\cH,\Phi\}_V) \label{077a}
\eeq
for any function $\Phi\in C^\infty(V^*Q)$. This formula is
equivalent to that (\ref{ws516}).

\begin{theorem} \label{075} \mar{075} A function $\Phi\in
C^\infty(V^*Q)$ is an integral of motion of a Hamiltonian system
$(V^*Q,H)$ if and only if its pull-back $\zeta^*\Phi$ onto $T^*Q$
is an integral of motion of a homogeneous Hamiltonian system
$(T^*Q,\cH^*)$.
\end{theorem}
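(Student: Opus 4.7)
The plan is to reduce both sides of the claimed equivalence to the single identity (\ref{077a}), which is already recorded immediately before the statement. So the proof is essentially a three-line argument plus a careful unpacking of what ``integral of motion'' means on each side.

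First I would recall the two notions precisely. By Definition \ref{nn232}, $\Phi\in C^\infty(V^*Q)$ is an integral of motion of $(V^*Q,H)$ iff $d_t\Phi\ap 0$, which by (\ref{ws516}) is the same as $\bL_{\g_H}\Phi=\dr_t\Phi+\{\cH,\Phi\}_V=0$. For the homogeneous system $(T^*Q,\cH^*)$, which by Theorem \ref{09121} is an autonomous symplectic Hamiltonian system, the corresponding equation of motion has Hamiltonian vector field $\vt_{\cH^*}$, and a smooth function $\Psi$ on $T^*Q$ is an integral of motion iff $\bL_{\vt_{\cH^*}}\Psi=\{\cH^*,\Psi\}_T=0$ (applied here to $\Psi=\zeta^*\Phi$).

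Next I would invoke the key identity (\ref{077a}):
\[
\zeta^*(\bL_{\g_H}\Phi)=\{\cH^*,\zeta^*\Phi\}_T.
\]
This immediately gives one direction: if $\bL_{\g_H}\Phi=0$, then the left-hand side vanishes, hence $\{\cH^*,\zeta^*\Phi\}_T=0$, so $\zeta^*\Phi$ is an integral of motion of $(T^*Q,\cH^*)$.

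For the converse, the only point requiring a word of comment is that $\zeta^*:C^\infty(V^*Q)\to C^\infty(T^*Q)$ is injective. This holds because $\zeta:T^*Q\to V^*Q$ is a surjective submersion (indeed a trivial one-dimensional affine bundle, as recorded after (\ref{nn169})); in the coordinates $(t,q^i,I_0,p_i)$ of (\ref{09150}), $\zeta^*f$ is simply $f$ regarded as a function independent of $I_0$, so $\zeta^*f=0$ forces $f=0$. Hence $\{\cH^*,\zeta^*\Phi\}_T=0$ together with (\ref{077a}) forces $\bL_{\g_H}\Phi=0$, i.e.\ $\Phi$ is an integral of motion of $(V^*Q,H)$.

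The only place a reader might stumble is the identification of the autonomous equation of motion on $T^*Q$ with the vanishing of the Poisson bracket with $\cH^*$; this is just the standard fact for an autonomous symplectic system, already used implicitly in Theorem \ref{09121} and the derivation of (\ref{z20'}). So there is no real obstacle: once the identity (\ref{077a}) and the injectivity of $\zeta^*$ are in hand, the theorem is a one-line equivalence.
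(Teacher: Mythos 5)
Your proposal is correct and follows essentially the same route as the paper, which simply cites the identity (\ref{077a}) as giving the result; your additional remark on the injectivity of $\zeta^*$ (needed for the converse direction) is a worthwhile explicit detail that the paper leaves implicit.
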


\begin{proof} The result follows from the equality (\ref{077a}):
\mar{077}\beq
\{\cH^*,\zeta^*\Phi\}_T=\zeta^*(\bL_{\g_H}\Phi)=0. \label{077}
\eeq
\end{proof}

\begin{theorem} \label{076}  If $\F$ and $\F'$ are
integrals of motion of a Hamiltonian system, their Poisson bracket
$\{\F,\F'\}_V$ also is an integral of motion.
\end{theorem}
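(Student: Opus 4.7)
The plan is to lift the assertion to the homogeneous Hamiltonian system $(T^*Q,\cH^*)$, where the bracket $\{\cdot,\cdot\}_T$ is symplectic and time-independent, so that the classical Jacobi-identity argument for autonomous Hamiltonian systems applies directly; the compatibility (\ref{m72'}) and Theorem \ref{075} then transport the conclusion back to $V^*Q$.

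Concretely, I would proceed in three short steps. First, by Theorem \ref{075} the hypothesis is equivalent to $\{\cH^*,\zeta^*\Phi\}_T=0$ and $\{\cH^*,\zeta^*\Phi'\}_T=0$. Second, the Jacobi identity for $\{\cdot,\cdot\}_T$ applied to the triple $(\cH^*,\zeta^*\Phi,\zeta^*\Phi')$ immediately gives
\begin{equation*}
\{\cH^*,\{\zeta^*\Phi,\zeta^*\Phi'\}_T\}_T
=\{\{\cH^*,\zeta^*\Phi\}_T,\zeta^*\Phi'\}_T
+\{\zeta^*\Phi,\{\cH^*,\zeta^*\Phi'\}_T\}_T=0.
\end{equation*}
Third, I would rewrite $\{\zeta^*\Phi,\zeta^*\Phi'\}_T$ as $\zeta^*\{\Phi,\Phi'\}_V$ via (\ref{m72'}), so that the previous display becomes $\{\cH^*,\zeta^*\{\Phi,\Phi'\}_V\}_T=0$, and then invoke the converse direction of Theorem \ref{075} to conclude that $\{\Phi,\Phi'\}_V$ is an integral of motion of $(V^*Q,H)$.

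An equally quick direct argument on $V^*Q$ is available as a cross-check: since the coordinate expression (\ref{m72}) for $\{\cdot,\cdot\}_V$ involves only $\dr^i$ and $\dr_i$, the time derivation $\dr_t$ is a derivation of this bracket, so from $\dr_t\Phi=-\{\cH,\Phi\}_V$ and $\dr_t\Phi'=-\{\cH,\Phi'\}_V$ in (\ref{ws516}) combined with the Jacobi identity on $(V^*Q,\{\cdot,\cdot\}_V)$ one obtains $\dr_t\{\Phi,\Phi'\}_V+\{\cH,\{\Phi,\Phi'\}_V\}_V=0$, which is exactly (\ref{ws516}) for $\{\Phi,\Phi'\}_V$. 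I do not anticipate any genuine obstacle, since the only nontrivial ingredient is the Jacobi identity, which holds on both Poisson manifolds involved; the only thing one really has to keep track of is that the pull-back $\zeta^*$ intertwines the two brackets, which is precisely the content of (\ref{m72'}).
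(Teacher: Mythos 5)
Your proof is correct and follows essentially the same route as the paper, whose one-line proof derives the claim from the equalities (\ref{094}) and (\ref{077}), i.e.\ from the characterization of integrals of motion as functions Poisson-commuting with $\cH^*$ on $T^*Q$ together with the Jacobi identity and the compatibility $\zeta^*\{f,g\}_V=\{\zeta^*f,\zeta^*g\}_T$ of (\ref{m72'}). Your additional direct computation on $V^*Q$ is a valid cross-check but adds nothing beyond the paper's argument.
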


\begin{proof} This fact results from the equalities (\ref{094})
and (\ref{077}).
\end{proof}

Consequently, integrals of motion of a Hamiltonian system
$(V^*Q,H)$ constitute a real Lie subalgebra of a Poisson algebra
$C^\infty(V^*Q)$.

Let us turn to Hamiltonian conservation laws. We are based on the
fact that the Hamilton equation (\ref{z20a}) also is a Lagrange
equation of the characteristic Lagrangian $L_H$ (\ref{Q33}).
Therefore one can study conservation laws in Hamiltonian mechanics
on a phase space $V^*Q$ similarly to those in Lagrangian mechanics
on a configuration space $V^*Q$ \cite{book10,jmp07,book16}.

Since the Hamilton equation (\ref{z20a}) is of first order, we
restrict our consideration to classical symmetries, i.e., vector
fields on $V^*Q$.

\begin{definition} \label{nn234} \mar{nn234}
A vector field on a phase space $V^*Q$ of a Hamiltonian system
$(V^*Q,H)$ is said to be its Hamiltonian symmetry if it is a
Lagrangian symmetry of the characteristic Lagrangian $L_H$.
\end{definition}

Let
\beq
\up=\up^t\dr_t + \up^i\dr_i + \up_i\dr^i, \qquad \up^t=0,1,
\label{0121}
\eeq
be a vector field on a phase space $V^*Q$. Its prolongation onto
$V^*Q\times_Q J^1Q$ (Remark \ref{0110}) reads
\be
J^1\up= \up^t\dr_t + \up^i\dr_i + \up_i\dr^i + d_t\up^i\dr_i^t.
\ee
Then the first variational formula (\ref{J4'}) for the
characteristic Lagrangian $L_H$ (\ref{Q33}) takes a form
\mar{nn203}\ben
&& -\up^t\dr_t\cH - \up^i\dr_i\cH +\up_i(q^i_t -\dr^i\cH) +p_id_t\up^i
=\label{nn203}\\
&& \quad (q^i_t\up^t-\up^i)(p_{ti}+\dr_i\cH)+ (\up_i-p_{ti}\up^t)(q^i_t-\dr^i\cH)
 + d_t(p_i\up^i-\up^t\cH). \nonumber
\een

If $\up$ (\ref{0121}) is a symmetry of $L_H$, i.e.,
\be
\bL_{J^1\up} L_H=d_t\si dt,
\ee
we obtain the weak Hamiltonian conservation law (\ref{d22f2}):
\mar{nn190}\beq
0\ap -d_t \cJ \label{nn190}
\eeq
of the Hamiltonian symmetry current (\ref{am225'}):
\mar{nn200}\beq
\cJ_\up=-p_i\up^i+\up^t\cH+\si. \label{nn200}
\eeq

The vector field $\up$ (\ref{0121}) on $V^*Q$ is a symmetry of the
characteristic Lagrangian $L_H$ (\ref{Q33}) if and only if
\beq
\up^i(p_{ti}+\dr_i\cH) -\up_i(q^i_t-\dr^i\cH) +
\up^t\dr_t\cH=d_t(-\cJ_\up +\up^t\cH). \label{0140}
\eeq
A glance at this equality shows the following.

\begin{theorem} \label{0146} \mar{0146}
The vector field $\up$ (\ref{0121}) is a Hamiltonian symmetry in
accordance with Definition \ref{nn234} only if
\mar{nn208}\beq
\dr^i\up_i=-\dr_i\up^i. \label{nn208}
\eeq
\end{theorem}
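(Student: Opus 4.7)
The plan is to exploit equation (\ref{0140}), which encodes the symmetry condition $\bL_{J^1\up}L_H=d_t\si\,dt$ after substituting $\cJ_\up=-p_i\up^i+\up^t\cH+\si$ and rewriting the right-hand side as $d_t(p_i\up^i-\si)$. First I would argue that $\si$ must be a function on $V^*Q$ alone, not on $J^1V^*Q$. Indeed, the characteristic Lagrangian $L_H=(p_iq^i_t-\cH)\,dt$ is affine in $q^i_t$ and contains no $p_{ti}$, so $\bL_{J^1\up}L_H$ is at most polynomial of degree one in the first-order jet variables; if $\si$ depended on any jet coordinate, $d_t\si$ would produce second-order jet terms for which there is no counterpart, forcing a contradiction.

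Next I would expand
\[
d_t=\dr_t+q^j_t\dr_j+p_{tj}\dr^j
\]
on the right-hand side of (\ref{0140}) and compare coefficients of the independent jet coordinates $q^j_t$ and $p_{tj}$ with those of the left-hand side, which contains $p_i\,d_t\up^i=p_i\dr_t\up^i+p_iq^j_t\dr_j\up^i+p_ip_{tj}\dr^j\up^i$ and the additional term $-\up_j\,q^j_t$. This matching yields the pair of necessary identities on $V^*Q$:
\[
\dr^j\si=p_i\,\dr^j\up^i,\qquad \dr_j\si=\up_j+p_i\,\dr_j\up^i.
\]

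Then I would impose the integrability condition $\dr_k\dr^j\si=\dr^j\dr_k\si$. Computing each side and using $\dr^j p_i=\dl^j_i$ together with the commutativity of partial derivatives, the terms $p_i\dr_k\dr^j\up^i=p_i\dr^j\dr_k\up^i$ cancel, leaving
\[
\dr_k\up^j+\dr^j\up_k=0
\]
for every pair of indices. Contracting on $j=k$ (summation convention) gives $\dr_i\up^i+\dr^i\up_i=0$, which is exactly the claimed relation $\dr^i\up_i=-\dr_i\up^i$.

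The only step requiring genuine thought is the initial argument that $\si$ is forced to live on $V^*Q$; this is what allows the coefficient matching against first-order jets to be performed cleanly. After that, everything reduces to reading off two relations and writing down a single cross-derivative identity, so no serious obstacle is anticipated.
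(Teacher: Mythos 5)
Your proof is correct and follows exactly the route the paper intends: the paper's own ``proof'' is just the remark that the theorem follows from ``a glance at'' the equality (\ref{0140}), and your argument supplies the omitted details --- that $\si$ must descend to $V^*Q$, that matching the coefficients of $q^j_t$ and $p_{tj}$ gives $\dr^j\si=p_i\dr^j\up^i$ and $\dr_j\si=\up_j+p_i\dr_j\up^i$, and that the cross-derivative identity $\dr_k\dr^j\si=\dr^j\dr_k\si$ forces $\dr_k\up^j+\dr^j\up_k=0$. Note that you in fact obtain this stronger pointwise identity for every pair of indices, of which the stated condition (\ref{nn208}) is merely the trace.
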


\begin{remark} \label{nn201} \mar{nn201} It is readily observed
that the Hamiltonian connection $\g_H$ (\ref{z3}) is a symmetry of
the characteristic Lagrangian $L_H$ whose conserved Hamiltonian
current (\ref{nn200}) equals zero. It follows that, given a
non-vertical Hamiltonian symmetry $\up$, $\up^t=1$, there exists a
vertical Hamiltonian symmetry $\up-\g_H$ with the same conserved
Hamiltonian current as $\up$.
\end{remark}

By virtue of Theorem \ref{nn191}, any Hamiltonian symmetry, being
classical symmetry of the characteristic Lagrangian $L_H$
(\ref{Q33}), also is symmetry of the Hamilton equation
(\ref{z20a}). In accordance with Theorem \ref{035'}, the
corresponding conserved Hamiltonian current (\ref{nn200}) is an
integral of motion of a Hamiltonian system which, thus, comes from
its Hamiltonian symmetry.

The converse also is true.

\begin{theorem} \label{0150'} \mar{0150'}
Any integral of motion $\Phi$ of a Hamiltonian system $(V^*Q,H)$
is the conserved Hamiltonian current $\cJ_{-\vt_\Phi}$
(\ref{nn200}) along the Hamiltonian vector field $-\vt_\Phi$
(\ref{m73}) of $-\Phi$.
\end{theorem}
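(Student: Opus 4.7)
The plan is to verify directly that the vertical vector field $\up=-\vt_\Phi$, with components $\up^t=0$, $\up^i=-\dr^i\Phi$, $\up_i=\dr_i\Phi$, satisfies the Hamiltonian symmetry condition $\bL_{J^1\up}L_H=d_t\si\,dt$ of Definition \ref{nn234} for an explicit function $\si$, and then to read off its symmetry current from (\ref{nn200}).

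First I would substitute this $\up$ into the left-hand side of the first variational formula (\ref{nn203}) applied to $L_H$. The two $\dr\cH$ terms collapse into the Poisson bracket, since $-\up^i\dr_i\cH-\up_i\dr^i\cH=\dr^i\Phi\,\dr_i\cH-\dr_i\Phi\,\dr^i\cH=-\{\cH,\Phi\}_V$, leaving $\bL_{J^1\up}L_H/dt=-\{\cH,\Phi\}_V+q^i_t\dr_i\Phi-p_id_t(\dr^i\Phi)$. Applying the Leibniz rule $p_id_t(\dr^i\Phi)=d_t(p_i\dr^i\Phi)-p_{ti}\dr^i\Phi$ recasts this as $-\{\cH,\Phi\}_V+q^i_t\dr_i\Phi+p_{ti}\dr^i\Phi-d_t(p_i\dr^i\Phi)$.

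At this point, I would invoke the integral of motion hypothesis (\ref{ws516}) in the form $-\{\cH,\Phi\}_V=\dr_t\Phi$. The first three summands then assemble into the total time derivative $d_t\Phi=\dr_t\Phi+q^i_t\dr_i\Phi+p_{ti}\dr^i\Phi$, yielding $\bL_{J^1\up}L_H=d_t(\Phi-p_i\dr^i\Phi)\,dt$. By Definition \ref{nn234} this shows $\up$ is a Hamiltonian symmetry with $\si=\Phi-p_i\dr^i\Phi$, and then (\ref{nn200}) gives $\cJ_{-\vt_\Phi}=-\up^ip_i+\up^t\cH+\si=p_i\dr^i\Phi+\si=\Phi$, as claimed.

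There is no real obstruction here; the argument is a direct bookkeeping computation, but it is worth flagging two small points. First, the specific choice $\si=\Phi-p_i\dr^i\Phi$ is forced by the requirement that the $-\up^ip_i=p_i\dr^i\Phi$ contribution to (\ref{nn200}) be cancelled so that the residual current is exactly $\Phi$. Second, the sign in $\vt_\Phi=\dr^i\Phi\,\dr_i-\dr_i\Phi\,\dr^i$ must be tracked carefully so that one works with the Hamiltonian vector field of $-\Phi$ rather than $\Phi$, in accord with Remark \ref{nn201} and the statement of the theorem.
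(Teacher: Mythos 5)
Your proposal is correct and follows exactly the paper's own route: the paper likewise substitutes $-\vt_\Phi$ into the first variational formula (\ref{nn203}), invokes the integral-of-motion condition (\ref{ws516}) to obtain $\bL_{-J^1\vt_\Phi}L_H=d_t(\Phi-p_i\dr^i\Phi)\,dt$, and reads off $\Phi=\cJ_{-\vt_\Phi}$ from (\ref{nn200}). You have merely written out the bookkeeping that the paper leaves implicit, including the correct sign convention for $\vt_\Phi$.
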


\begin{proof}
It follows from the relations (\ref{ws516}) and (\ref{nn203}) that
\be
\bL_{-J^1\vt_\Phi}=d_t(\Phi- p_i\dr^i\Phi).
\ee
Then the equality (\ref{nn200}) results in a desired relation
$\Phi=\cJ_{-\vt_\Phi}$.
\end{proof}

This assertion can be regarded as above mentioned Noether's
inverse first theorem.

For instance, if the Hamiltonian symmetry $\up$ (\ref{0121}) is
projectable onto $Q$ (i.e., its components $\up^i=u^i$ are
independent of momenta $p_i$), then we $\up_i=-p_j\dr_i u^j$ in
accordance with the equality (\ref{nn208}). Consequently, $\up$ is
the canonical lift $\wt u$ (\ref{gm513}) onto $V^*Q$ of the vector
field $u$ (\ref{z372}) on $Q$. If $\wt u$ is a symmetry of the
characteristic Lagrangian $L_H$, it follows at once from the
equality (\ref{0140}) that $\wt u$ is an exact symmetry of $L_H$.
The corresponding conserved Hamiltonian symmetry current
 (\ref{nn200}) reads
\mar{0150}\beq
\wt\cJ_u=\cJ_{\wt u}=-p_iu^i+u^t\cH. \label{0150}
\eeq

\begin{definition} \label{nn235} \mar{nn235}
The vector field $u$ (\ref{z372}) on a configuration space $Q$ is
said to be the basic Hamiltonian symmetry if its canonical lift
$\wt u$ (\ref{gm513}) onto $V^*Q$ is a Hamiltonian symmetry.
\end{definition}

If a basic Hamiltonian symmetry $u$ is vertical, the corresponding
conserved Hamiltonian symmetry current (\ref{0150}):
\mar{nn206}\beq
\wt\cJ_u=-p_iu^i, \label{nn206}
\eeq
is a Noether current.

Now let $\G$ be the connection (\ref{a1.10}) on $Q$. The
corresponding symmetry current (\ref{0150}) is the Hamiltonian
function (\ref{xx60}):
\mar{nn205}\beq
\wt\cJ_\G=\cJ_{\wt\G}=\cE_\G =\cH-p_i\G^i, \label{nn205}
\eeq
relative to a reference frame $\G$. Given bundle coordinates
adapted to $\G$, we obtain the Lie derivative
\be
\bL_{J^1\wt\G}L_H=-\dr_t\cH.
\ee
It follows that a connection $\G$ is a basic Hamiltonian symmetry
if and only if the Hamiltonian $\cH$ (\ref{ws513}), written with
respect to the coordinates adapted to $\G$, is time-independent.
In this case, the Hamiltonian function (\ref{nn205}) is an
integral of motion of a Hamiltonian system.

There is the following relation between Lagrangian symmetries and
basic Hamiltonian symmetries if they are the same vector fields on
a configuration space $Q$.

\begin{theorem}\label{hamlaw'}  Let a
Hamiltonian form $H$ be associated with an almost regular
Lagrangian $L$. Let $r$ be a solution of the Hamilton equation
(\ref{z20a}) for $H$ which lives in the Lagrangian constraint
space $N_L$ (\ref{jkl}). Let $s=\pi_\Pi\circ r$ be the
corresponding solution of a Lagrange equation for $L$ so that the
relation (\ref{072}) holds. Then, for any vector field $u$
(\ref{z372}) on a fibre bundle $Q\to \mathbb R$, we have
\mar{Q10a}\beq
\wt\cJ_u (r)=\cJ_u( \pi_\Pi\circ r),\qquad \wt\cJ_u (\wh L\circ
J^1s) =\cJ_u(s), \label{Q10a}
\eeq
where $\cJ_u$ is the symmetry current (\ref{m225}) on $J^1Y$  and
$\wt\cJ_u=\cJ_{\wt u}$ is the symmetry current (\ref{0150}) on
$V^*Q$.
\end{theorem}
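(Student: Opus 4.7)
The plan is to reduce both identities in (\ref{Q10a}) to a single pointwise computation, which in turn rests on two ingredients: the on-shell coincidence $r=\wh L\circ J^1s$, and the Legendre--type identity $\cH\circ r=r_i\dot s^i-\cL\circ J^1s$ that holds on the constraint space $N_L$ for a weakly associated Hamiltonian form. Once these two ingredients are in hand, the identity $\wt\cJ_u(r)=\cJ_u(\pi_\Pi\circ r)$ is a direct verification, and the second identity in (\ref{Q10a}) is just the same equality with $r$ written in the form $\wh L\circ J^1s$.

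First I would establish the on-shell relation $r=\wh L\circ J^1s$. By the Hamilton equation (\ref{z20a}), $\dot s^i=(\dr^i\cH)\circ r$, and the Hamiltonian map $\wh H$ is defined precisely by $q^i_t\circ \wh H=\dr^i\cH$, so $\wh H\circ r=J^1s$. Because $H$ is weakly associated with $L$ and $r$ takes values in $N_L$, the composition $\wh L\circ \wh H$ restricts to the identity on $N_L$, and therefore
\[
\wh L\circ J^1s=\wh L\circ \wh H\circ r=r .
\]
In particular, $\pi_i(t,s,\dot s)=r_i(t)$ along the solution; this is exactly the hypothesis (\ref{072}) interpreted through Theorem \ref{d3.23}.

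Second I would compare the two currents directly. Using the explicit expressions $\cJ_u=-\pi_i(u^i-u^tq^i_t)-u^t\cL$ from (\ref{m225}) and $\wt\cJ_u=-p_iu^i+u^t\cH$ from (\ref{0150}), and substituting $r_i=\pi_i(t,s,\dot s)$ from Step~1, the difference collapses to
\[
\wt\cJ_u(r)-\cJ_u(J^1s)=u^t\bigl[\cH\circ r-r_i\dot s^i+\cL\circ J^1s\bigr] .
\]
The expression in brackets vanishes by the Legendre identity on $N_L$: for $H$ weakly associated with $L$ one has $\cH(t,q,p)=p_i\dr^i\cH-\cL(t,q,\dr^i\cH)$ on $N_L$, and the relation $\dr^i\cH\circ r=\dot s^i$ proved in Step~1 turns this into exactly $\cH\circ r=r_i\dot s^i-\cL(t,s,\dot s)$. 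This proves the first equation of (\ref{Q10a}); the second equation follows immediately by replacing $r$ with $\wh L\circ J^1s$, which is legitimate by Step~1.

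The main obstacle is the Legendre identity on the constraint space. For hyperregular $L$ it is the elementary formula coming from (\ref{cc311b}), but under the weaker hypothesis of almost regularity one has to invoke precisely the definition of weak association together with the identity $\wh L\circ\wh H|_{N_L}=\id$, and then verify that the Legendre relation survives on $N_L$. Everything else is a bookkeeping substitution guided by the two projections $\pi_\Pi\circ r=s$ and $\wh H\circ r=J^1s$.
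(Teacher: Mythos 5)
Your argument is correct, and it is the standard one: the paper states Theorem \ref{hamlaw'} without proof (deferring to the cited monographs), so there is no in-text argument to compare against. Your two ingredients are exactly what the omitted proof rests on, namely condition (i) of (weak) association, $\wh L\circ\wh H|_{N_L}=\id$, which together with $\wh H\circ r=J^1s$ (from the Hamilton equation) gives $r=\wh L\circ J^1s$, and condition (ii) on $N_L$, $\cH=p_i\dr^i\cH-\cL(t,q^j,\dr^j\cH)$, which kills the residual $u^t$-term $\cH\circ r-r_i\dot s^i+\cL\circ J^1s$; the $u^t=0$ case is immediate from $r_i=\pi_i\circ J^1s$ alone. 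The only caveat is that you must take the definition of weak association from the cited references, since the paper never spells it out, but given that definition your reduction is complete.
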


By virtue of Theorems \ref{d3.23} -- \ref{d3.24}, it follows that:

$\bullet$ if $\cJ_u$ in Theorem \ref{hamlaw'} is a conserved
symmetry current, then the symmetry current $\wt\cJ_u$
(\ref{Q10a}) is conserved on solutions of a Hamilton equation
which live in the Lagrangian constraint space;

$\bullet$ if $\wt\cJ_u$ in Theorem \ref{hamlaw'} is a conserved
symmetry current, then the symmetry current $\cJ_u$ (\ref{Q10a})
is conserved on solutions $s$ of a Lagrange equation which obey
the condition (\ref{072}).

In particular, let $u=\G$ be a connection and $E_\G$ the energy
function (\ref{m228}). Then the relations (\ref{Q10a}):
\be
&& \cE_\G(r)=\wt\cJ_\G (r)=\cJ_\G( \pi_\Pi\circ r)=E_\G(\pi_\Pi\circ r),\\
&&\cE_\G(\wh L\circ J^1s)= \wt\cJ_\G (\wh L\circ J^1s)
=\cJ_\G(s)=E_\G(s),
\ee
show that the Hamiltonian function $\cE_\G$ (\ref{nn205}) can be
treated as a Hamiltonian energy function relative to a reference
frame $\G$.

\section{Completely integrable Hamiltonian systems}

It may happen that symmetries and the corresponding integrals of
motion define a Hamiltonian system in full. This is the case of
commutative and noncommutative completely integrable systems
(henceforth, CISs) (Definition \ref{nn209}).

In view of Remark \ref{nn201}, we can restrict our consideration
to vertical symmetries $\up$ (\ref{0121}) where $\up^t=0$.

\begin{definition} \label{nn209} \mar{nn209} A non-autonomous Hamiltonian
system $(V^*Q,H)$ of $n=\di Q-1$ degrees of freedom is said to be
completely integrable if it admits $n\leq k<2n$ vertical classical
symmetries $\up_\al$ which obey the following conditions.

(i) Symmetries $\up_\al$ everywhere are linearly independent.

(ii) They form a $k$-dimensional real Lie algebra $\gG$ of corank
$m=2n-k$ with commutation relations
\mar{nn210}\beq
[\up_\al,\up_\bt]=c^\nu_{\al\bt}\up_\nu. \label{nn210}
\eeq
\end{definition}

If $k=n$, then a Lie algebra $\gG$ is commutative, and we are in
the case of a commutative CIS. If $n< k$, the Lie algebra
(\ref{nn210}) is noncommutative, and a CIS is called
noncommutative or superintegrable.

The conditions of Definition \ref{nn209} can be reformulated in
terms of integrals of motion $\Phi_\al=-\cJ_{\up_\al}$
corresponding to
 symmetries $\up_\al$. By virtue of Noether's inverse first
Theorem \ref{0150'}, $\up_\al=\vt_{\Phi_\al}$ are the Hamiltonian
vector fields (\ref{m73}) of integrals of motion $\Phi_\al$. In
accordance with the relation (\ref{094}), integrals of motion obey
the commutation relations
\mar{nn211}\beq
\{\Phi_\al,\Phi_\bt\}_V=c_{\al\bt}^\nu \Phi_\nu. \label{nn211}
\eeq
Then we come to an equivalent definition of a CISs
\cite{book10,ijgmmp12,book15}.

\begin{definition} \label{i0x} \mar{i0x}
A non-autonomous Hamiltonian system $(V^*Q,H)$ of $n=\di Q-1$
degrees of freedom is a CIS if it possesses $n\leq k<2n$ integrals
of motion $\F_1,\ldots,\F_k$, obeying the following conditions.

(i) All the functions $\F_\al$ are independent, i.e., a $k$-form
$d\F_1\w\cdots\w d\F_k$ nowhere vanishes on $V^*Q$. It follows
that a map
\beq
\F:V^*Q\to N=(\F_1(V^*Q),\ldots,\F_k(V^*Q))\subset \mathbb R^k
\label{nc4x}
\eeq
is a fibred manifold over a connected open subset $N\subset\mathbb
R^k$.

(ii) The commutation relations (\ref{nn211}) are satisfied.
\end{definition}

Given a non-autonomous CIS in accordance with Definition
\ref{i0x}, the equivalent autonomous Hamiltonian system on a
homogeneous phase space $T^*Q$ (Theorem \ref{09121}) possesses
$k+1$ integrals of motion
\mar{09136}\beq
(\cH^*,\zeta^*\F_1,\ldots,\zeta^*\F_k) \label{09136}
\eeq
with the following properties (Theorem \ref{075}).

(i) The integrals of motion (\ref{09136}) are mutually
independent, and a map
\ben
&& \wt\F:T^*Q\to
(\cH^*(T^*Q),\zeta^*\F_1(T^*Q),\ldots,\zeta^*\F_k(T^*Q))= \label{09140}\\
&& \qquad (I_0,\F_1(V^*Q),\ldots,\F_k(V^*Q))= \mathbb R\times
N=N'\nonumber
\een
is a fibred manifold.

(ii) The integrals of motion (\ref{09136}) obey the commutation
relations
\be
\{\zeta^*\F_\al,\zeta^*\F_\bt\}= c_{\al\bt}^\nu
\zeta^*\F_\bt,\qquad \{\cH^*,\zeta^*\F_\al\}=0.
\ee
They generate a real $(k+1)$ dimensional Lie algebra of corank
$2n+1-k$.

As a result, integrals of motion (\ref{09136}) form an autonomous
CIS on a symplectic manifold $(T^*Q, \Om_T)$ in accordance with
Definition \ref{i0}. In order to describe it, one then can follow
the Mishchenko--Fomenko theorem \cite{bols03,fasso05,mishc}
extended to the case of noncompact invariant submanifolds
\cite{fior2,ijgmmp09a,book15}.

Therefore, let us turn to CISs (superintegrable systems) on a
symplectic manifold.

\begin{remark} \label{nn236} \mar{nn236}
Let $Z$ be a manifold. Any exterior two-form $\Om$ on $Z$ yields a
linear bundle morphism
\beq
\Om^\flat: TZ\op\to_Z T^*Z, \qquad \Om^\flat:v \to -
v\rfloor\Om(z), \quad v\in T_zZ, \quad z\in Z. \label{m52}
\eeq
One says that a two-form $\Om$ is non-degenerate if $\Ker
\Om^\flat=0$. A closed non-degenerate form is called the
symplectic form. Accordingly, a manifold $Z$ equipped with a
symplectic form is said to be the symplectic manifold. A
symplectic manifold necessarily is even-dimensional. A closed
two-form on $Z$ is called presymplectic  if it is not necessary
degenerate. A vector field $u$ on a symplectic manifold $(Z,\Om)$
is said to be Hamiltonian  if a one-form $u\rfloor\Om$ is exact.
Any smooth function $f\in C^\infty(Z)$ on $Z$ defines a unique
Hamiltonian vector field $\vt_f$, called the Hamiltonian vector
field of a function $f$, such that
\mar{z100}\beq
\vt_f\rfloor\Om=-df, \qquad \vt_f=\Om^\sharp(df),  \label{z100}
\eeq
where  $\Om^\sharp$ is the inverse isomorphism to $\Om^\flat$
(\ref{m52}). Given an $m$-dimensional manifold $M$ coordinated by
$(q^i)$, let $T^*M$ be its cotangent bundle equipped with the
holonomic coordinates $(q^i, \dot q_i)$. It is endowed with the
canonical Liouville form
\mar{nn237}\beq
\Xi_T=\dot q_idq^i \label{nn237}
\eeq
and the canonical symplectic form
\mar{m83}\beq
\Om_T= d\Xi=d\dot q_i\w dq^i. \label{m83}
\eeq
The Hamiltonian vector field $\vt_f$ (\ref{z100}) with respect to
the canonical symplectic form (\ref{m83}) reads
\mar{spr829}\beq
\vt_f=\dr^if\dr_i -\dr_i f\dr^i. \label{spr829}
\eeq
A symplectic form $\Om$ on a manifold $Z$ defines a Poisson
bracket
\be
\{f,g\}=\vt_g\rfloor\vt_f\rfloor\Om, \qquad f,g\in C^\infty(Z).
\ee
The canonical symplectic form $\Om_T$ (\ref{m83}) on $T^*M$ yields
the canonical Poisson bracket
\mar{nn238}\beq
\{f,g\}_T=\frac{\dr f}{\dr \dot q_i}\frac{\dr g}{\dr q^i}-
\frac{\dr f}{\dr q^i}\frac{\dr g}{\dr \dot q_i}. \label{nn238}
\eeq
\end{remark}

\begin{definition} \label{i0} \mar{i0}
Let $(Z,\Om)$ be a $2n$-dimensional connected symplectic manifold,
and let $(C^\infty(Z), \{,\})$ be a Poisson algebra of smooth real
functions on $Z$. A subset
\mar{i00}\beq
F=(F_1,\ldots,F_k), \qquad n\leq k<2n, \label{i00}
\eeq
of a Poisson algebra $C^\infty(Z)$ is called the CIS  or the
superintegrable system  if the following conditions hold.

(i) All the functions $F_i$ (called the generating functions of a
CIS)  are independent, i.e., a $k$-form $\op\w^kdF_i$ nowhere
vanishes on $Z$. It follows that a map $F:Z\to \mathbb R^k$ is a
submersion, i.e.,
\mar{nc4}\beq
F:Z\to N=F(Z) \label{nc4}
\eeq
is a fibred manifold over a domain $N\subset\mathbb R^k$ endowed
with the coordinates $(x_i)$ such that $x_i\circ F=F_i$.

(ii) There exist smooth real functions $s_{ij}$ on $N$ such that
\beq
\{F_i,F_j\}= s_{ij}\circ F, \qquad i,j=1,\ldots, k. \label{nc1}
\eeq

(iii) The $(k\times k)$-matrix function $\mathbf{s}$ with the
entries $s_{ij}$ (\ref{nc1}) is of constant corank $m=2n-k$ at all
points of $N$.
\end{definition}

\begin{remark} \label{cmp21a}
If $k=n$, then $\mathbf{s}=0$, and we are in the case of
commutative CISs when $F_1,...,F-n$ are independent functions in
involution.
\end{remark}

If $k>n$, the matrix $\mathbf{s}$ is necessarily nonzero. If
$k=2n-1$, a CIS is called maximally integrable.

The following two assertions clarify a structure of CISs
\cite{fasso05,fior2,book15}.

\begin{theorem} \label{nc7}  \mar{nc7} Given a symplectic manifold $(Z,\Om)$,
let $F:Z\to N$ be a fibred manifold such that, for any two
functions $f$, $f'$ constant on fibres of $F$, their Poisson
bracket $\{f,f'\}$ is so. By virtue of the well known theorem
\cite{book10,vais}, $N$ is provided with an unique coinduced
Poisson structure $\{,\}_N$ such that $F$ is a Poisson morphism.
\end{theorem}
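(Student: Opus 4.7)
The plan is to use the fact that a fibred manifold $F\colon Z\to N$ is a surjective submersion, so the pullback
$$F^*\colon C^\infty(N)\hookrightarrow C^\infty(Z)$$
is an injective algebra homomorphism whose image is precisely the subalgebra $C^\infty(Z)^F$ of smooth functions on $Z$ that are constant on the fibres of $F$. The hypothesis of the theorem says exactly that $C^\infty(Z)^F$ is closed under the Poisson bracket $\{,\}$ on $Z$. This is the key structural input.

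Given this, I would define the bracket on $N$ fibrewise: for $g,g'\in C^\infty(N)$, set
$$\{g,g'\}_N \;:=\; (F^*)^{-1}\bigl(\{F^*g,F^*g'\}\bigr),$$
which makes sense because $F^*g,F^*g'\in C^\infty(Z)^F$ and hence, by hypothesis, $\{F^*g,F^*g'\}\in C^\infty(Z)^F$, so it is the pullback of a unique smooth function on $N$. By construction $F^*\{g,g'\}_N=\{F^*g,F^*g'\}$, so $F$ is a Poisson morphism, and uniqueness of the bracket on $N$ satisfying this intertwining property follows immediately from the injectivity of $F^*$.

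Next I would verify the Poisson axioms for $\{,\}_N$. Bilinearity and antisymmetry pull back directly from those of $\{,\}$ on $Z$, using that $F^*$ is linear and an algebra morphism. The Leibniz rule
$\{g,g'g''\}_N=\{g,g'\}_N g''+g'\{g,g''\}_N$
is obtained by applying $F^*$, invoking the Leibniz rule for $\{,\}$ on $Z$, and using injectivity of $F^*$ to strip off the pullback. The Jacobi identity is handled in the same way. In each case, the only thing one actually has to check is the corresponding identity after pulling back to $Z$, which is automatic.

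The main substantive point, and the only potential obstacle, is smoothness of $\{g,g'\}_N$ on $N$: one must see that a function on $N$ whose pullback along the submersion $F$ is smooth is itself smooth. This follows from the local triviality structure of a submersion (existence of smooth local sections), so on any chart of $N$ one can pick a local section $\sigma$ of $F$ and recover $\{g,g'\}_N=\sigma^*\{F^*g,F^*g'\}$, which is manifestly smooth. Once this is in hand the theorem is complete, and one may note that the argument is entirely formal in the Poisson structure of $Z$, so the "well known theorem" being cited is precisely this general descent lemma for Poisson brackets along surjective submersions with involutive fibrewise-constant algebra.
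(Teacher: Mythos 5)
Your proof is correct. The paper does not actually prove Theorem \ref{nc7} --- it delegates the statement to \cite{book10,vais} --- and the argument you give is precisely the standard descent argument behind that citation: identify $C^\infty(N)$ via the injective morphism $F^*$ with the subalgebra of functions constant on fibres, use the hypothesis to see that this subalgebra is closed under $\{,\}$, and transport the bracket back through $F^*$, with uniqueness and the Poisson axioms following formally from injectivity; you also correctly isolate and settle the one non-formal point, namely that a function on $N$ whose pullback along the surjective submersion $F$ is smooth is itself smooth, via local sections.
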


Since any function constant on fibres of $F$ is the pull-back of
some function on $N$, the CIS (\ref{i00}) satisfies the condition
of Theorem \ref{nc7} due to item (ii) of Definition \ref{i0}.
Thus, a base $N$ of the fibration (\ref{nc4}) is endowed with a
coinduced Poisson structure of corank $m$. With respect to
coordinates $x_i$ in item (i) of Definition \ref{i0} its bivector
field reads
\mar{cmp1}\beq
w=s_{ij}(x_k)\dr^i\w\dr^j. \label{cmp1}
\eeq

\begin{theorem} \label{nc8} \mar{nc8}  Given a fibred manifold $F:Z\to N$ in
Theorem \ref{nc7}, the following conditions are equivalent
\cite{fasso05}:

(i) a rank of the coinduced Poisson structure $\{,\}_N$ on $N$
equals $2\di N-\di Z$,

(ii) the fibres of $F$ are isotropic,

(iii) the fibres of $F$ are  maximal integral manifolds of the
involutive distribution spanned by the Hamiltonian vector fields
of the pull-back $F^*C$ of Casimir functions $C$ of the coinduced
Poisson structure (\ref{cmp1}) on $N$.
\end{theorem}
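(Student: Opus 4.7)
The plan is to reduce everything to a single pointwise linear-algebra observation and then chase dimensions. Fix $z\in Z$; since $F$ is a submersion, the codifferential $F^*:T^*_{F(z)}N\to T^*_zZ$ is injective, and composing with $\Om^\sharp$ gives an injection $T^*_{F(z)}N\hookrightarrow T_zZ$ with image of dimension $k=\di N$. This image consists of the Hamiltonian vector fields $\vt_{F^*f}(z)$ as $f$ ranges over $C^\infty(N)$, and I claim it coincides with the symplectic orthogonal $(\ker dF_z)^\perp$: one inclusion is immediate since $\Om(\vt_{F^*f},v)=-df(dF(v))=0$ for any $v\in\ker dF_z$, and the other follows because $\di(\ker dF_z)^\perp=2n-(2n-k)=k$ matches.

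Next I would determine which $\vt_{F^*f}$ lie in $\ker dF_z$ as well. Using the hypothesis of Theorem \ref{nc7} that $F$ is a Poisson morphism, one computes $dF_z(\vt_{F^*f})\rfloor dg=\{F^*f,F^*g\}_Z(z)=\{f,g\}_N(F(z))$ for every $g\in C^\infty(N)$, so $\vt_{F^*f}$ is tangent to the fibre through $z$ precisely when $df(F(z))$ lies in $\ker w^\sharp_{F(z)}$, that is, when $f$ is locally a Casimir of $\{,\}_N$ near $F(z)$. Hence the span at $z$ of Hamiltonian vector fields $\vt_{F^*C}$ of pullback Casimirs equals $(\ker dF_z)\cap(\ker dF_z)^\perp$ and has dimension equal to the corank of $\{,\}_N$ at $F(z)$. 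This distribution is automatically involutive because, by (\ref{094}), $[\vt_{F^*C_1},\vt_{F^*C_2}]=\vt_{F^*\{C_1,C_2\}_N}=0$ for any two Casimirs $C_1,C_2$.

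The three equivalences now drop out by matching dimensions. Condition (ii), isotropy of fibres, means $\ker dF_z\subset(\ker dF_z)^\perp$, equivalently that the intersection has the maximal possible dimension $2n-k$; by the previous paragraph this is equivalent to the corank of $\{,\}_N$ being $2n-k$, which is the numerical content of (i). For (i)$\Leftrightarrow$(iii), when the corank equals $2n-k$ the Casimir-generated distribution has the same dimension as $\ker dF$ and is contained in it, so they coincide, and Frobenius together with involutivity identify the maximal integral manifolds with the connected components of fibres of $F$; conversely, if (iii) holds then the distribution has dimension $2n-k$, forcing the corank to equal $2n-k$ and hence (i).

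The principal obstacle I expect is a local/global subtlety in interpreting (iii): globally defined Casimirs on $N$ may be scarce, so the distribution in (iii) must be read as generated by locally defined Casimirs on open subsets of $N$. What saves the argument is the constant-corank hypothesis built into Definition \ref{i0}(iii) and transported to $N$ via Theorem \ref{nc7}, which makes $\{,\}_N$ a regular Poisson structure; Weinstein's splitting theorem then produces local Casimirs whose differentials span $\ker w^\sharp$ at every point, so the distribution in (iii) is well-defined globally on $Z$ and the equivalence holds uniformly on $N$.
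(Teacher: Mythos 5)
The paper does not actually prove Theorem \ref{nc8}; it states it and defers to the reference \cite{fasso05}, so your proposal is supplying an argument where the text gives none. Your argument is correct and is essentially the standard one. The pointwise linear algebra is right: the map $\al\mapsto\Om^\sharp(F^*\al)$ identifies $T^*_{F(z)}N$ with $(\ker dF_z)^\perp$, the Poisson-morphism property from Theorem \ref{nc7} shows that $\vt_{F^*f}(z)$ is vertical exactly when $df(F(z))\in\ker w^\sharp_{F(z)}$, and hence the Casimir-generated distribution sits inside $\ker dF_z\cap(\ker dF_z)^\perp$ with dimension equal to the corank of $\{,\}_N$; the three conditions then reduce to the single dimension count $\di\bigl(\ker dF_z\cap(\ker dF_z)^\perp\bigr)=2n-k$. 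You also correctly isolate the one genuinely delicate point, namely that condition (iii) only makes sense if $\ker w^\sharp$ is spanned by differentials of (locally defined) Casimirs, which requires $\{,\}_N$ to have locally constant rank; note that in the bare setting of Theorem \ref{nc7} this is not given a priori, but each of (i), (ii), (iii) separately forces the corank to be $2n-k$ at every point (for (iii), because the distribution must fill out $\ker dF_z$), so regularity is available in every direction of the equivalence and your appeal to the splitting theorem closes the gap. Two cosmetic caveats: the maximal integral manifolds you obtain are the connected components of the fibres rather than the fibres themselves (a standard abuse the paper also makes), and involutivity of the distribution should be phrased as holding because it has constant rank and its generators commute via (\ref{094}). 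Neither affects the validity of the argument.
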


It is readily observed that the fibred manifold $F$ (\ref{nc4})
obeys condition (i) of Theorem \ref{nc8} due to item (iii) of
Definition \ref{i0}, namely, $k-m= 2(k-n)$.

Fibres of the fibred manifold $F$ (\ref{nc4}) are called the
invariant submanifolds.

\begin{remark} \label{cmp8} \mar{cmp8}
In practice, condition (i) of Definition \ref{i0} fails to hold
everywhere. It can be replaced with that a subset $Z_R\subset Z$
of regular points (where $\op\w^kdF_i\neq 0$) is open and dense.
Let $M$ be an invariant submanifold through a regular point $z\in
Z_R\subset Z$. Then it is regular, i.e., $M\subset Z_R$. Let $M$
admit a regular open saturated neighborhood  $U_M$ (i.e., a fibre
of $F$ through a point of $U_M$ belongs to $U_M$). For instance,
any compact invariant submanifold $M$ has such a neighborhood
$U_M$. The restriction of functions $F_i$ to $U_M$ defines a CIS
on $U_M$ which obeys Definition \ref{i0}. In this case, one says
that a CIS is considered around its invariant submanifold $M$.
\end{remark}

Let $(Z,\Om)$ be a $2n$-dimensional connected symplectic manifold.
Given the CIS $(F_i)$ (\ref{i00}) on $(Z,\Om)$, the well known
Mishchenko--Fomenko theorem (Theorem \ref{nc0}) states the
existence of action-angle coordinates around its connected compact
invariant submanifold \cite{bols03,fasso05,mishc}. This theorem
has been extended to CISs with noncompact invariant submanifolds
(Theorem \ref{nc0'}) \cite{fior2,ijgmmp09a,book15}. These
submanifolds are diffeomorphic to a toroidal cylinder
\mar{g120'}\beq
\mathbb R^{m-r}\times T^r, \qquad m=2n-k, \qquad 0\leq r\leq m.
\label{g120'}
\eeq

\begin{theorem} \label{nc0'} \mar{nc0'} Let the Hamiltonian vector fields $\vt_i$ of the
functions $F_i$ be complete, and let the fibres of the fibred
manifold $F$ (\ref{nc4}) be connected and mutually diffeomorphic.
Then the following hold.

(I) The fibres of $F$ (\ref{nc4}) are diffeomorphic to the
toroidal cylinder (\ref{g120'}).

(II) Given a fibre $M$ of $F$ (\ref{nc4}), there exists its open
saturated neighborhood $U_M$ which is a trivial principal bundle
\beq
U_M=N_M\times \mathbb R^{m-r}\times T^r\ar^F N_M \label{kk600}
\eeq
with the structure group (\ref{g120'}).

(III) A neighborhood $U_M$ is provided with the bundle
action-angle coordinates  $(I_\la,p_s,q^s, y^\la)$, $\la=1,\ldots,
m$, $s=1,\ldots,n-m$, such that: (i) the angle coordinates
$(y^\la)$ are those on a toroidal cylinder, i.e., fibre
coordinates on the fibre bundle (\ref{kk600}), (ii)
$(I_\la,p_s,q^s)$ are coordinates on its base $N_M$ where the
action coordinates $(I_\la)$ are values of Casimir functions of
the coinduced Poisson structure $\{,\}_N$ on $N_M$, and (iii) a
symplectic form $\Om$ on $U_M$ reads
\mar{cmp06}\beq
\Om= dI_\la\w dy^\la + dp_s\w dq^s. \label{cmp6}
\eeq
\end{theorem}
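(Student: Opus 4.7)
The plan is to reduce all three claims to an analysis of the Hamiltonian vector fields of pull-backs of Casimirs of the coinduced Poisson structure $\{,\}_N$ on $N_M$ provided by Theorem \ref{nc7}. Since $\{,\}_N$ has corank $m=2n-k$ by item (iii) of Definition \ref{i0}, I choose $m$ independent Casimirs $C_1,\dots,C_m$ in a neighbourhood of $F(M)\subset N$, so that their pull-backs $F^*C_\la$ satisfy $\{F^*C_\la,F^*C_\m\}=F^*\{C_\la,C_\m\}_N=0$, showing that the Hamiltonian vector fields $\vt_{F^*C_\la}$ pairwise commute. Moreover $F^*C_\la=C_\la(F_1,\dots,F_k)$, so $\vt_{F^*C_\la}=\sum_i((\dr C_\la/\dr x_i)\circ F)\vt_i$, whose coefficients are constant along each fibre of $F$; hence $\vt_{F^*C_\la}$ restricted to any fibre is a constant linear combination of the complete vector fields $\vt_i$, and is itself complete there.

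For (I), Theorem \ref{nc8}(iii) identifies the fibre $M$ with a maximal integral manifold of the distribution spanned by the $m$ commuting complete fields $\vt_{F^*C_\la}$. Their joint flow therefore defines a transitive, locally free action of $\mathbb R^m$ on the connected manifold $M$. The isotropy is a discrete subgroup of $\mathbb R^m$ of some rank $r$, and consequently $M\simeq\mathbb R^m/\mathbb Z^r=\mathbb R^{m-r}\times T^r$; the integer $r$ is constant on $N_M$ by the hypothesis that all fibres are mutually diffeomorphic. For (II), I pick a local section $\si:N_M\to Z$ of $F$ through a point of $M$ and shrink $N_M$ to a contractible neighbourhood of $F(M)$; the map $(t^1,\dots,t^m,x)\mapsto\exp(t^\la\vt_{F^*C_\la})(\si(x))$ factors through the common stabilizer $\mathbb Z^r$ to produce the trivialization $U_M\simeq N_M\times\mathbb R^{m-r}\times T^r$ compatible with the abelian structure-group action.

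For (III), set the action coordinates $(I_\la)$ to be the chosen Casimirs $C_\la$, pulled back to $U_M$ via $F$. The coinduced Poisson structure has rank $2(k-n)$ on a transversal to the common level sets of the $I_\la$ in $N_M$, so a Darboux theorem provides canonical conjugate coordinates $(p_s,q^s)$, $s=1,\dots,k-n$, on $N_M$ that Poisson-commute with the $I_\la$; pull them back to $U_M$. Let the angle coordinates $y^\la$ be the canonical parameters of the $\mathbb R^{m-r}\times T^r$-action, normalized so that $y^\la\circ\si=0$. Using the commutativity of the flows of $\vt_{I_\la}$, the fact that these flows preserve the Poisson structure, and the Casimir property of the $I_\la$, one verifies the brackets $\{I_\la,y^\m\}=\dl^\m_\la$, $\{I_\la,p_s\}=\{I_\la,q^s\}=0$, $\{y^\la,y^\m\}=0$, $\{y^\la,p_s\}=\{y^\la,q^s\}=0$, and $\{p_s,q^{s'}\}=\dl_s^{s'}$, which together yield the canonical form (\ref{cmp6}) for $\Om$.

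The principal obstacles are the global triviality of the bundle $U_M\to N_M$ in (II) and the verification of the symplectic normal form in (III). The former rests on shrinking $N_M$ to be contractible together with the abelian nature of the structure group; the latter, which is the technical core of the Mishchenko--Fomenko construction in the noncompact setting, requires that the Darboux coordinates $(p_s,q^s)$ on a transversal to the symplectic leaves of $\{,\}_N$ remain conjugate after pull-back and that the angle directions lie $\Om$-orthogonally to the $(p_s,q^s)$-plane; both properties follow from the Casimir property of the $I_\la$ combined with the standard symplectic-realization arguments for transverse Poisson structures.
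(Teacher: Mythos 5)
You should first note that the paper does not actually prove Theorem \ref{nc0'}: it is quoted from \cite{fior2,ijgmmp09a,book15}, so the only possible comparison is with the standard proof in that literature. Your outline --- pass from the non-commuting fields $\vt_i$ to the $m$ mutually commuting Hamiltonian vector fields $\vt_{F^*C_\la}$ of pulled-back Casimir functions, integrate them to a transitive, locally free $\mathbb R^m$-action on each connected fibre, quotient by the period lattice to get $\mathbb R^{m-r}\times T^r$, and then assemble bundle action-angle coordinates --- is indeed the route taken there. However, two of your steps have genuine gaps. The first is completeness: you assert that $\vt_{F^*C_\la}$ is complete on a fibre because it is a constant linear combination of the complete fields $\vt_i$. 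A constant linear combination of complete vector fields need not be complete; that implication requires the fields to commute, and for $k>n$ the $\vt_i$ do not commute ($[\vt_i,\vt_j]=\vt_{s_{ij}\circ F}\neq 0$) and are in general not even tangent to the fibre $M$, which is isotropic of dimension $m<k$, so ``complete there'' does not parse. This is precisely the delicate point of the noncompact extension, and it is why the paper's Remark \ref{zz90} proposes replacing the hypothesis on the $\vt_i$ by completeness of the Hamiltonian vector fields of the pulled-back Casimirs themselves; your argument needs either that hypothesis or an actual derivation of it from the stated one.

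The second gap is in part (III). Taking $I_\la=F^*C_\la$ for an arbitrary choice of Casimirs and letting $y^\la$ be the flow parameters gives $\vt_{I_\la}y^\m=\dl^\m_\la$ by construction, but the remaining relations $\{y^\la,y^\m\}=0$ and $\{y^\la,p_s\}=\{y^\la,q^s\}=0$ are not automatic: they constitute the technical core of the Mishchenko--Fomenko construction, in which the actions must be corrected to specific functions of the Casimirs determined by the period lattice (so that the toroidal periods are $2\pi$) and the angles adjusted by a generating-function argument exploiting the isotropy of the fibres. Declaring that these ``follow from the Casimir property combined with the standard symplectic-realization arguments'' asserts the conclusion rather than proving it. There is also a smaller slip: the Darboux coordinates $(p_s,q^s)$ live \emph{along} the symplectic leaves of $\{,\}_N$, which are the common level sets of the Casimirs, not on a transversal to them --- the transverse Poisson structure vanishes because the corank of $\{,\}_N$ equals the codimension of its leaves.
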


\begin{remark} \label{zz90}  The condition of the completeness of Hamiltonian
vector fields of the generating functions $F_i$ in Theorem
\ref{nc0'} is rather restrictive. One can replace this condition
with that the Hamiltonian vector fields of the pull-back onto $Z$
of Casimir functions on $N$ are complete.
\end{remark}

If the conditions of Theorem \ref{nc0'} are replaced with that
fibres of the fibred manifold $F$ (\ref{nc4}) are compact and
connected, this theorem restarts the Mishchenko--Fomenko theorem
as follows.

\begin{theorem} \label{nc0} \mar{nc0}
Let the fibres of the fibred manifold $F$ (\ref{nc4}) be connected
and compact. Then they are diffeomorphic to a torus $T^m$, and
statements (II) -- (III) of Theorem \ref{nc0'} hold.
\end{theorem}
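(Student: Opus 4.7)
The plan is to derive Theorem \ref{nc0} as a direct corollary of Theorem \ref{nc0'}. To do this I need to verify the two hypotheses of \ref{nc0'} from the stronger assumption of compactness, namely (a) completeness of the relevant Hamiltonian vector fields and (b) mutual diffeomorphism of the fibres of $F$. Granted these, the conclusion is just specialization.

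First I would check completeness. By Theorem \ref{nc8}, the fibres of $F:Z\to N$ coincide with the maximal integral manifolds of the involutive distribution spanned by the Hamiltonian vector fields of the pull-backs $F^*C$ of Casimir functions $C$ of the coinduced Poisson structure (\ref{cmp1}) on $N$; equivalently, each Hamiltonian vector field $\vt_{F_i}$ is tangent to the fibres of $F$. Since every fibre is assumed compact, every maximal integral curve of $\vt_{F_i}$ (respectively, of a Hamiltonian vector field of a pulled-back Casimir) is contained in a compact set, so it is defined for all $t\in\mathbb R$. Thus the completeness hypothesis of Theorem \ref{nc0'} holds; if one prefers the weaker form noted in Remark \ref{zz90}, exactly the same argument applies to the pulled-back Casimirs.

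Next I would check mutual diffeomorphism of fibres. The map $F:Z\to N$ is a surjective submersion by item (i) of Definition \ref{i0}, its fibres are compact, and $N\subset\mathbb R^k$ is connected. Hence $F$ is a proper submersion, and by Ehresmann's fibration theorem it is a locally trivial fibration over $N$. Since $N$ is connected, all fibres are pairwise diffeomorphic, and they are connected by hypothesis. With both hypotheses of Theorem \ref{nc0'} verified, I may apply it: the fibres are diffeomorphic to a toroidal cylinder $\mathbb R^{m-r}\times T^r$, and the neighbourhood $U_M$ admits the bundle action-angle coordinate description of statements (II)--(III).

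Finally, compactness of the fibre forces $\mathbb R^{m-r}$ to be a point, i.e.\ $r=m$, so the fibres are diffeomorphic to the torus $T^m$. The remaining content of Theorem \ref{nc0} is literally the transcription of statements (II)--(III) of Theorem \ref{nc0'} with this value of $r$, including the symplectic normal form (\ref{cmp6}) on $U_M$. The only genuinely substantive step here is the completeness argument, which however is essentially automatic once one recognises that $\vt_{F_i}$ is fibre-tangent; the remaining work is organisational and the theorem reduces cleanly to its noncompact generalisation.
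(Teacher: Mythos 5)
Your overall strategy --- verify the hypotheses of Theorem \ref{nc0'} and then specialize --- is exactly the paper's route: its entire justification is the remark following the theorem, which asserts completeness from compactness and says nothing else. Your Ehresmann argument for the mutual diffeomorphism of the fibres is a genuine addition, since the paper passes over that hypothesis in silence; note only that your inference ``fibres compact, hence $F$ proper'' really uses the \emph{connectedness} of the fibres as well (a submersion with compact but disconnected fibres need not be proper), so the correct order is: compact connected fibres give local triviality, hence properness and pairwise diffeomorphic fibres over the connected base $N$.

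There is, however, one wrong step in your completeness argument. You claim that, by Theorem \ref{nc8}, ``each Hamiltonian vector field $\vt_{F_i}$ is tangent to the fibres of $F$.'' This is false for a noncommutative CIS: by (\ref{nc1}) one has $\vt_{F_i}\rfloor dF_j=\{F_i,F_j\}=s_{ij}\circ F$, and for $k>n$ the matrix $\mathbf{s}$ is nonzero by item (iii) of Definition \ref{i0}, so some $\vt_{F_i}$ is transverse to the common level sets of the $F_j$; its integral curves leave the fibre and are not a priori confined to any compact set. Theorem \ref{nc8}(iii) makes only the Hamiltonian vector fields of the pulled-back \emph{Casimir} functions fibre-tangent. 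Hence the completeness of all the $\vt_{F_i}$ is not ``essentially automatic'' from compactness, and your proof survives only through the fallback you mention in passing: invoke Remark \ref{zz90}, replace hypothesis (i) of Theorem \ref{nc0'} by completeness of the Hamiltonian vector fields of $F^*C$ for Casimir functions $C$, and for those the compact-fibre argument is valid. That fallback is in fact what the paper's own remark is implicitly using (it speaks of the $m$ fields $\up_\la$, i.e.\ the Casimir-generated ones), so it should be presented as the argument itself rather than as an optional variant.
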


\begin{remark}
In Theorem \ref{nc0}, the Hamiltonian vector fields $\up_\la$ are
complete because fibres of the fibred manifold $F$ (\ref{nc4}) are
compact. As well known, any vector field on a compact manifold is
complete.
\end{remark}

To study a CIS, one conventionally considers it with respect to
action-angle coordinates. A problem is that an action-angle
coordinate chart on an open subbundle $U$ of the fibred manifold
$Z\to N$ (\ref{nc4}) in Theorem \ref{nc0'} is local. The following
generalizes this theorem to the case of global action-angle
coordinates.

\begin{definition}  \label{cmp30a} \mar{cmp30a} The CIS
$F$ (\ref{i00}) on a symplectic manifold $(Z,\Om)$ in Definition
\ref{i0} is called globally integrable (or, shortly, global) if
there exist global action-angle coordinates
\beq
(I_\la, x^A, y^\la), \qquad \la=1,\ldots,m, \qquad A=1,\ldots,
2(n-m), \label{cmp31a}
\eeq
such that: (i) the action coordinates $(I_\la)$ are expressed in
values of some Casimir functions $C_\la$ on a Poisson manifold
$(N,\{,\}_N)$, (ii) the angle coordinates $(y^\la)$ are
coordinates on the toroidal cylinder $\mathbb R^{m-r}\times T^r$,
$0\leq r\leq m$, and (iii) a symplectic form $\Om$ on $Z$ reads
\beq
\Om= dI_\la\w d y^\la +\Om_{AB}(I_\m,x^C) dx^A\w dx^B.
\label{cmp32}
\eeq
\end{definition}

It is readily observed that the action-angle coordinates on $U$ in
Theorem \ref{nc0'} are global on $U$ in accordance with Definition
\ref{cmp30a}.

Forthcoming Theorem \ref{cmp34} provides the sufficient conditions
of the existence of global action-angle coordinates of a CIS on a
symplectic manifold $(Z,\Om)$
\cite{book10,jmp07,ijgmmp09a,book15}. It generalizes the
well-known result for the case of compact invariant submanifolds
\cite{daz,fasso05}.

\begin{theorem} \label{cmp34}  \mar{cmp34} A CIS $F$ on
a symplectic manifold $(Z,\Om)$ is globally integrable if the
following conditions hold.

(i) Hamiltonian vector fields $\vt_i$ of the generating functions
$F_i$ are complete.

(ii) The fibred manifold $F$ (\ref{nc4}) is a fibre bundle with
connected fibres.

(iii) Its base $N$ is simply connected and the cohomology
$H^2(N;\mathbb Z)$ with coefficients in the constant sheaf
$\mathbb Z$ is trivial.

(iv) The coinduced Poisson structure $\{,\}_N$ on a base $N$
admits $m$ independent Casimir functions $C_\la$.
\end{theorem}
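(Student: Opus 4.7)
The plan is to globalize the local Mishchenko--Fomenko construction of Theorem \ref{nc0'} using the global hypotheses (ii)--(iv). The starting observation is that, by Theorem \ref{nc8}, the Hamiltonian vector fields $\up_\la=\vt_{F^*C_\la}$ of the pull-backs of the Casimir functions supplied by hypothesis (iv) are tangent to the fibres of $F:Z\to N$, pairwise commute (because $\{C_\la,C_\m\}_N=0$ and the coinduced bracket makes $F$ a Poisson morphism by Theorem \ref{nc7}), and are $\mathbb R$-linearly independent at every point since the $C_\la$ are independent on $N$ and the matrix $\mathbf s$ has corank $m$. Hypothesis (i) ensures each $\vt_i$ is complete, and a standard argument then forces completeness of the commuting family $\up_\la$, which therefore integrates to a smooth action of $\mathbb R^m$ on $Z$ preserving the fibres of $F$.

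First I would identify the fibre type. Each orbit of the $\mathbb R^m$-action is an $m$-dimensional isotropic submanifold of a fibre, and by the dimension count $k-m=2(k-n)$ underlying Theorem \ref{nc8} it in fact fills out the fibre. Since fibres are connected by hypothesis (ii), each fibre is a homogeneous space $\mathbb R^m/H$ for some discrete isotropy subgroup $H$, hence a toroidal cylinder $\mathbb R^{m-r}\times T^r$ with $0\le r\le m$. The integer $r$ is locally constant on $N$, and $N$ is connected (being simply connected by (iii)), so $r$ is independent of the fibre. Thus $F:Z\to N$ acquires the structure of a principal bundle with structure group $G=\mathbb R^{m-r}\times T^r$.

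Next I would show that this principal bundle is trivial. The classifying space of $G$ is homotopy equivalent to $BT^r\simeq K(\mathbb Z,2)^r$ (the $\mathbb R^{m-r}$ factor being contractible), so isomorphism classes of principal $G$-bundles over $N$ are in bijection with $H^2(N;\mathbb Z)^r$, which vanishes by hypothesis (iii); simple connectedness of $N$ rules out any $\pi_1$ obstruction to extending a trivialization. Equivalently, one patches the local action-angle trivializations from Theorem \ref{nc0'} over a good cover of $N$: on overlaps two such charts differ by a locally constant element of $G$, producing a \v{C}ech class in $H^1(N;G)$, and the short exact sequence $0\to\mathbb Z^r\to\mathbb R^m\to G\to 0$ combined with $H^1(N;\mathbb R)=0$ and $H^2(N;\mathbb Z)=0$ gives $H^1(N;G)=0$. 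A global section $\si:N\to Z$ therefore exists. Using $\si$, I define global angle coordinates $(y^\la)$ on $Z$ by $z=\exp(y^\la\up_\la)\si(F(z))$, global action coordinates $I_\la=C_\la\circ F$, and Darboux coordinates $(x^A)$ on the symplectic leaves of $(N,\{,\}_N)$, which by (iv) and Theorem \ref{nc8} have dimension $2(n-m)$. Theorem \ref{nc0'} guarantees locally the form $dI_\la\w dy^\la$ plus a leafwise piece, so it remains to verify that the globally defined $(y^\la)$ differ from the local angles from Theorem \ref{nc0'} by translations constant along the fibres, whence the expression (\ref{cmp32}) holds globally.

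The main obstacle is the triviality of the principal $G$-bundle $F:Z\to N$: this is where hypotheses (iii) are essential, and one must treat the mixed group $\mathbb R^{m-r}\times T^r$ whose rank $r$ is determined by the intrinsic geometry of the fibres rather than chosen a priori. A subtle accompanying point is verifying that the $\mathbb R^m$-action generated by the pulled-back Casimirs is in fact complete, since the $\up_\la$ are $F$-dependent linear combinations of the (complete but not generally commuting) $\vt_i$; the standard workaround exploits completeness of the $\vt_i$ together with the commutativity of the $\up_\la$ to extend local flows of $\up_\la$ across the toroidal-cylindrical orbits inside each fibre.
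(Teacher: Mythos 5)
The paper itself gives no proof of Theorem \ref{cmp34}: it is stated with references to the literature (Dazord--Delzant for compact fibres, and the author's works with Fiorani for the noncompact case), and the surrounding text only records two of its consequences, namely that condition (iii) forces the fibre bundle $F$ to be trivial and that the theorem can be recast as Theorem \ref{cmp36}. Your sketch reconstructs exactly the route those references take: pull back the $m$ independent Casimir functions of (iv) to get commuting Hamiltonian vector fields $\up_\la$ tangent to the fibres (Theorems \ref{nc7}, \ref{nc8}), integrate them to an $\mathbb R^m$-action whose orbits fill the connected fibres, identify the fibres with a fixed toroidal cylinder $\mathbb R^{m-r}\times T^r$, use (iii) to kill the obstruction in $H^2(N;\mathbb Z)^r$ to trivializing the resulting principal bundle, and read off global action-angle coordinates from a global section together with leafwise Darboux coordinates on $N$. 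This is consistent with everything the paper says about the theorem, so the approach is the intended one.

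Two steps of your outline carry the real weight and are only gestured at. First, completeness of the $\up_\la$: hypothesis (i) concerns the $\vt_i$ of the generating functions, which in the noncommutative case are neither tangent to the fibres nor mutually commuting, while $\up_\la=\sum_i(\dr^iC_\la\circ F)\vt_i$ has coefficients constant only along each fibre; a ``constant-coefficient combination of complete fields'' is not automatically complete, and the paper's own Remark \ref{zz90} signals that this passage is nontrivial by offering completeness of the Casimir fields as an alternative hypothesis. You correctly flag this, but the ``standard workaround'' needs to be spelled out. Second, the normalization of the symplectic form: defining $y^\la$ by the flow of $\up_\la$ from an arbitrary global section gives $\Om=dI_\la\w dy^\la+(\mathrm{terms\ without\ }dy^\la)$, but obtaining the form (\ref{cmp32}), with no residual $dI_\la\w dx^A$ or $dI_\la\w dI_\m$ terms, requires correcting the section (choosing it appropriately isotropic), which is the closedness argument of Dazord--Delzant; your ``it remains to verify'' is precisely where that argument lives. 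Also, on overlaps the local action-angle charts differ by $G$-valued functions of the base that are smooth rather than locally constant, so the \v{C}ech computation should be run with the sheaf of smooth $G$-valued functions and the fine resolution $0\to\mathbb Z^r\to\underline{\mathbb R}{}^m\to\underline{G}\to 0$; the conclusion $H^1=H^2(N;\mathbb Z)^r=0$ is unchanged.
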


Theorem \ref{cmp34} restarts Theorem \ref{nc0'} if one considers
an open subset $V$ of $N$ admitting the Darboux coordinates $x^A$
on symplectic leaves of $U$. If invariant submanifolds of a CIS
are assumed to be compact, condition (i) of Theorem \ref{cmp34} is
unnecessary since vector fields $\vt_\la$ on compact fibres of $F$
are complete. In this case, Theorem \ref{cmp34} reproduces the
well known result in \cite{daz}.

Furthermore, one can show that condition (iii) of Theorem
\ref{cmp34} guarantee that fibre bundles $F$ in conditions (ii) of
these theorems are trivial \cite{book15}. Therefore, Theorem
\ref{cmp34} can be reformulated as follows.

\begin{theorem}  \label{cmp36} \mar{cmp36} A CIS $F$ on
a symplectic manifold $(Z,\Om)$ is global if and only if the
following conditions hold.

(i) The fibred manifold $F$ (\ref{nc4}) is a trivial fibre bundle.

(ii) The coinduced Poisson structure $\{,\}_N$ on a base $N$
admits $m$ independent Casimir functions $C_\la$ such that
Hamiltonian vector fields of their pull-back $F^*C_\la$ are
complete.
\end{theorem}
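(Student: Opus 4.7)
The plan is to derive Theorem~\ref{cmp36} as a reformulation of Theorem~\ref{cmp34}, using Definition~\ref{cmp30a} and Remark~\ref{zz90}, and exploiting the hint flagged in the text that the topological hypotheses of Theorem~\ref{cmp34}(iii) enter the argument only to force triviality of the fibre bundle $F\colon Z\to N$.

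For necessity ($\Rightarrow$), I would start from Definition~\ref{cmp30a}: assuming the CIS is globally integrable, there exist action--angle coordinates $(I_\la,x^A,y^\la)$ on all of $Z$ in which the symplectic form takes the canonical expression (\ref{cmp32}). The functions $(I_\la,x^A)$ are pulled back from $N$ while $(y^\la)$ parametrize the toroidal-cylinder fibres, which exhibits $Z\to N$ as a global product and gives~(i). For~(ii) I would take $C_\la$ to be the functions on $N$ whose pull-backs are $I_\la$; by Definition~\ref{cmp30a}(i) these are independent Casimir functions of $\{,\}_N$, while (\ref{cmp32}) gives that the Hamiltonian vector field of $F^*C_\la$ equals $\dr/\dr y^\la$, which is complete on a toroidal cylinder.

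For sufficiency ($\Leftarrow$), I would invoke Theorem~\ref{cmp34} with its completeness hypothesis weakened as in Remark~\ref{zz90}. Condition~(ii) of Theorem~\ref{cmp36} supplies both the $m$ independent Casimir functions demanded by Theorem~\ref{cmp34}(iv) and the completeness of the Hamiltonian vector fields of their pull-backs, which is exactly the replacement sanctioned by Remark~\ref{zz90}. Condition~(i) of Theorem~\ref{cmp36} in turn makes $F$ a fibre bundle whose fibres are mutually diffeomorphic; their connectedness follows from Theorem~\ref{nc8}(iii), since each fibre is an orbit of a connected abelian group of flows generated by the Hamiltonian vector fields of $F^*C_\la$. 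The topological conditions of Theorem~\ref{cmp34}(iii) are now redundant: the remark preceding Theorem~\ref{cmp36} isolates them as serving only to produce the triviality that we are now assuming. Setting $I_\la:=F^*C_\la$ as action coordinates and transporting angle coordinates from a single fibre via the global trivialization of (i) then yields coordinates of exactly the form required by Definition~\ref{cmp30a}.

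The main obstacle is the bookkeeping task in the sufficiency direction: pinning down precisely where, in the proof of Theorem~\ref{cmp34}, simple-connectedness of $N$ and the vanishing of $H^2(N;\mathbb Z)$ are invoked. The expected outcome is that they are used only to pass from a local principal-bundle structure with structure group $\mathbb R^{m-r}\times T^r$ to a global trivialization, and to rule out monodromy of the angle variables along loops in $N$. Once that step is factored out and replaced by the standing assumption that $F$ is already trivial, together with Remark~\ref{zz90} taking care of the completeness issue, the remainder of the construction in Theorem~\ref{cmp34} transports verbatim and produces the global action--angle chart demanded by Definition~\ref{cmp30a}.
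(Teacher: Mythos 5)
Your proposal follows the paper's own route: the paper justifies Theorem \ref{cmp36} only by the one-line remark that condition (iii) of Theorem \ref{cmp34} serves to trivialize the fibre bundle $F$ (deferring details to the cited monograph), and your sufficiency argument --- triviality plus Remark \ref{zz90} substituting for conditions (i) and (iii) of Theorem \ref{cmp34} --- is precisely that reformulation, while your necessity direction reads conditions (i) and (ii) off the global action-angle chart of Definition \ref{cmp30a} exactly as intended (the Hamiltonian vector field of $F^*C_\la$ being $\dr/\dr y^\la$ by the form (\ref{cmp32}), hence complete). The residual bookkeeping you flag --- verifying that simple connectedness and the vanishing of $H^2(N;\mathbb Z)$ enter the proof of Theorem \ref{cmp34} only through bundle triviality --- is likewise left to the references by the paper itself, so your reconstruction is as complete as the source and correctly supplies the necessity half that Theorem \ref{cmp34}, being a pure sufficiency statement, cannot give.
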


Bearing in mind the autonomous CIS (\ref{09136}), let us turn to
autonomous CISs whose generating functions are integrals of
motion, i.e., they are in involution with a Hamiltonian $\cH$, and
the functions $(\cH,F_1,\ldots,F_k)$ are nowhere independent,
i.e.,
\ben
&&\{\cH, F_i\}=0, \label{cmp11}\\
&& d\cH\w(\op\w^kdF_i)=0. \label{cmp11'}
\een
Let us note that, in accordance with item (ii) of Theorem
\ref{cmp36} and forthcoming Theorem \ref{cmp12b},  the Hamiltonian
vector field of a Hamiltonian $\cH$ of a CIS always is complete.

\begin{theorem} \label{cmp12b}
It follows from the equality (\ref{cmp11'}) that a Hamiltonian
$\cH$ is constant on the invariant submanifolds. Therefore, it is
the pull-back of a function on $N$ which is a Casimir function of
the Poisson structure (\ref{cmp1}) because of the conditions
(\ref{cmp11}).
\end{theorem}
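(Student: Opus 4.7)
The plan is to establish the two assertions of the theorem separately. \emph{First}, I would unpack the hypothesis (\ref{cmp11'}). Because the $F_i$ are independent by item (i) of Definition \ref{i0}, the $k$-form $\op\w^k dF_i$ is nowhere vanishing, so the identity $d\cH\w(\op\w^k dF_i)=0$ forces $d\cH$ to lie pointwise in the submodule of $T^*Z$ generated by $dF_1,\ldots,dF_k$. Equivalently, $d\cH$ annihilates every tangent vector in $\bigcap_i \Ker dF_i$. But at each point of an invariant submanifold $M$ (i.e.\ a fibre of $F$), one has $TM=\bigcap_i \Ker dF_i$. Hence the restriction of $\cH$ to $M$ has vanishing differential along $M$, and is therefore locally constant on $M$. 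Connectedness of the fibres (assumed in the ambient setup, cf.\ Remark \ref{cmp8} and Theorem \ref{cmp36}) then promotes this to global constancy on each invariant submanifold, which is the first claim.

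\emph{Second}, the constancy of $\cH$ on fibres of the submersion $F:Z\to N$ is exactly equivalent to the existence of a smooth function $g$ on $N$ with $\cH=F^*g$. To identify $g$ as a Casimir function of the coinduced Poisson structure $\{,\}_N$ (which exists by Theorem \ref{nc7}), I would exploit the fact that $F$ is a Poisson morphism, i.e.\ $\{F^*a,F^*b\}_Z=F^*\{a,b\}_N$ for all $a,b\in C^\infty(N)$. Taking $a=g$ and $b=x_i$, where the $x_i$ are the coordinates on $N$ from Definition \ref{i0}(i) so that $F_i=F^*x_i$, condition (\ref{cmp11}) gives
\be
0=\{\cH,F_i\}_Z=\{F^*g,F^*x_i\}_Z=F^*\{g,x_i\}_N.
\ee
Since $F$ is surjective onto $N$, this yields $\{g,x_i\}_N=0$ for every $i$; bilinearity of the bracket and the fact that the $x_i$ span $T^*N$ at each point then give $\{g,h\}_N=0$ for every $h\in C^\infty(N)$, so $g$ is a Casimir function, as required.

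The only mildly delicate step is the bookkeeping in the first part: one has to be sure that local constancy of $\cH$ along fibres upgrades to constancy on each invariant submanifold, which rests on the standing connectedness assumption on the fibres of $F$. Beyond this, the argument is essentially a direct consequence of the defining conditions (\ref{cmp11}) and (\ref{cmp11'}) combined with Theorem \ref{nc7}; no further machinery is needed.
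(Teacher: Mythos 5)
Your proof is correct and follows essentially the same route as the paper, which gives no separate proof but embeds the argument in the statement itself: condition (\ref{cmp11'}) together with the pointwise independence of the $dF_i$ forces $d\cH$ into their span, hence $\cH$ is (fibrewise) constant and descends to a function on $N$, and condition (\ref{cmp11}) combined with the Poisson-morphism property of $F$ from Theorem \ref{nc7} identifies that function as a Casimir. Your two refinements --- flagging that global constancy on a fibre needs the standing connectedness assumption, and noting that $\{g,x_i\}_N=0$ for all coordinates $x_i$ kills the bracket with every function (via the derivation property of the bracket rather than mere bilinearity) --- are exactly the details the paper leaves implicit.
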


Theorem \ref{cmp12b} leads to the following.

\begin{theorem} \label{zz1}
Let $\cH$ be a Hamiltonian of an autonomous global CIS provided
with the action-angle coordinates $(I_\la, x^A, y^\la)$
(\ref{cmp31a}). Then a Hamiltonian $\cH$ depends only on the
action coordinates $I_\la$. Consequently, the Hamilton equation of
a global CIS takes a form
\be
\dot y^\la=\frac{\dr \cH}{\dr I_\la}, \qquad
I_\la=\mathrm{const.}, \qquad x^A=\mathrm{const.}
\ee
\end{theorem}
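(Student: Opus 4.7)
The strategy is to combine the structural result of Theorem \ref{cmp12b} with the Darboux-like form (\ref{cmp32}) of $\Om$ furnished by the global action-angle description. Since an autonomous CIS with Hamiltonian $\cH$ satisfies (\ref{cmp11})--(\ref{cmp11'}), Theorem \ref{cmp12b} gives that $\cH$ is the pull-back of some Casimir function $\wt\cH$ of the coinduced Poisson bracket $\{,\}_N$ on the base $N$ of the fibration $F:Z\to N$ in (\ref{nc4}). The problem thus reduces to identifying the Poisson centre of $\{,\}_N$ in the coordinates on $N$ supplied by the global action-angle atlas.

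The central step is to read off this centre from (\ref{cmp32}). The block $dI_\la\w dy^\la$ is Darboux, hence $\{I_\la,I_\m\}=0$, $\{I_\la,x^A\}=0$ and $\{I_\la,y^\m\}=\dl^\m_\la$, whereas the transverse block $\Om_{AB}(I_\m,x^C)$ couples only the $x^A$ among themselves. Pull-backs from $N$ are, in the action-angle chart, precisely the functions of $(I_\la,x^A)$, so each $I_\la$ Poisson-commutes with every such pull-back; thus $I_1,\ldots,I_m$ are $m$ independent Casimirs of $\{,\}_N$. Item (iii) of Definition \ref{i0} pins the corank of $\{,\}_N$ at exactly $m=2n-k$, so these exhaust a functionally complete family of Casimirs, and every Casimir on $N$ is necessarily a function of $(I_\la)$ alone. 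Consequently $\wt\cH$, and hence $\cH=F^*\wt\cH$, depends only on the action coordinates.

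It remains to derive the Hamilton equations. Writing $\vt_\cH=\al^\la\dr_{y^\la}+\bt^\la\dr_{I_\la}+\g^A\dr_{x^A}$ and contracting with (\ref{cmp32}), the $dy^\la$-component of $\vt_\cH\rfloor\Om$ is $\bt^\la$, the $dI_\la$-component is $-\al^\la$, and the $dx^B$-components are linear in $\g^A$ with coefficients given by $\Om_{AB}$. Equating to $-d\cH=-(\dr\cH/\dr I_\la)\,dI_\la$ forces $\bt^\la=0$, $\al^\la=\dr\cH/\dr I_\la$, and the $\g^A$-equations admit only $\g^A=0$ because $\Om_{AB}$, being the transverse part of a symplectic form of total dimension $2n$, is nondegenerate of size $2(n-m)$. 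Reading off $\dot I_\la=\bt^\la$, $\dot x^A=\g^A$, $\dot y^\la=\al^\la$ yields the displayed equations. The main obstacle is the centre-counting argument of the second paragraph: Definition \ref{cmp30a} only asserts that the $I_\la$ arise as values of some Casimirs, so one must invoke both the corank datum from Definition \ref{i0} and the functional independence of the $I_\la$ to conclude that they exhaust the Poisson centre of $\{,\}_N$.
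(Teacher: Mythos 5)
Your proposal is correct and follows the same route the paper intends: the paper derives Theorem \ref{zz1} directly from Theorem \ref{cmp12b} (the Hamiltonian is the pull-back of a Casimir of $\{,\}_N$) together with the fact that, in the global action-angle chart of Definition \ref{cmp30a}, the Casimirs of the coinduced Poisson structure are exactly the functions of the action coordinates $I_\la$, and then reads off the Hamilton equation from the symplectic form (\ref{cmp32}). Your write-up merely makes explicit the corank-counting and the contraction $\vt_\cH\rfloor\Om=-d\cH$ that the paper leaves implicit, so no substantive difference.
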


\begin{remark}
Given a Hamiltonian $\cH$ of a Hamiltonian system on a symplectic
manifold $Z$, it may happen that we have different CISs on
different open subsets of $Z$. For instance, this is the case of
the global Kepler problem (Section 9).
\end{remark}

\begin{remark} \label{nn240} \mar{nn240}
Bearing in mind again the autonomous CIS (\ref{09136}), let us
also consider CISs whose generating functions $\{F_1,\ldots,F_k\}$
form a $k$-dimensional real Lie algebra $\cG$ of corank $m$ with
commutation relations
\beq
\{F_i,F_j\}= c_{ij}^h F_h, \qquad c_{ij}^h=\mathrm{const.}
\label{zz60}
\eeq
Then $F$ (\ref{nc4}) is a momentum mapping of $Z$ to the Lie
coalgebra $\cG^*$ provided with the coordinates $x_i$ in item (i)
of Definition \ref{i0} \cite{book05,guil}. In this case, the
coinduced Poisson structure $\{,\}_N$ coincides with the canonical
Lie--Poisson structure on $\cG^*$ given by the Poisson bivector
field
\be
w=\frac12 c_{ij}^h x_h\dr^i\w\dr^j.
\ee
Let $V$ be an open subset of $\cG^*$ such that conditions (i) and
(ii) of Theorem \ref{cmp36} are satisfied. Then an open subset
$F^{-1}(V)\subset Z$ is provided with the action-angle
coordinates. Let Hamiltonian vector fields $\vt_i$ of the
generating functions $F_i$ which form a Lie algebra $\cG$ be
complete. Then they define a locally free Hamiltonian action on
$Z$ of some simply connected Lie group $G$ whose Lie algebra is
isomorphic to $\cG$ \cite{palais}. Orbits of $G$ coincide with
$k$-dimensional maximal integral manifolds of the regular
distribution $\cV$ on $Z$ spanned by Hamiltonian vector fields
$\vt_i$ \cite{susm}. Furthermore, Casimir functions of the
Lie--Poisson structure on $\cG^*$ are exactly the coadjoint
invariant functions on $\cG^*$. They are constant on orbits of the
coadjoint action of $G$ on $\cG^*$ which coincide with leaves of
the symplectic foliation of $\cG^*$.
\end{remark}

Now, let us return to the autonomous CIS (\ref{09136}) on
homogeneous phase space of non-autonomous mechanics.

There is the commutative diagram
\be
\begin{array}{rcccl}
&T^*Q &\ar^\zeta & V^*Q&\\
_{\wt \F}& \put(0,10){\vector(0,-1){20}} & & \put(0,10){\vector(0,-1){20}}&_\F\\
& N' &\ar^\xi & N&
\end{array}
\ee
where $\zeta$ (\ref{b418a}) and $\xi:N'=\mathbb R\times N\to N$
are trivial bundles. It follows that the fibred manifold
(\ref{09140}) is the pull-back $\wt \F=\xi^* \F$ of the fibred
manifold $\F$ (\ref{nc4x}) onto $N'$.

Let the conditions of Theorem \ref{nc0'} hold. If the Hamiltonian
vector fields
\be
(\g_H,\vt_{\F_1},\ldots,\vt_{\F_k}),\qquad \vt_{\F_\al}=
\dr^i\F_\al\dr_i- \dr_i\F_\al\dr^i,
\ee
of integrals of motion $\F_\al$ on $V^*Q$ are complete, the
Hamiltonian vector fields
\be
(u_{\cH^*},u_{\zeta^*\F_1},\ldots,u_{\zeta^*\F_k}), \qquad
u_{\zeta^*\F_\al} = \dr^i\F_\al\dr_i- \dr_i\F_\al\dr^i,
\ee
on $T^*Q$ are complete. If fibres of the fibred manifold $\F$
(\ref{nc4x}) are connected and mutually diffeomorphic, the fibres
of the fibred manifold $\wt\F$ (\ref{09140}) also are well.

Let $M$ be a fibre of $\F$ (\ref{nc4x}) and $h(M)$ the
corresponding fibre of $\wt\F$ (\ref{09140}). In accordance with
Theorem \ref{nc0'}, there exists an open neighborhood $U'$ of
$h(M)$ which is a trivial principal bundle with the structure
group
\beq
\mathbb R^{1+m-r}\times T^r \label{g120z}
\eeq
whose bundle coordinates are the action-angle coordinates
\beq
(I_0,I_\la,t,y^\la,p_A,q^A), \qquad A=1,\ldots,n-m, \qquad
\la=1,\ldots, k,\label{09135'}
\eeq
such that:

(i) $(t,y^\la)$ are coordinates on the toroidal cylinder
(\ref{g120z}),

(ii) the symplectic form $\Om_T$ on $U'$ reads
\be
\Om_T= dI_0\w dt + dI_\al\w dy^\al + dp_A\w dq^A,
\ee

(iii) the action coordinates $(I_0,I_\al)$ are expressed in values
of the Casimir functions $C_0=I_0$, $C_\al$ of the coinduced
Poisson structure $w=\dr^A\w\dr_A$ on $N'$,

(iv) a homogeneous Hamiltonian $\cH^*$ depends on action
coordinates, namely, $\cH^*=I_0$,

(iv) the integrals of motion $\zeta^*\F_1, \ldots \zeta^*\F_k$ are
independent of coordinates $(t,y^\la)$.

Provided with the action-angle coordinates (\ref{09135'}), the
above mentioned neighborhood $U'$ is a trivial bundle $U'=\mathbb
R\times U_M$ where $U_M=\zeta(U')$ is an open neighborhood of the
fibre $M$ of the fibre bundle $\F$ (\ref{nc4x}). As a result, we
come to the following.

\begin{theorem} \label{nc0'x}
Let symmetries $\up_\al$ in Definition \ref{nn209} be complete,
and let fibres of the fibred manifold $\F$ (\ref{nc4x}) defined by
the corresponding conserved integrals of motion be connected and
mutually diffeomorphic. Then there exists an open neighborhood
$U_M$ of a fibre $M$ of $\F$ (\ref{nc4x}) which is a trivial
principal bundle with a structure group (\ref{g120z}) whose bundle
coordinates are the action-angle coordinates
\beq
(p_A,q^A,I_\la,t,y^\la), \qquad A=1,\ldots,k-n, \qquad
\la=1,\ldots, m,\label{09135}
\eeq
such that:

(i) $(t,y^\la)$ are coordinates on the toroidal cylinder
(\ref{g120z}),

(ii) the Poisson bracket $\{,\}_V$ on $U_M$ reads
\be
\{f,g\}_V = \dr^Af\dr_Ag-\dr^Ag\dr_Af + \dr^\la f\dr_\la g-\dr^\la
g\dr_\la f,
\ee

(iii) a Hamiltonian $\cH$ depends only on action coordinates
$I_\la$,

(iv) the integrals of motion $\F_1, \ldots \F_k$ are independent
of coordinates $(t,y^\la)$.
\end{theorem}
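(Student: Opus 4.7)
The plan is to reduce the non-autonomous statement to the autonomous symplectic CIS on $T^*Q$ and then simply push the structure down along the affine bundle $\zeta:T^*Q\to V^*Q$. The setup for this is already laid out in the paragraphs immediately preceding the theorem: the autonomous system $(T^*Q,\cH^*)$ of Theorem \ref{09121} carries the $k+1$ integrals of motion $(\cH^*,\zeta^*\F_1,\ldots,\zeta^*\F_k)$, the fibred manifold $\wt\F$ (\ref{09140}) is the pull-back $\xi^*\F$, and the Hamiltonian vector fields on $T^*Q$ of the pulled-back integrals have coordinate expressions identical to those on $V^*Q$. So completeness of the $\up_\al=\vt_{\F_\al}$ on $V^*Q$ transfers at once to completeness of $u_{\zeta^*\F_\al}$ on $T^*Q$, and the vector field $\vt_{\cH^*}$ (\ref{z5'}) is complete because it is $\zeta$-related to $\g_H$ which can be assumed complete (cf.\ Remark \ref{nn182}). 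Connectedness and mutual diffeomorphism of fibres of $\F$ transfer to $\wt\F$ via the trivial fibration $\xi$.

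With those hypotheses verified, I would apply the noncompact Mishchenko--Fomenko theorem (Theorem \ref{nc0'}) to the autonomous CIS $(\cH^*,\zeta^*\F_\al)$ on the symplectic manifold $(T^*Q,\Om_T)$. Taking the fibre $h(M)\subset T^*Q$ corresponding to $M\subset V^*Q$, this produces an open saturated neighborhood $U'$ of $h(M)$ which is a trivial principal bundle with structure group $\mathbb R^{1+m-r}\times T^r$ and action--angle coordinates $(I_0,I_\la,t,y^\la,p_A,q^A)$ in which $\Om_T=dI_0\w dt+dI_\la\w dy^\la+dp_A\w dq^A$, $\cH^*=I_0$, and the $\zeta^*\F_\al$ are independent of the cyclic coordinates $(t,y^\la)$. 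The first action coordinate is $I_0=p_0-h$ of (\ref{09151}), i.e.\ the fibre coordinate of the trivial affine bundle $\zeta$.

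Now I would push this chart down. Because $I_0$ is the affine fibre coordinate of $\zeta$, the neighborhood splits as $U'=\mathbb R\times U_M$ with $U_M=\zeta(U')$ an open neighborhood of $M$ in $V^*Q$, and the remaining coordinates $(p_A,q^A,I_\la,t,y^\la)$ descend to bundle coordinates on $U_M$ making it a trivial principal bundle with group (\ref{g120z}). Reading off $\Om_T=dI_0\w dt+\cdots$ through the coinduced Poisson relation (\ref{m72'}) gives the Poisson bracket on $U_M$ as $\{f,g\}_V=\dr^Af\dr_Ag-\dr^Ag\dr_Af+\dr^\la f\dr_\la g-\dr^\la g\dr_\la f$, which is item (ii). Since $\cH^*=I_0=p_0-h=p_0-\cH$ under (\ref{09151}), the Hamiltonian on $V^*Q$ satisfies $\cH=h$ and, being a Casimir function on $N$ (Theorem \ref{cmp12b} applied to the autonomous CIS), depends only on the action coordinates $I_\la$; this gives item (iii). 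Items (i) and (iv) are inherited directly from the chart on $U'$ by dropping the $I_0$-factor.

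The main technical point to get right is the descent along $\zeta$: one must check that the neighborhood produced by Theorem \ref{nc0'} on $T^*Q$ is indeed saturated with respect to translation in $I_0$, so that $\zeta(U')$ is open and the remaining coordinates descend unambiguously. This is where the precise identification of $I_0$ as the affine coordinate (\ref{09151}) and the fact that $\wt\F=\xi^*\F$ is a pull-back along the trivial bundle $\xi$ are essential; they force the $I_0$-direction to be a free factor in the fibre structure, so $U'$ can be chosen of the form $\mathbb R\times U_M$ without loss of generality. Everything else is a transcription of Theorem \ref{nc0'} through the commutative diagram.
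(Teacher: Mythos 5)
Your proposal follows essentially the same route as the paper: lift the non-autonomous CIS to the autonomous system $(\cH^*,\zeta^*\F_\al)$ on $(T^*Q,\Om_T)$, verify that completeness and the fibre conditions transfer through the trivial bundles $\zeta$ and $\xi$, apply Theorem \ref{nc0'} to get the chart $(I_0,I_\la,t,y^\la,p_A,q^A)$ on a saturated neighborhood $U'$ of $h(M)$, and then descend along $\zeta$ using $U'=\mathbb R\times U_M$ with $I_0$ the affine fibre coordinate (\ref{09151}). This matches the paper's argument step for step (your extra attention to why $U'$ splits off the $I_0$-factor only makes explicit what the paper asserts), so there is nothing to add.
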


\section{Global Kepler problem}

We provides a global analysis of the Kepler problem as an example
of a mechanical system which is characterized by its symmetries in
full. It falls into two distinct global CISs on different open
subsets of a phase space. Their integrals of motion form the Lie
algebras $so(3)$ and $so(2,1)$ with compact and noncompact
invariant submanifolds, respectively
\cite{book10,ijgmmp09a,book15}.

Let us consider a mechanical system of a point mass in the
presence of a central potential. Its configuration space is
\beq
Q=\mathbb R\times\mathbb R^3\to\mathbb R \label{0155}
\eeq
endowed with the Cartesian coordinates $(t,q^i)$, $i=1,2,3$.

A Lagrangian of this mechanical system reads
\beq
\cL=\frac12 \left(\op\sum_i(q^i_t)^2\right) - V(r), \qquad
r=\left(\op\sum_i(q^i)^2\right)^{1/2}. \label{042}
\eeq

The vertical vector fields
\beq
v^a_b=q^b\dr_a -q^a\dr_b \label{043}
\eeq
on $Q$ (\ref{0155}) are infinitesimal generators of a group
$SO(3)$ acting on $\mathbb R^3$. Their jet prolongations
(\ref{a23f41}) read
\be
J^1v^a_b=q^b\dr_a -q^a\dr_b + q^b_t\dr_a^t -q^a_t\dr_b^t.
\ee
It is easily justified that the vector fields (\ref{043}) are
exact symmetries of the Lagrangian (\ref{042}). In accordance with
Noether's first theorem, the corresponding conserved Noether
currents (\ref{z384}) are orbital momenta
\beq
M^a_b=\cJ_{v^a_b} =(q^a \pi_b - q^b\pi_a)=q^aq^b_t- q^b q^a_t.
\label{045}
\eeq
They are integrals of motion, which however fail to be
independent.

Let us consider the Lagrangian system (\ref{042}) where
\beq
V(r)=-\frac1r \label{049}
\eeq
is the Kepler potential.  This Lagrangian system possesses
additional integrals of motion
\beq
A^a=\op\sum_b(q^aq^b_t -q^bq^a_t)q^b_t -\frac{q^a}{r}, \label{050}
\eeq
besides the orbital momenta (\ref{045}). They are components of
the Rung--Lenz vector.

However, there is no Lagrangian symmetry on $Q$ (\ref{0155}) whose
symmetry currents are $A^a$ (\ref{050}).

 Let us consider a Hamiltonian Kepler system on
the configuration space $Q$ (\ref{0155}). Its phase space is
$V^*Q=\mathbb R\times\mathbb R^6$ coordinated by $(t,q^i,p_i)$.

It is readily observed that the Lagrangian (\ref{042}) with the
Kepler potential (\ref{049}) of a Kepler system is hyperregular.
The associated Hamiltonian form reads
\beq
H=p_idq^i-\left[\frac12 \left(\op\sum_i(p_i)^2\right)
-\frac1r\right]dt. \label{0157}
\eeq
The corresponding characteristic Lagrangian $L_H$ (\ref{Q33}) is
\beq
L_H=\left[p_iq^i_t - \frac12 \left(\op\sum_i(p_i)^2\right)
+\frac1r\right]dt. \label{158}
\eeq

Then a Hamiltonian Kepler system possesses the following integrals
of motion:

$\bullet$ an energy function $\cE=\cH$;

$\bullet$ orbital momenta
\beq
M^a_b =q^a p_b - q^bp_a; \label{0159}
\eeq

$\bullet$ components of the Rung--Lenz vector
\beq
A^a=\op\sum_b(q^ap_b -q^bp_a)p_b -\frac{q^a}{r}. \label{0160'}
\eeq
By virtue of the Noether's inverse first Theorem \ref{0150'},
these integrals of motion are the conserved currents of the
following Hamiltonian symmetries:

$\bullet$ the exact symmetry $\dr_t$,

$\bullet$ the exact vertical symmetries
\beq
\up^a_b=q^b \dr_a - q^a\dr_b - p_a\dr^b + p_b\dr^a, \label{0161}
\eeq

$\bullet$ the vertical symmetries
\beq
\up^a=\op\sum_b[p_b\up^a_b + (q^bp_a -q^ap_b)\dr_b]
-\dr_b\left(\frac{q^a}{r}\right)\dr^b. \label{0162}
\eeq

Note that the Hamiltonian symmetries $\up^a_b$ (\ref{0161}) are
the canonical lift (\ref{gm513}) onto $V^*Q$ of the vector fields
$v^a_b$ (\ref{043}) on $Q$, which thus are basic Hamiltonian
symmetries, and integrals of motion $M^a_b$ (\ref{0159}) are the
Noether currents (\ref{nn206}).

At the same time, the Hamiltonian symmetries (\ref{0162}) do not
come from any vector fields on a configuration space $Q$.
Therefore, in contrast with the Rung--Lenz vector (\ref{0162}) in
Hamiltonian mechanics, the Rung--Lenz vector (\ref{050}) in
Lagrangian mechanics fails to be a conserved current of a
Lagrangian symmetry.

As was mentioned above, the Hamiltonian symmetries of the Kepler
problem make up CISs. To analyze them, we further consider the
Kepler problem on a configuration space $\mathbb R^2$ without a
loss of generality.

Its phase space is $T^*\mathbb R^2=\mathbb R^4$ provided with the
Cartesian coordinates $(q_i,p_i)$, $i=1,2$, and the canonical
symplectic form
\beq
\Om_T=\op\sum_idp_i\w dq_i. \label{zz43}
\eeq
Let us denote
\be
p=\left(\op\sum_i(p_i)^2\right)^{1/2}, \qquad
r=\left(\op\sum_i(q^i)^2\right)^{1/2}, \qquad
(p,q)=\op\sum_ip_iq_i.
\ee
An autonomous Hamiltonian of the Kepler system reads
\beq
\cH=\frac12p^2-\frac1r \label{zz41}
\eeq
(cf. (\ref{0157})). The Kepler system is a Hamiltonian system on a
symplectic manifold
\beq
Z=\mathbb R^4\setminus \{0\} \label{zz42}
\eeq
endowed with the symplectic form $\Om_T$ (\ref{zz43}).

Let us consider functions
\ben
&& M_{12}=-M_{21}=q_1p_2-q_2p_1,  \label{zz44}\\
&& A_i=\op\sum_j M_{ij}p_j -\frac{q_i}{r}=q_ip^2 -p_i(p,q)-\frac{q_i}{r}, \qquad i=1,2,
\label{zz45}
\een
on the symplectic manifold $Z$ (\ref{zz42}). As was mentioned
above, they are integrals of motion of the Hamiltonian $\cH$
(\ref{zz41}) where $M_{12}$ is an angular momentum and $(A_i)$ is
the Rung--Lenz vector. Let us denote
\beq
M^2=(M_{12})^2, \qquad  A^2=(A_1)^2 + (A_a)^2=2M^2\cH+1.
\label{zz50}
\eeq

Let $Z_0\subset Z$ be a closed subset of points where $M_{12}=0$.
A direct computation shows that the functions $(M_{12},A_i)$
(\ref{zz44}) -- (\ref{zz45}) are independent on an open
submanifold
\beq
U=Z\setminus Z_0 \label{zz47}
\eeq
of $Z$. At the same time, the functions $(\cH,M_{12},A_i)$ are
independent nowhere on $U$ because it follows from the expression
(\ref{zz50}) that
\beq
\cH=\frac{A^2-1}{2M^2} \label{zz52}
\eeq
on $U$ (\ref{zz47}). The well known dynamics of the Kepler system
shows that the Hamiltonian vector field of its Hamiltonian is
complete on $U$ (but not on Z).

The Poisson bracket of integrals of motion $M_{12}$ (\ref{zz44})
and $A_i$ (\ref{zz45}) obeys relations
\ben
&& \{M_{12},A_i\}=\eta_{2i}A_1 -\eta_{1i}A_2, \label{zz56}\\
&& \{A_1,A_2\}=2\cH M_{12}=\frac{A^2-1}{M_{12}}, \label{zz57}
\een
where $\eta_{ij}$ is an Euclidean metric on $\mathbb R^2$. It is
readily observed that these relations take the form (\ref{nc1}).
However, the matrix function $\mathbf{s}$ of the relations
(\ref{zz56}) -- (\ref{zz57}) fails to be of constant rank at
points where $\cH=0$. Therefore, let us consider the open
submanifolds $U_-\subset U$ where $\cH<0$ and $U_+$ where $\cH>0$.
Then we observe that the Kepler system with the Hamiltonian $\cH$
(\ref{zz41}) and the integrals of motion $(M_{ij},A_i)$
(\ref{zz44}) -- (\ref{zz45}) on $U_-$ and the Kepler system with
the Hamiltonian $\cH$ (\ref{zz41}) and the integrals of motion
$(M_{ij},A_i)$ (\ref{zz44}) -- (\ref{zz45}) on $U_+$ are
noncommutative CISs. Moreover, these CISs can be brought into the
form (\ref{zz60}) as follows.

Let us replace the integrals of motions $A_i$ with the integrals
of motion
\beq
L_i=\frac{A_i}{\sqrt{-2\cH}} \label{zz61}
\eeq
on $U_-$, and with the integrals of motion
\beq
K_i=\frac{A_i}{\sqrt{2\cH}} \label{zz62}
\eeq
on $U_+$.

The CIS $(M_{12},L_i)$ on $U_-$ obeys relations
\beq
\{M_{12},L_i\}=\eta_{2i}L_1 -\eta_{1i}L_2, \qquad
\{L_1,L_2\}=-M_{12}. \label{zz67}
\eeq
Let us denote $M_{i3}=-L_i$ and put the indexes
$\m,\nu,\al,\bt=1,2,3$. Then the relations (\ref{zz67}) are
brought into a form
\beq
\{M_{\m\nu},M_{\al\bt}\}=\eta_{\m\bt}M_{\nu\al} +
\eta_{\nu\al}M_{\m\bt} -
\eta_{\m\al}M_{\nu\bt}-\eta_{\nu\bt}M_{\m\al} \label{zz68}
\eeq
where $\eta_{\m\nu}$ is an Euclidean metric on $\mathbb R^3$. A
glance at the expression (\ref{zz68}) shows that the integrals of
motion $M_{12}$ (\ref{zz44}) and $L_i$ (\ref{zz61}) constitute a
Lie algebra $\cG=so(3)$. Its corank equals 1. Therefore the CIS
$(M_{12}, L_i)$ on $U_-$ is maximally integrable. The equality
(\ref{zz52}) takes a form
\beq
M^2 +L^2=-\frac1{2\cH}. \label{zz100}
\eeq

The CIS $(M_{12},K_i)$ on $U_+$ obeys relations
\beq
\{M_{12},K_i\}=\eta_{2i}K_1 -\eta_{1i}K_2, \qquad
\{K_1,K_2\}=M_{12}. \label{zz77}
\eeq
Let us denote $M_{i3}=-K_i$ and put the indexes
$\m,\nu,\al,\bt=1,2,3$. Then the relations (\ref{zz77}) are
brought into a form
\beq
\{M_{\m\nu},M_{\al\bt}\}=\rho_{\m\bt}M_{\nu\al} +
\rho_{\nu\al}M_{\m\bt} -
\rho_{\m\al}M_{\nu\bt}-\rho_{\nu\bt}M_{\m\al} \label{zz78}
\eeq
where $\rho_{\m\nu}$ is a pseudo-Euclidean metric of signature
$(+,+,-)$ on $\mathbb R^3$. A glance at the expression
(\ref{zz78}) shows that the integrals of motion $M_{12}$
(\ref{zz44}) and $K_i$ (\ref{zz62}) constitute a Lie algebra
$so(2,1)$. Its corank equals 1. Therefore the CIS $(M_{12}, K_i)$
on $U_+$ is maximally integrable. The equality (\ref{zz52}) takes
a form
\beq
K^2 -M^2=\frac1{2\cH}. \label{zz101}
\eeq

Thus, the Kepler problem on a phase space $\mathbb R^4$ falls into
two different maximally integrable systems on open submanifolds
$U_-$ and $U_+$ of $\mathbb R^4$. We agree to call them the Kepler
CISs on $U_-$ and $U_+$, respectively.

Let us study the first one, and let us put
\ben
&& F_1=-L_1, \qquad F_2=-L_2, \qquad F_3=-M_{12}, \label{zz102}\\
&& \{F_1,F_2\}=F_3, \qquad \{F_2,F_3\}=F_1, \qquad
\{F_3,F_1\}=F_2.\nonumber
\een
We have a fibred manifold
\beq
F: U_-\to N\subset\cG^*, \label{zz103}
\eeq
which is the momentum mapping to a Lie coalgebra $\cG^*=so(3)^*$,
endowed with the coordinates $(x_i)$ such that integrals of motion
$F_i$ on $\cG^*$ read $F_i=x_i$ (Remark \ref{nn240}). A base $N$
of the fibred manifold (\ref{zz103}) is an open submanifold of
$\cG^*$ given by a coordinate condition $x_3\neq 0$. It is a union
of two contractible components defined by conditions $x_3>0$ and
$x_3<0$. The coinduced Lie--Poisson structure on $N$ takes a form
\beq
w= x_2\dr^3\w\dr^1 + x_3\dr^1\w\dr^2 + x_1\dr^2\w\dr^3.
\label{j51b}
\eeq

The coadjoint action of $so(3)$ on $N$ reads
\beq
\ve_1=x_3\dr^2-x_2\dr^3, \quad \ve_2=x_1\dr^3-x_3\dr^1, \quad
\ve_3=x_2\dr^1-x_1\dr^2. \label{kk605}
\eeq
Orbits of this coadjoint action are given by an equation
\beq
x_1^2 + x_2^2 + x_3^2=\mathrm{const}. \label{kk606}
\eeq
They are level surfaces of a Casimir function
\be
C=x_1^2 + x_2^2 + x_3^2
\ee
and, consequently, the Casimir function
\beq
h=-\frac12(x_1^2 + x_2^2 + x_3^2)^{-1}. \label{zz120}
\eeq
A glance at the expression (\ref{zz100}) shows that the pull-back
$F^*h$ of this Casimir function (\ref{zz120}) onto $U_-$ is the
Hamiltonian $\cH$ (\ref{zz41}) of the Kepler system on $U_-$.

As was mentioned above, the Hamiltonian vector field of $F^*h$ is
complete. Furthermore, it is known that invariant submanifolds of
the Kepler CIS on $U_-$ are compact. Therefore, the fibred
manifold $F$ (\ref{zz103}) is a fibre bundle. Moreover, this fibre
bundle is trivial because $N$ is a disjoint union of two
contractible manifolds. Consequently, it follows from Theorem
\ref{cmp36} that the Kepler CIS on $U_-$ is global, i.e., it
admits global action-angle coordinates as follows.

The Poisson manifold $N$ (\ref{zz103}) can be endowed with the
coordinates
\beq
(I,x_1,\g), \qquad I<0, \qquad \g\neq\frac{\pi}2,\frac{3\pi}2,
\label{j52x}
\eeq
defined by the equalities
\ben
&& I=-\frac12(x_1^2 + x_2^2 + x_3^2)^{-1}, \label{j52}\\
&& x_2=\left(-\frac1{2I}-x_1^2\right)^{1/2}\sin\g, \qquad
x_3=\left(-\frac1{2I}-x_1^2\right)^{1/2}\cos\g. \nonumber
\een
It is readily observed that the coordinates (\ref{j52x}) are
Darboux coordinates of the Lie--Poisson structure (\ref{j51b}) on
$U_-$, namely,
\beq
w=\frac{\dr}{\dr x_1}\w \frac{\dr}{\dr \g}. \label{j53}
\eeq

Let $\vt_I$ be the Hamiltonian vector field of the Casimir
function $I$ (\ref{j52}). Its flows are invariant submanifolds of
the Kepler CIS on $U_-$ (Remark \ref{nn240}). Let $\al$ be a
parameter along the flow of this vector field, i.e.,
\beq
\vt_I= \frac{\dr}{\dr \al}. \label{zz121}
\eeq
Then $U_-$ is provided with the action-angle coordinates
$(I,x_1,\g,\al)$ such that the Poisson bivector associated to the
symplectic form $\Om_T$ on $U_-$ reads
\beq
W= \frac{\dr}{\dr I}\w \frac{\dr}{\dr \al} + \frac{\dr}{\dr x_1}\w
\frac{\dr}{\dr \g}. \label{j54}
\eeq
Accordingly, Hamiltonian vector fields of integrals of motion
$F_i$ (\ref{zz102}) take a form
\be
&& \vt_1= \frac{\dr}{\dr \g}, \\
&& \vt_2= \frac1{4I^2}\left(-\frac1{2I}-x_1^2\right)^{-1/2}\sin\g
\frac{\dr}{\dr \al} - x_1
\left(-\frac1{2I}-x_1^2\right)^{-1/2}\sin\g
\frac{\dr}{\dr \g} - \\
&& \qquad \left(-\frac1{2I}-x_1^2\right)^{1/2}\cos\g
\frac{\dr}{\dr x_1}, \\
&& \vt_3= \frac1{4I^2}\left(-\frac1{2I}-x_1^2\right)^{-1/2}\cos\g
\frac{\dr}{\dr \al} - x_1
\left(-\frac1{2I}-x_1^2\right)^{-1/2}\cos\g
\frac{\dr}{\dr \g} + \\
&&\qquad \left(-\frac1{2I}-x_1^2\right)^{1/2}\sin\g \frac{\dr}{\dr x_1}.
\ee
A glance at these expressions shows that the vector fields $\vt_1$
and $\vt_2$ fail to be complete on $U_-$ (Remark \ref{zz90}).

One can say something more about the angle coordinate $\al$. The
vector field $\vt_I$ (\ref{zz121}) reads
\be
\frac{\dr}{\dr\al}= \op\sum_i\left(\frac{\dr \cH}{\dr
p_i}\frac{\dr}{\dr q_i}-\frac{\dr \cH}{\dr q_i}\frac{\dr}{\dr
p_i}\right).
\ee
This equality leads to relations
\be
\frac{\dr q_i}{\dr \al}=\frac{\dr \cH}{\dr p_i}, \qquad \frac{\dr
p_i}{\dr \al}=-\frac{\dr \cH}{\dr q_i},
\ee
which take a form of the Hamilton equation. Therefore, the
coordinate $\al$ is a cyclic time $\al=t\,\mathrm{mod}\,2\pi$
given by the well-known expression
\be
&&\al=\f-a^{3/2}e\sin(a^{-3/2}\f),\qquad
r=a(1-e\cos(a^{-3/2}\f)),\\
&& a=-\frac1{2I}, \qquad e=(1+2IM^2)^{1/2}.
\ee

Now let us turn to the Kepler CIS on $U_+$. It is a globally
integrable system with noncompact invariant submanifolds as
follows.

Let us put
\ben
&& S_1=-K_1, \qquad S_2=-K_2, \qquad S_3=-M_{12}, \label{zz102a}\\
&& \{S_1,S_2\}=-S_3, \qquad \{S_2,S_3\}=S_1, \qquad \{S_3,S_1\}=S_2.
\nonumber
\een
We have a fibred manifold
\beq
S: U_+\to N\subset\cG^*, \label{zz103a}
\eeq
which is the momentum mapping to a Lie coalgebra
$\cG^*=so(2,1)^*$, endowed with the coordinates $(x_i)$ such that
integrals of motion $S_i$ on $\cG^*$ read $S_i=x_i$. A base $N$ of
the fibred manifold (\ref{zz103a}) is an open submanifold of
$\cG^*$ given by a coordinate condition $x_3\neq 0$. It is a union
of two contractible components defined by conditions $x_3>0$ and
$x_3<0$. The coinduced Lie--Poisson structure on $N$ takes a form
\beq
w= x_2\dr^3\w\dr^1 - x_3\dr^1\w\dr^2 + x_1\dr^2\w\dr^3.
\label{j51a}
\eeq

The coadjoint action of $so(2,1)$ on $N$ reads
\be
\ve_1=-x_3\dr^2-x_2\dr^3, \qquad \ve_2=x_1\dr^3+x_3\dr^1, \qquad
\ve_3=x_2\dr^1-x_1\dr^2.
\ee
The orbits of this coadjoint action are given by an equation
\be
x_1^2 + x_2^2 - x_3^2=\mathrm{const}.
\ee
They are the level surfaces of the Casimir function
\be
C=x_1^2 + x_2^2 - x_3^2
\ee
and, consequently, the Casimir function
\beq
h=\frac12(x_1^2 + x_2^2 - x_3^2)^{-1}. \label{zz120a}
\eeq
A glance at the expression (\ref{zz101}) shows that the pull-back
$S^*h$ of this Casimir function (\ref{zz120a}) onto $U_+$ is the
Hamiltonian $\cH$ (\ref{zz41}) of the Kepler system on $U_+$.

As was mentioned above, the Hamiltonian vector field of $S^*h$ is
complete. Furthermore, it is known that invariant submanifolds of
the Kepler CIS on $U_+$ are diffeomorphic to $\mathbb R$.
Therefore, the fibred manifold $S$ (\ref{zz103a}) is a fibre
bundle. Moreover, this fibre bundle is trivial because $N$ is a
disjoint union of two contractible manifolds. Consequently, it
follows from Theorem \ref{cmp36} that the Kepler CIS on $U_+$ is
globally integrable, i.e., it admits global action-angle
coordinates as follows.

The Poisson manifold $N$ (\ref{zz103a}) can be endowed with the
coordinates
\be
(I,x_1,\la), \qquad I>0, \qquad \la\neq 0,
\ee
defined by the equalities
\be
&& I=\frac12(x_1^2 + x_2^2 - x_3^2)^{-1}, \\
&& x_2=\left(\frac1{2I}-x_1^2\right)^{1/2}\cosh\la, \qquad
x_3=\left(\frac1{2I}-x_1^2\right)^{1/2}\sinh\la.
\ee
These coordinates are Darboux coordinates of the Lie--Poisson
structure (\ref{j51a}) on $N$, namely,
\beq
w=\frac{\dr}{\dr \la}\w \frac{\dr}{\dr x_1}. \label{j53a}
\eeq

Let $\vt_I$ be the Hamiltonian vector field of the Casimir
function $I$ (\ref{j52}). Its flows are invariant submanifolds of
the Kepler CIS on $U_+$ (Remark \ref{nn240}). Let $\tau$ be a
parameter along the flows of this vector field, i.e.,
\beq
\vt_I= \frac{\dr}{\dr \tau}. \label{zz121a}
\eeq
Then $U_+$ (\ref{zz103a}) is provided with the action-angle
coordinates $(I,x_1,\la,\tau)$ such that the Poisson bivector
associated to the symplectic form $\Om_T$ on $U_+$ reads
\beq
W= \frac{\dr}{\dr I}\w \frac{\dr}{\dr \tau} + \frac{\dr}{\dr
\la}\w \frac{\dr}{\dr x_1}. \label{j54a}
\eeq
Accordingly, Hamiltonian vector fields of integrals of motion
$S_i$ (\ref{zz102a}) take a form
\be
&& \vt_1= -\frac{\dr}{\dr \la}, \\
&& \vt_2= \frac1{4I^2}\left(\frac1{2I}-x_1^2\right)^{-1/2}\cosh\la
\frac{\dr}{\dr \tau} + x_1
\left(\frac1{2I}-x_1^2\right)^{-1/2}\cosh\la
\frac{\dr}{\dr \la} + \\
&& \qquad \left(\frac1{2I}-x_1^2\right)^{1/2}\sinh\la
\frac{\dr}{\dr x_1}, \\
&& \vt_3= \frac1{4I^2}\left(\frac1{2I}-x_1^2\right)^{-1/2}\sinh\la
\frac{\dr}{\dr \tau} + x_1
\left(\frac1{2I}-x_1^2\right)^{-1/2}\sinh\la
\frac{\dr}{\dr \la} + \\
&&\qquad \left(\frac1{2I}-x_1^2\right)^{1/2}\cosh\la \frac{\dr}{\dr x_1}.
\ee

Similarly to the angle coordinate $\al$ (\ref{zz121}), the angle
coordinate $\tau$ (\ref{zz121a}) obeys the Hamilton equation
\be
\frac{\dr q_i}{\dr \tau}=\frac{\dr \cH}{\dr p_i}, \qquad \frac{\dr
p_i}{\dr \tau}=-\frac{\dr \cH}{\dr q_i}.
\ee
Therefore, it is the time $\tau=t$ given by the well-known
expression
\be
&& \tau=s-a^{3/2}e\sinh (a^{-3/2}s),\qquad r=a(e\cosh
(a^{-3/2}s)-1),\\
&& a=\frac1{2I}, \qquad e=(1+2IM^2)^{1/2}.
\ee

\end{document}